\documentclass[12pt]{article}
\usepackage{tikz}
\usetikzlibrary{shapes}

\usepackage[utf8]{inputenc}
\usepackage{amsmath,amsfonts,amssymb,amsthm,mathrsfs,graphicx,
	setspace,wrapfig,lscape,longtable,multirow,chngcntr,
	soul,bm,booktabs,changepage,float,enumitem,
	titlesec,quoting,xparse,pifont,apptools,makecell,array}
\usepackage[T1]{fontenc}
\usepackage[table, dvipsnames]{xcolor}
\usepackage[labelfont=bf]{caption,subcaption}
\usepackage[bottom]{footmisc}
\usepackage[page,title]{appendix}
\usepackage{natbib}
\usepackage[bookmarks=false]{hyperref}
\hypersetup{colorlinks=true,linkcolor=blue,urlcolor=blue,citecolor=blue}
\makeatletter
\AtBeginDocument{\typeout{get arXiv to do 4 passes: \csname theHpage\endcsname}}
\makeatother
\usepackage{xurl}
\usepackage{tikz-3dplot}\usetikzlibrary{decorations.pathreplacing,arrows.meta}
\usepackage[top=1.25in, bottom=1.25in, left=1.25in, right=1.25in]{geometry}
\usepgflibrary{arrows}
\usepackage{ragged2e}
\usepackage{alphalph}
\usepackage{placeins}
\usepackage{etoc}

\renewcommand{\today}{April 2026}

\AtAppendix{\counterwithin{theorem}{section}}
\newtheorem{theorem}{Theorem}

\AtAppendix{\counterwithin{corollary}{section}}
\newtheorem{corollary}{Corollary}
\AtAppendix{\counterwithin{definition}{section}}
\newtheorem{definition}{Definition}
\AtAppendix{\counterwithin{lemma}{section}}
\newtheorem{lemma}{Lemma}
\AtAppendix{\counterwithin{assumption}{section}}
\newtheorem{assumption}{Assumption}
\AtAppendix{\counterwithin{proposition}{section}}
\newtheorem{proposition}{Proposition}
\AtAppendix{\counterwithin{remark}{section}}
\newtheorem{remark}{Remark}

\newcommand{\cov}{\mathrm{cov}}

\DeclareMathOperator{\Var}{Var}
\DeclareMathOperator{\Cov}{Cov}
\newcommand{\Phinv}{\Phi^{-1}}
\newcommand{\E}{\mathbb{E}}

\newcolumntype{H}{>{\setbox0=\hbox\bgroup}c<{\egroup}@{}}
\newcolumntype{C}{>{$}c<{$}}
\newcolumntype{R}{>{$}r<{$}}
\newcolumntype{L}{>{$}l<{$}}

\title{
	{\textbf{The Corporate Bond Factor Replication Crisis}}\thanks{{\scriptsize 
	An earlier version of this paper circulated under the title ``Common pitfalls in the evaluation of corporate bond strategies.'' We gratefully acknowledge comments and suggestions from Patrick Augustin, Hendrik Bessembinder, Mike Chernov, Nicholas Chen, Mathieu Fournier, Alexey Ivashchenko, Soohun Kim, Jong-Myun Moon (Insight Investment), Yoshio Nozawa, Piotr Or{\l}owski, Yancheng Qiu, Lukas Schmid, Botao Wu, and seminar and conference participants at Aarhus University, Louvain Finance, UNSW Business School, USI Lugano, University of Melbourne, University of Sydney,  University of Technology Sydney, University of Toronto, Warwick Business School, and the 2023 ESSEC-CYU-Warwick Econometrics Workshop. 
    The companion website to this paper, \href{https://openbondassetpricing.com/}{\nolinkurl{openbondassetpricing.com}}, provides error-corrected daily and monthly TRACE data, bias-corrected factors, and replication code. The data are also available on \href{https://wrds-www.wharton.upenn.edu/pages/get-data/contributed-data-forms/corporate-bond-data-dickerson-daily/}{WRDS Contributed Data (daily)} and \href{https://wrds-www.wharton.upenn.edu/pages/get-data/contributed-data-forms/corporate-bond-data-dickerson-monthly/}{WRDS Contributed Data (monthly)}.
    The companion factor construction software, \href{https://pypi.org/project/PyBondLab/}{\texttt{PyBondLab}}, is available at \href{https://pypi.org/project/PyBondLab/}{\nolinkurl{pypi.org/project/PyBondLab/}}.
}
}  
}

\author{
{\normalsize Alexander Dickerson\thanks{{\scriptsize
School of Banking \& Finance, The University of New South Wales; \url{alexander.dickerson1@unsw.edu.au}}}}
\and
{\normalsize Cesare Robotti\thanks{{\scriptsize
Warwick Business School, The University of Warwick;
\url{cesare.robotti@wbs.ac.uk}; \url{giulio.rossetti.1@wbs.ac.uk}}}}
\and
{\normalsize Giulio Rossetti\footnotemark[\value{footnote}]}
}

\date{{\small   \today }}

\begin{document}
\maketitle

\vspace{-.75cm}

\begin{abstract}
\singlespacing
\noindent
{Corporate bond factor research faces a replication crisis. The crisis stems from two sources that inflate reported factor premia: transaction prices whose measurement error enters both sorting signals and return denominators, creating a correlated errors-in-variables bias, and asymmetric ex-post return filtering that embeds future information into factor construction. Applying our framework to a `factor zoo' of 108 signals across nine thematic clusters, we show that the majority of previously documented factors do not produce statistically significant bond CAPM alphas after correction. We provide an open source framework via \href{https://openbondassetpricing.com/}{Open Bond Asset Pricing}, including error-corrected TRACE data, bias-corrected factors, and software for reproducible research.
}
\end{abstract}
\bigskip
\singlespacing{
\noindent \emph{Keywords:} Corporate bond factors; Open source; Factor zoo; Replication crisis; Measurement error; Look-ahead bias; Non-standard errors.}
\medskip

\noindent \emph{JEL Classification Codes:} C12; C13; C58; G11; G12. 
\bigskip
\thispagestyle{empty}

\pagebreak
\onehalfspacing
\setstretch{1.4}
\setlength{\belowdisplayskip}{5pt} 
\setlength{\belowdisplayshortskip}{5pt}
\setlength{\abovedisplayskip}{5pt} 
\setlength{\abovedisplayshortskip}{5pt}


\clearpage

\begin{flushright}
{\footnotesize
\textit{``\ldots if you torture the data enough, nature will always confess''}\\
\hfill \citep{coase1982choose}
}
\end{flushright}

\section{Introduction}\label{sec: intro}

Corporate bond factor research faces a replication crisis. We construct a `factor zoo' of 108 corporate bond factors and show that, after correcting for measurement error and look-ahead bias, most are spanned by the bond market factor. A small subset of factors retains significant bond CAPM alphas, primarily those formed on credit-spread-based value signals. The apparent risk-adjusted performance of many published factors is an artifact of two biases whose magnitudes have not been systematically quantified. The first, Latent Implementation Bias (LIB), arises because the same noisy transaction price enters both sorting signals and return denominators, creating a correlated errors-in-variables (CEIV) problem, and because the signal price is a historical transaction, not an actionable quote. The second, Look-Ahead Bias (LAB), arises from asymmetric ex-post return filtering that embeds future information into factor construction. Both systematically inflate measured factor premia and alphas. A third problem compounds the first two because corporate bonds lack a standardized data source, and researchers make idiosyncratic filtering and portfolio-construction choices whose collective variation rivals sampling uncertainty.

Among factors formed on yield, credit spread, value, and reversal signals, only two retain statistically significant bond CAPM alphas after adjustment for measurement error. Both are credit-spread-based value and spread-change factors. Credit-spread-based value factors retain roughly half their premium after adjustment and remain statistically significant. Short-term reversal illustrates the severity of LIB, with the premium dropping from $-0.99$\% to $-0.09$\% per month after correction ($t$-statistic: $-4.46$ to $-0.51$), and bias comprising over 90\% of the documented effect. A second bias independently inflates reported premia. Many factors replicate only when researchers apply ex-post return filters, such as winsorizing or trimming at thresholds computed using 
full sample data, after the portfolio formation month. For the six-month momentum factor, the 0.30\% monthly premium is entirely attributable to asymmetric ex-post winsorization. Without winsorization, the premium is zero. For twelve-month momentum, the base premium is negative ($-0.13\%$) and ex-post filtering converts a losing factor into an apparent winner, with the bias concentrated during periods of financial distress. Idiosyncratic volatility factors and downside risk measures show similar patterns, with 57--78\% of the measured alpha attributable to asymmetric ex-post return filtering.

Across 432 factor-specification combinations (108 signals $\times$ 2 weighting schemes $\times$ 2 sorting methods), only 26 (6.0\%) bond CAPM alphas survive a Benjamini-Hochberg (BH) false discovery rate (FDR) correction, concentrated among credit-spread-based value factors. Results are consistent when FDR is applied to mean return $p$-values. Before correction, 119 of 432 specifications produce nominally significant premia ($t(\mu) > 1.96$), but only 22 survive a FDR correction.\footnote{We report the BH adjustment as an illustrative benchmark to show that the already limited number of statistically significant specifications declines further once one accounts for multiple testing. More conservative procedures that allow for general dependence across strategies would only strengthen this conclusion.} Our point is not that the bond CAPM is the correct model, but rather that, once LIB and LAB are taken into account, most factors no longer earn economically meaningful premia or deliver superior risk-adjusted performance relative to the bond market factor.

Beyond LIB and LAB, a third source of variation affects measured premia. The same market characteristics that generate noisy prices and extreme returns (infrequent trading, wide bid-ask spreads, and no consolidated data source) also fragment the data infrastructure, since no two research teams process Trade Reporting and Compliance Engine (TRACE) data identically, and different choices, not all equally defensible, generate large dispersion in estimated premia. We distinguish between data uncertainty, which arises from how the admissible investment universe is defined, and methodological uncertainty, which arises from how a given signal is constructed into a factor. Across 648 data-filtering configurations (69,984 factor paths across all 108 signals), the interquartile range of estimated premia (the non-standard error, NSE) averages 0.35\% per month, exceeding the average premium of 0.33\% per month, with an NSE/SE ratio of 1.15. Holding the data fixed and varying only portfolio construction across 168 economically distinct specifications per signal (18,128 factor paths) produces comparable magnitudes, with the NSE/SE ratio rising to 1.45 and exceeding unity for all nine factor clusters. Researcher degrees of freedom rival sampling uncertainty as a source of variation in measured factor performance, yet fewer than 6\% of specifications improve bond market risk-adjusted performance.
Fig.~\ref{fig:research_framework} maps these three problems to their causes, the biases they produce, and our proposed solutions. The first bias, LIB, arises from two sources: the same noisy transaction price enters both the sorting signal and the return denominator, creating a CEIV problem \citep*{blume1983biases,stambaugh1988information,duarte2024very}, and the observed transaction price is not executable in over-the-counter (OTC) markets. Both are severe in corporate bonds, where bid-ask spreads are an order of magnitude larger than in equities and trading is sporadic.\footnote{\citet*{dickerson2024factor} documents that in a representative year, 70\% of bonds trade on 10 business days or fewer, and less than 0.50\% of bonds trade every day.} The second bias, LAB, arises from ex-post return filtering, where winsorization thresholds computed from the full sample embed future information into factor construction, mechanically inflating reported premia. Section~\ref{sec:mmn} develops the formal framework for LIB, and Section~\ref{sec:lab} for LAB.
We propose a protocol for credible corporate bond factor research. To address LIB, we develop gap procedures that ensure the signal and the return denominator use different prices, and that measure returns from the earliest feasible execution price. Breaking this link removes the CEIV component by construction while preserving genuine premia, so that the difference between standard and adjusted approaches identifies the bias without requiring observation of the true price. To address LAB, we implement ex-ante filtering, where winsorization thresholds use only historical information available at portfolio formation. The third element quantifies data and methodological uncertainty, documenting how standard filtering and portfolio-construction choices affect magnitude and inference even after bias corrections.
All data and code are open source. The full replication code is available on \href{https://github.com/Alexander-M-Dickerson/trace-data-pipeline}{GitHub}. The companion software \href{https://pypi.org/project/PyBondLab/}{\texttt{PyBondLab}} enables reproducible factor construction from any signal. The `factor zoo', with bond-month data for 108 signals and pre-formed factors, is publicly available at \href{https://openbondassetpricing.com/}{Open Bond Asset Pricing} and on \href{https://wrds-www.wharton.upenn.edu/pages/get-data/contributed-data-forms/corporate-bond-data-dickerson-daily/}{WRDS Contributed Data (daily)} and \href{https://wrds-www.wharton.upenn.edu/pages/get-data/contributed-data-forms/corporate-bond-data-dickerson-monthly/}{WRDS Contributed Data (monthly)}.
Our work relates to several strands of literature. \citet*{harvey2016and} and \citet*{hou2020replicating} document concerns about the credibility of equity factor research, though \citet*{chen2021open} and \citet*{jensen2023there} find high reproduction rates when methodologies are applied consistently. In equity options, \citet*{duarte2023too} document replication failures from ex-post sample filters that generate infeasible factors, inflating reported Sharpe ratios by an order of magnitude (over 1000\%).
We build on earlier work documenting pricing challenges in corporate bonds. \citet*{dickerson2023priced} show that most proposed risk factors, with liquidity as a marginal exception, have no incremental pricing power beyond the bond market factor. \citet*{feldhutter2023} attribute replication failures to data errors in TRACE. We find that measurement error and look-ahead bias are the primary drivers. \citet*{ghaderi2024pricing} show that extending the cross-section to nearly a century reveals pricing power for several factors, emphasizing the value of longer samples. Our analysis differs from prior work in two respects. First, we provide a unifying framework for the corporate bond replication crisis, formalizing and quantifying LIB, LAB, and non-standard errors across 108 corporate bond factors. Second, we offer practical solutions through error-corrected open source data and software that enable researchers to construct bias-free factors and reproducible research going forward.

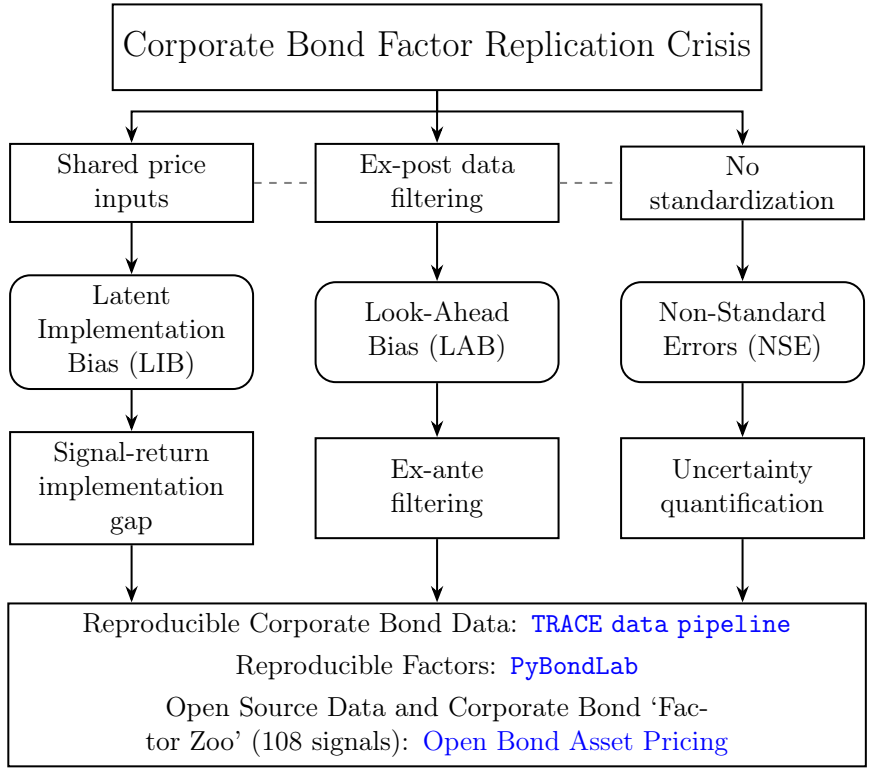
\begin{figure}[tbh!]
\begin{center}
\vspace{.25cm}
\scalebox{0.95}{%
\begin{tikzpicture}[
    >=Stealth,
    scale=0.95,
    titlebox/.style={draw, thick, minimum width=8cm, minimum height=1.2cm,
                     align=center, font=\large},
    causebox/.style={draw, thick, minimum width=3.4cm, minimum height=1.0cm,
                     text width=3.0cm, align=center, font=\small},
    biasbox/.style={draw, thick, rounded corners=8pt, minimum width=3.4cm, minimum height=1.4cm,
                     text width=3.0cm, align=center, font=\small},
    solutionbox/.style={draw, thick, minimum width=3.4cm, minimum height=1.4cm,
                        text width=3.0cm, align=center, font=\small},
    outputbox/.style={draw, thick, minimum width=12cm, minimum height=2.0cm,
                      text width=11.5cm, align=center, font=\small},
    arr/.style={->, thick}
]

\node[titlebox] (title) at (0, 8) {Corporate Bond Factor Replication Crisis};

\node[causebox] (cause_lib) at (-4.5, 6) {Shared price\\inputs};
\node[causebox] (cause_lab) at (0, 6) {Ex-post data\\filtering};
\node[causebox] (cause_nse) at (4.5, 6) {No\\standardization};

\draw[dashed, thick, gray] (cause_lib.east) -- (cause_lab.west);
\draw[dashed, thick, gray] (cause_lab.east) -- (cause_nse.west);

\draw[arr] (title.south) -- ++(0, -0.3) -| (cause_lib.north);
\draw[arr] (title.south) -- ++(0, -0.3) -| (cause_lab.north);
\draw[arr] (title.south) -- ++(0, -0.3) -| (cause_nse.north);

\node[biasbox] (lib) at (-4.5, 3.8) {Latent\\Implementation\\Bias (LIB)};
\node[biasbox] (lab) at (0, 3.8) {Look-Ahead\\Bias (LAB)};
\node[biasbox] (nse) at (4.5, 3.8) {Non-Standard\\Errors (NSE)};

\draw[arr] (cause_lib.south) -- (lib.north);
\draw[arr] (cause_lab.south) -- (lab.north);
\draw[arr] (cause_nse.south) -- (nse.north);

\node[solutionbox] (sol_lib) at (-4.5, 1.5) {Signal-return\\implementation gap};
\node[solutionbox] (sol_lab) at (0, 1.5) {Ex-ante\\filtering};
\node[solutionbox] (sol_nse) at (4.5, 1.5) {Uncertainty\\quantification};

\draw[arr] (lib.south) -- (sol_lib.north);
\draw[arr] (lab.south) -- (sol_lab.north);
\draw[arr] (nse.south) -- (sol_nse.north);

\node[outputbox] (output) at (0, -1.4) {Reproducible Corporate Bond Data: \href{https://github.com/Alexander-M-Dickerson/trace-data-pipeline}{\texttt{TRACE data pipeline}}\\[3pt]Reproducible Factors: \href{https://pypi.org/project/PyBondLab/}{\texttt{PyBondLab}}\\[3pt]Open Source Data and Corporate Bond `Factor Zoo' (108 signals): \href{https://openbondassetpricing.com/}{Open Bond Asset Pricing}};

\draw[arr] (sol_lib.south) -- (sol_lib.south |- output.north);
\draw[arr] (sol_lab.south) -- (sol_lab.south |- output.north);
\draw[arr] (sol_nse.south) -- (sol_nse.south |- output.north);

\end{tikzpicture}
} 
\end{center}
\caption{Framework for credible corporate bond factor research.}
\begin{justify}
\begin{spacing}{1}
\footnotesize{The figure maps the three sources of the corporate bond replication crisis to their causes, biases, and solutions. Shared price inputs and non-executable transaction prices give rise to Latent Implementation Bias (LIB). Ex-post data filtering introduces Look-Ahead Bias (LAB). The absence of standardized data and methods generates Non-Standard Errors (NSE) through researcher degrees of freedom. The dashed line indicates that infrequent trading and wide bid-ask spreads amplify all three problems. The solutions are a signal-return implementation gap for LIB, ex-ante filtering for LAB, and uncertainty quantification for NSE. The resulting open source protocol (data, software, and a corporate bond `factor zoo' of 108 signals) is available at \href{https://openbondassetpricing.com/}{Open Bond Asset Pricing}.
}
\end{spacing}
\end{justify}
\label{fig:research_framework}
\end{figure}

\section{Open source corporate bond asset pricing data}\label{sec:osbap}

We provide an open source end-to-end pipeline that transforms raw TRACE intraday transactions into a monthly corporate bond asset pricing data set. The pipeline first cleans transactions, then computes daily prices, yields, and credit spreads, and finally aggregates to monthly frequency. It differs from existing approaches in four ways. First, we include Rule~144A corporate bonds, which are excluded from most existing studies. Second, we track bonds through and after default rather than dropping them at the event boundary, avoiding censoring of the return distribution \citep*{baumann2025life,baumann2025defaulted}. Third, every filter decision is documented through companion data reports that are generated automatically when the pipeline is executed, plotting the complete price time series for each affected CUSIP so that researchers can inspect every correction and exclusion.\footnote{Detailed daily data error reports for TRACE enhanced are available on \href{https://openbondassetpricing.com/wp-content/uploads/2026/03/enhanced_data_report_osbap_0k_volume_2025.pdf}{Open Bond Asset Pricing}.} Fourth, all filter thresholds are configurable, and researchers can modify any parameter and re-run the pipeline from raw transactions up to the monthly data output. The sample covers August 2002 to December 2024.

\subsection{From raw transactions to daily prices}\label{sec:data-daily}

The pipeline begins with raw TRACE enhanced and 144A transaction records, merged with Mergent Fixed Income Securities Database (FISD) for bond characteristics. We retain USD-denominated, fixed-rate, non-convertible, non-asset-backed bonds with \$1,000 par value and original maturity of at least one year. Standard filters from \citet*{DickNielsen2014HowTC} remove cancellations, corrections, reversals, and agency-side duplicates.
Beyond these standard filters, we develop two transaction-level corrections for recording errors that survive the Dick-Nielsen procedures. The \emph{decimal shift corrector} identifies prices recorded with incorrect decimal placement (for example, 10.5 entered as 105.0) and applies a multiplicative correction when a set of acceptance conditions is satisfied. The \emph{bounce-back filter} detects transient price spikes that deviate from a backward-looking anchor and revert within a short window. Both filters are designed to correct errors where possible and flag anomalies otherwise, minimizing data loss. Appendix~\ref{app:filters} provides complete filter specifications and parameter values.

After filtering, transactions are aggregated to daily frequency using volume-weighted average prices. We then use \href{https://quantlib-python-docs.readthedocs.io/en/latest/instruments/bonds.html}{\texttt{QuantLib}} to compute duration, convexity, yield to maturity, and credit spreads, and merge credit ratings, equity identifiers, and Fama-French industry classifications. A final set of four daily-level filters targets anomalous prices in distressed bonds, addressing isolated ultra-low prices, transient spikes, stale plateaus, and intraday inconsistencies. Appendix~\ref{app:filters} documents each filter. Detailed data reports with descriptive statistics, industry breakdowns, and additional information are available for \href{https://openbondassetpricing.com/wp-content/uploads/2026/03/stage1_data_report_osbap_0k_volume_2025.pdf}{download}. The Internet Appendix reports data availability and descriptive statistics for the daily data.
The code automatically generates companion data reports that plot the price time series of every bond CUSIP affected by a filter, providing a transparent record of each filtering decision.

\subsection{Monthly returns and default handling}\label{sec:data-monthly}

For each bond-month, we compute month-end returns (for comparability with prior literature), month-begin returns (computed from the first (last) available price within the first five (last five) business days of month $t+1$, measuring implementable performance after signal observation),\footnote{The median gap between the month-end price and the first available price the following month is one business day.} and duration-adjusted excess returns that remove interest rate exposure. The monthly price is the last available daily price within the last five business days of the month (New York Stock Exchange (NYSE) calendar). A valid return requires such a price in both months $t$ and $t+1$ (bonds without a trade in either window are excluded for that month). Excess returns subtract the Fama-French one-month T-bill rate. The bond market factor (\texttt{MKTB}) is the value-weighted excess return on all corporate bonds in the sample. We use a single-factor bond CAPM (CAPMB) comprising \texttt{MKTB} to compute alphas throughout the paper.\footnote{In unreported results, we use multi-factor models that include default, credit, liquidity, and equity factors and observe lower alphas on average.} Appendix~\ref{app:returns} provides the formal definitions, and Section~\ref{sec:mmn} discusses the distinction between month-end and month-begin measurement.

Rule~144A bonds enter the sample from June 2014 onward and are processed identically to TRACE enhanced. These securities, issued under SEC Rule~144A to qualified institutional buyers, account for over 20\% of new corporate bond issuance as of 2025. Excluding them omits a large and growing segment of the investable universe.\footnote{Rule~144A transaction-level data became available on Wharton Research Data Services (WRDS) from June 2014 onward. The June 2014 entry point is therefore a data constraint, not a sample design choice. In unreported robustness checks, excluding 144A bonds from the full sample does not materially change our main results.}

Dropping defaulted bonds from the sample censors extreme outcomes and can bias measured return distributions. We track bonds through and after default. In the default month, we compute the standard return, a default-event return excluding the final coupon, and a trading-in-default return on a flat basis without accrued interest. Appendix~\ref{app:defaults} gives the complete return formulas for each case. The Internet Appendix reports monthly data coverage and descriptive statistics, and documents the prevalence and time concentration of extreme returns.

For price-based signals, we use two complementary gap procedures to break the mechanical link between signal measurement error and the return denominator. The first gap procedure observes the sorting variable at least one business day before the month-end price used for return computation, while the second retains the standard month-end signal but computes returns from prices observed in the first (last) five business days of the following month. Section~\ref{sec:mmn} provides the formal treatment.

\subsection{The corporate bond `factor zoo'}\label{sec:data-zoo}

We construct 108 signals spanning nine clusters (Spreads, Yields, Size; Value; Momentum \& Reversal; Illiquidity; Volatility \& Risk; Market Risk; Credit \& Default Betas; Volatility \& Liquidity Betas; and Macro \& Other Betas). The Internet Appendix provides definitions and original citations and documents construction details.

The sample spans August 2002 to December 2024 (269 months), with 52,656 unique bonds and an average of 6,790 bond--month observations per month. Variable names are consistent across the database and code, and match the Internet Appendix. For example, \texttt{cs} denotes credit spread, \texttt{str} denotes short-term reversal, and \texttt{mom6\_1} denotes six-month momentum. To our knowledge, this is the first open source corporate bond `factor zoo' containing the majority of signals proposed in the literature.

The data pipeline addresses one dimension of the replication crisis, the absence of a standardized, transparent data source for corporate bond research. The remaining sections investigate three sources of bias that affect factor premia independently of the data source. Section~\ref{sec:mmn} examines measurement error in price-based signals. Section~\ref{sec:lab} addresses look-ahead bias from ex-post return filtering. Section~\ref{sec:uncertainty} quantifies the sensitivity of estimated premia to methodological variation.

\section{Measurement error and latent implementation bias}\label{sec:mmn}

The month-end bid--ask averaged TRACE price, which underlies most corporate bond factor research, is a synthetic construct. It is neither a precise measure of value nor an executable quote.\footnote{Quote-based data from sources such as Bank of America Merrill Lynch or Bloomberg Barclays are not necessarily a panacea: dealer quotes can be stale. However, existing evidence suggests that these quotes are broadly reliable; see \citet*{choi2013drives} and \citet*{andreani2023computing}.}

Let $P_{i,t}$ and $s_{i,t}$ denote the true price and true sorting signal, and let $\hat{P}_{i,t}$ and $\hat{s}_{i,t}$ denote their observed counterparts. Following \citet*{blume1983biases}, observed prices satisfy $\hat{P}_{i,t} = (1+\delta_{i,t})P_{i,t}$, where $\delta_{i,t}$ is mean-zero price noise, and observed signals satisfy $\hat{s}_{i,t} = s_{i,t} + \eta_{i,t}$, where $\eta_{i,t}$ denotes signal measurement error. In TRACE, the month-end price is proxied by the volume-weighted average of transactions on the last trading day within the final five business days of the month, which averages across customer buy and sell transactions.\footnote{The measurement error $\delta$ is a reduced-form representation encompassing bid-ask bounce, dealer inventory effects, informed trading, and stale pricing. Our framework does not require distinguishing among these sources; any discrepancy that enters both the sorting signal and the return denominator generates CEIV bias.} When the same price disturbance $\delta_{i,t}$ enters both $\eta_{i,t}$ and the return measurement error $\epsilon_{i,t} \approx \delta_{i,t+1} - \delta_{i,t}$ (Definition~\ref{def:return_decomp} in Appendix~\ref{app:sorting-bias-formal}), the resulting correlation between signal and return measurement errors does not cancel in long-short factors. The component $\delta_{i,t+1}$ is realized after portfolio formation and, under serial independence, has zero conditional expectation given portfolio assignment. Thus, the bias-relevant component of the return measurement error is approximately $-\delta_{i,t}$: the price error at time $t$ mechanically distorts the return denominator, understating the return when $\delta_{i,t}$ is positive and overstating it when negative. This error is correlated with the signal error because both load on the same price disturbance.

For all price-based signals in our `factor zoo', higher price maps to a lower sorting signal (Proposition~\ref{prop:sign} in Appendix~\ref{app:sorting-bias-formal}): a bond with $\delta_{i,t} > 0$ is sorted toward the short leg and simultaneously has its return understated; a bond with $\delta_{i,t} < 0$ is sorted toward the long leg with its return overstated.\footnote{Consider two bonds with true price \$100, true return 0\%, and noise $\delta_A = +0.50\%$, $\delta_B = -0.50\%$. Bond~A is sorted toward the short leg and earns a measured return of $100/100.50 - 1 = -0.50\%$. Bond~B is sorted toward the long leg and earns $100/99.50 - 1 = +0.50\%$. The measured long-short return is $1.00\%$; the true long-short return is zero. Neither misranking nor return distortion alone produces this bias: with correct returns, both legs earn 0\% (misranking attenuates); without sorting, the return errors cancel across bonds (by $\E[\delta_{i,t}] = 0$). The bias arises because the same $\delta$ that determines portfolio assignment also determines the return error.} The long leg holds bonds with systematically positive return errors, and the short leg holds bonds with systematically negative return errors, spuriously widening the measured premium. Misranking alone does not create this bias: if the signal were noisy but returns measured without error, sorting on the noisy signal would attenuate the true premium toward zero (Appendix~\ref{app:sorting-bias-formal}, Remark~\ref{rem:caveat}). The bias arises because the same $\delta_{i,t}$ that misranks bonds also contaminates their measured returns; misranking and return distortion are mechanically correlated, and this correlated error structure creates a spurious additive premium.

This is the CEIV mechanism, named by \citet*{duarte2024very}, building on the measurement-error literature of \citet*{blume1983biases}, \citet*{fama1984information}, and \citet*{stambaugh1988information}. The CEIV mechanism builds on \citet*{blume1983biases}, who showed that bid-ask bounce biases equally-weighted portfolio returns, and \citet*{stambaugh1988information}, who addressed an analogous correlated measurement error in Treasury bill forward premium regressions by using prices from different maturities. Our gap procedure applies the same logic using temporal rather than cross-maturity independence. \citet*{duarte2024very} applied the mechanism to option pricing and showed that their adjustment reverses the sign of the estimated volatility risk premium in individual stock options from positive to negative, the theoretically correct sign. In equities, \citet*{jegadeesh1990evidence} and \citet*{conrad1997profitability} recognized the same contamination in short-term reversal profits and addressed it by excluding the shared boundary price. In corporate bonds, \citet*{lair2024valuations} show that month-end reversal profits vanish when implemented at month-begin, consistent with an implementation shortfall, though pricing distortions relative to independent index valuations retain predictive power even after controlling for bid-ask bounce.

CEIV bias is of first-order importance in the price noise, unlike the classic \citet*{blume1983biases} bias ($\sigma_\delta^2$), which is of second-order importance and negligible in long-short factors. It also differs from standard errors-in-variables (EIV) attenuation \citep*[e.g.,][]{FamaMacBeth_1973}, which biases regression coefficients toward zero. Appendix~\ref{app:sorting-bias-formal} derives the closed-form expression under distributional assumptions.

There is also an \emph{implementation problem}. The month-end TRACE price records past transactions, not standing offers. A transaction price observed in a backtest is not necessarily available to trade at, because corporate bonds trade OTC and trading is not continuous. Standard backtests that use the signal price as the return denominator measure returns that no investor could actually capture. This gap between the observed signal and the earliest feasible trade creates what we call LIB.\footnote{We use the term ``latent'' because the bias is invisible to researchers who follow the standard approach: it does not appear as an explicit adjustment or error term but is embedded in the correlation structure between signal noise and return noise. The bias becomes apparent only when the signal-return price overlap is broken.} LIB encompasses both the CEIV bias from measurement error and the non-executability of observed prices. Both widen the gap between reported and achievable returns.

\paragraph{Breaking the link.} 
Both problems share a common source: using the same month-end price for signal computation and return measurement.\footnote{\citet*{chen2024reaching} impose a one-month gap between signal observation and return computation, and \citet*{Bartram-Grinblatt-Nozawa-2021} vary the return denominator price to compute within-month returns; both recognize the nonsynchronous trading problem but neither provides a formal CEIV framework or quantifies the bias systematically across signals.} The measurement problem arises because the same noise $\delta_{i,t}$ enters both; the implementation problem arises because the signal price is not executable. A time gap between signal and return can partially address both: the price used to compute the sorting signal must come from a different date than the price in the return denominator. \citet*{duarte2024very} validate a conceptually analogous lag adjustment via simulation in their option-pricing context, confirming that the lag approach reliably removes the CEIV component when measurement errors are serially uncorrelated.\footnote{This approach assumes measurement errors are serially uncorrelated, i.e., $\cov(\delta_{i,t-\Delta}, \delta_{i,t}) = 0$ for $\Delta > 0$. If errors exhibit positive autocorrelation, the gap reduces but does not eliminate the mechanical correlation between portfolio weights $\omega_{i,t}$ (formed from price-based signals) and measured returns $r_{i,t+1}$. For illiquid bonds with stale or persistent pricing, some residual correlation may remain.}
A \emph{signal gap} computes the signal from an earlier price $\hat{P}_{i,t-\Delta}$, which contains noise $\delta_{i,t-\Delta}$. The return denominator still uses $\hat{P}_{i,t}$, containing $\delta_{i,t}$. Under serial independence (Assumption~\ref{assum:iid} in Appendix~\ref{app:sorting-bias-formal}), $\delta_{i,t-\Delta}$ and $\delta_{i,t}$ are independent, and the CEIV-induced mechanical correlation is eliminated. A \emph{return gap} measures the return from month-begin, where the investor observes the signal at month-end $t$ and buys into the position at the first available price in month $t{+}1$. The return denominator is now $\hat{P}_{i,t+1}^{\text{bgn}}$, which contains $\delta_{i,t+1}^{\text{bgn}}$, independent of the $\delta_{i,t}$ that drove sorting under Assumption~\ref{assum:iid}.
\paragraph{Quantifying the bias.} 
To proxy for the magnitude of LIB, we compute two types of monthly returns. The standard \emph{month-end return} measures performance from the end of month $t$ to the end of month $t+1$ (see Panel~A of Fig.~\ref{fig:return_timeline}):
\begin{equation}
r_{i,t+1}^{\text{End}} = \frac{P_{i,t+1}^{\text{end}} + AI_{i,t+1}^{\text{end}} + C_{i,t+1}}{P_{i,t}^{\text{end}} + AI_{i,t}^{\text{end}}} - 1,
\label{eq:ret_end}
\end{equation}
where $P^{\text{end}}$ is the clean price observed within the last 5 business days of the month, $AI$ is accrued interest, and $C$ is any coupon payment. The \emph{month-begin return} measures performance from the beginning to the end of month $t+1$ (see Panel~C of Fig.~\ref{fig:return_timeline}):
\begin{equation}
r_{i,t+1}^{\text{Bgn}} = \frac{P_{i,t+1}^{\text{end}} + AI_{i,t+1}^{\text{end}} + C_{i,t+1}}{P_{i,t+1}^{\text{bgn}} + AI_{i,t+1}^{\text{bgn}}} - 1,
\label{eq:ret_bgn}
\end{equation}
where $P^{\text{bgn}}$ is observed within the first 5 business days of the month. The month-begin return captures what a trader could actually earn after observing a signal at month-end: the earliest feasible execution occurs at month-begin, not at the signal observation price.
The month-end return approximately decomposes into the implementable return plus the latent implementation bias:
\begin{equation}
r^{\text{End}}_{i,t+1} \approx \underbrace{\frac{P_{i,t+1}^{\text{bgn}}}{P_{i,t}^{\text{end}}} - 1}_{\text{LIB}_{i,t+1}} + r^{\text{Bgn}}_{i,t+1}.
\label{eq:lib}
\end{equation}
The LIB term measures the price change between signal observation at month-end $t$ and the earliest feasible trade at month-begin $t{+}1$. Because this price change occurs before any investor could act on the signal, the LIB component of the month-end return is not achievable in practice. It captures both the statistical bias from shared price inputs and the economic wedge between what a backtest reports and what a trader could earn. \citet*{lair2024valuations} propose a similar decomposition for the short-term reversal factor, which they term the implementation shortfall. Panel~D of Fig.~\ref{fig:return_timeline} illustrates this decomposition.

We compare factor performance under three approaches. \emph{Approach~1 (Unadjusted)} uses the month-end price $P_{i,t}^{\text{end}}$ for both signal computation and return measurement, which is the standard practice in the literature.  Portfolio weights $\omega_{i,t}$ are computed from signals observed at $P_{i,t}^{\text{end}}$, and returns are the month-end returns $r_{i,t+1}^{\text{End}}$. \emph{Approach~2 (Adjusted Signal)} breaks the correlation by computing signals from prices observed at least 1 business day before month-end (up to 10 business days).\footnote{The median signal gap in the data is 1 business day. Capping the maximum gap at a lower number makes an immaterial difference to the results.} The signal noise now depends on $\delta_{i,t-\Delta}$, which is independent of $\delta_{i,t}$ in the return denominator. Portfolio weights $\omega_{i,t}^{a}$ are computed from these gapped signals; returns remain month-end returns. Panel~B of Fig.~\ref{fig:return_timeline} illustrates the signal gap.
\emph{Approach~3 (Adjusted Return)} breaks the correlation by measuring the return from month-begin: weights $\omega_{i,t}$ are computed from month-end signals, but returns are $r_{i,t+1}^{\text{Bgn}}$, which uses $P_{i,t+1}^{\text{bgn}}$ in the denominator. The return noise now depends on $\delta_{i,t+1}^{\text{bgn}}$, independent of the $\delta_{i,t}$ that drove sorting. This approach also measures the implementable return, that is, the return a trader could actually capture after observing the signal.

Approaches 2 and 3 both break the shared-price link between signal and return, whereas Approach~1 retains it. The difference between the standard and adjusted approaches identifies the bias empirically, without requiring observation of the true price. We quantify the bias as
\begin{align}
\text{Bias}_{1-2} &= (\omega_t - \omega_t^{a}) \times r_{t+1}^{\text{End}}, \label{eq:bias_12} \\[4pt]
\text{Bias}_{1-3} &= \omega_t \times (r_{t+1}^{\text{End}} - r_{t+1}^{\text{Bgn}}). \label{eq:bias_13}
\end{align}
Bias$_{1-2}$ isolates the effect of using standard versus gapped signals while holding returns constant. Bias$_{1-3}$ isolates the effect of using infeasible versus implementable returns while holding weights constant. Positive bias indicates that the standard approach overstates factor performance.

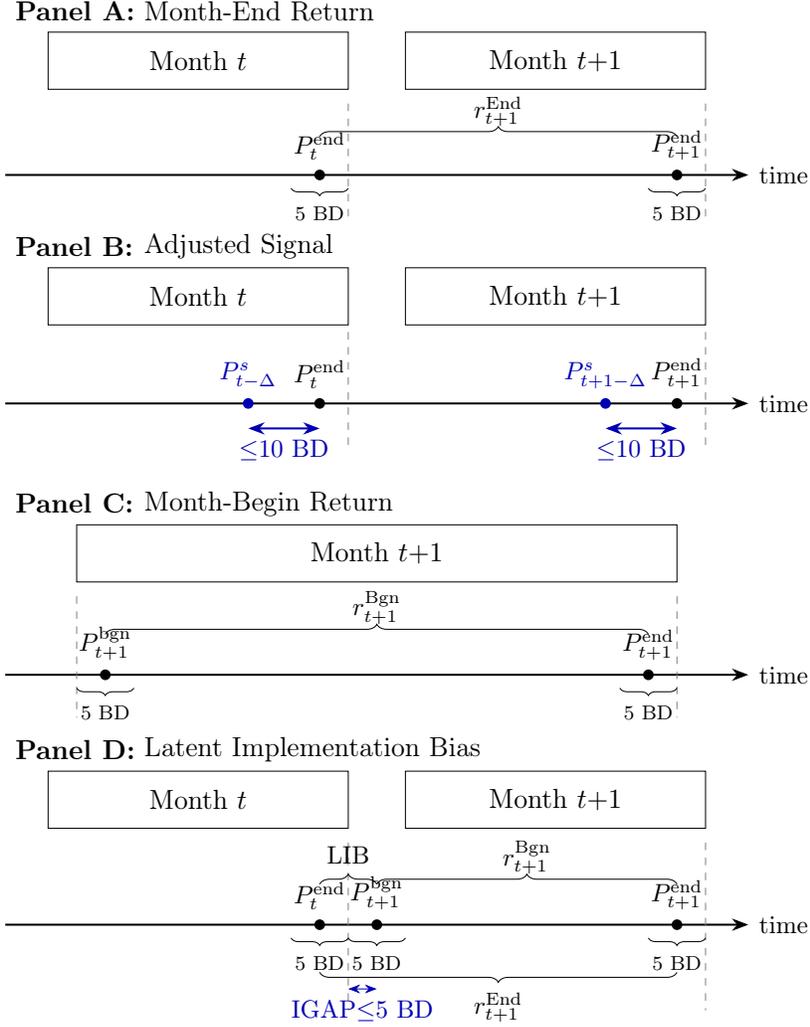
\begin{figure}[tbh!]
\begin{center}
\vspace{.25cm}
\scalebox{0.95}{%
\begin{tikzpicture}[
    >=Stealth,
    month/.style={draw, minimum width=4.2cm, minimum height=0.8cm, font=\small},
    timepoint/.style={circle, fill=black, inner sep=1.5pt},
    signalpoint/.style={circle, fill=blue!70!black, inner sep=1.5pt},
    brace/.style={decorate, decoration={brace, amplitude=4pt, raise=1pt}},
    bracebelow/.style={decorate, decoration={brace, amplitude=4pt, raise=1pt, mirror}},
    smallbrace/.style={decorate, decoration={brace, amplitude=3pt, raise=1pt, mirror}},
    lbl/.style={font=\footnotesize},
    panel/.style={font=\bfseries\small},
    paneldesc/.style={font=\small}
]

\node[panel, anchor=west] at (-5.2, 4.7) {Panel A:};
\node[paneldesc, anchor=west] at (-3.4, 4.7) {Month-End Return};
\node[month] (m1a) at (-2.5, 4) {Month $t$};
\node[month] (m2a) at (2.5, 4) {Month $t{+}1$};
\draw[thick, ->] (-5.2, 2.4) -- (5.2, 2.4) node[right, font=\footnotesize] {time};
\draw[gray, dashed] (-0.4, 3.4) -- (-0.4, 1.8);
\draw[gray, dashed] (4.6, 3.4) -- (4.6, 1.8);
\node[timepoint] (p1a) at (-0.8, 2.4) {};
\node[timepoint] (p2a) at (4.2, 2.4) {};
\node[lbl, above=2pt] at (p1a) {$P_{t}^{\text{end}}$};
\node[lbl, above=2pt] at (p2a) {$P_{t+1}^{\text{end}}$};
\draw[smallbrace] (-1.2, 2.25) -- node[below=4pt, font=\scriptsize] {5 BD} (-0.4, 2.25);
\draw[smallbrace] (3.8, 2.25) -- node[below=4pt, font=\scriptsize] {5 BD} (4.6, 2.25);
\draw[brace] ($(p1a.north)+(0,0.42)$) -- node[above=1pt, lbl] {$r_{t+1}^{\text{End}}$} ($(p2a.north)+(0,0.42)$);

\node[panel, anchor=west] at (-5.2, 1.4) {Panel B:};
\node[paneldesc, anchor=west] at (-3.4, 1.4) {Adjusted Signal};
\node[month] (m1b) at (-2.5, 0.7) {Month $t$};
\node[month] (m2b) at (2.5, 0.7) {Month $t{+}1$};
\draw[thick, ->] (-5.2, -0.8) -- (5.2, -0.8) node[right, font=\footnotesize] {time};
\draw[gray, dashed] (-0.4, 0.2) -- (-0.4, -1.4);
\draw[gray, dashed] (4.6, 0.2) -- (4.6, -1.4);
\node[timepoint] (p1b) at (-0.8, -0.8) {};
\node[timepoint] (p2b) at (4.2, -0.8) {};
\node[signalpoint] (s1b) at (-1.8, -0.8) {};
\node[signalpoint] (s2b) at (3.2, -0.8) {};
\node[lbl, above=2pt] at (p1b) {$P_{t}^{\text{end}}$};
\node[lbl, above=2pt] at (p2b) {$P_{t+1}^{\text{end}}$};
\node[lbl, above=2pt, blue!70!black] at (s1b) {$P^{s}_{t-\Delta}$};
\node[lbl, above=2pt, blue!70!black] at (s2b) {$P^{s}_{t+1-\Delta}$};
\draw[<->, thick, blue!70!black] (-1.8, -1.15) -- (-0.8, -1.15);
\node[lbl, blue!70!black] at (-1.3, -1.45) {$\leq$10 BD};
\draw[<->, thick, blue!70!black] (3.2, -1.15) -- (4.2, -1.15);
\node[lbl, blue!70!black] at (3.7, -1.45) {$\leq$10 BD};

\node[panel, anchor=west] at (-5.2, -2.2) {Panel C:};
\node[paneldesc, anchor=west] at (-3.4, -2.2) {Month-Begin Return};
\node[month, minimum width=8.4cm] (m1c) at (0, -2.9) {Month $t{+}1$};
\draw[thick, ->] (-5.2, -4.6) -- (5.2, -4.6) node[right, font=\footnotesize] {time};
\draw[gray, dashed] (-4.2, -3.5) -- (-4.2, -5.2);
\draw[gray, dashed] (4.2, -3.5) -- (4.2, -5.2);
\node[timepoint] (p1c) at (-3.8, -4.6) {};
\node[timepoint] (p2c) at (3.8, -4.6) {};
\node[lbl, above=2pt] at (p1c) {$P_{t+1}^{\text{bgn}}$};
\node[lbl, above=2pt] at (p2c) {$P_{t+1}^{\text{end}}$};
\draw[smallbrace] (-4.2, -4.75) -- node[below=4pt, font=\scriptsize] {5 BD} (-3.4, -4.75);
\draw[smallbrace] (3.4, -4.75) -- node[below=4pt, font=\scriptsize] {5 BD} (4.2, -4.75);
\draw[brace] ($(p1c.north)+(0,0.45)$) -- node[above=1pt, lbl] {$r_{t+1}^{\text{Bgn}}$} ($(p2c.north)+(0,0.45)$);

\node[panel, anchor=west] at (-5.2, -5.65) {Panel D:};
\node[paneldesc, anchor=west] at (-3.4, -5.65) {Latent Implementation Bias};
\node[month] (m1d) at (-2.5, -6.35) {Month $t$};
\node[month] (m2d) at (2.5, -6.35) {Month $t{+}1$};
\draw[thick, ->] (-5.2, -8.1) -- (5.2, -8.1) node[right, font=\footnotesize] {time};
\draw[gray, dashed] (-0.4, -6.95) -- (-0.4, -9.3);
\draw[gray, dashed] (4.6, -6.95) -- (4.6, -9.3);
\node[timepoint] (pe) at (-0.8, -8.1) {};
\node[timepoint] (pb) at (0.0, -8.1) {};
\node[timepoint] (pf) at (4.2, -8.1) {};
\node[lbl, above=2pt] at (pe) {$P_{t}^{\text{end}}$};
\node[lbl, above=2pt] at (pb) {$P_{t+1}^{\text{bgn}}$};
\node[lbl, above=2pt] at (pf) {$P_{t+1}^{\text{end}}$};
\draw[smallbrace] (-1.2, -8.25) -- node[below=4pt, font=\scriptsize] {5 BD} (-0.4, -8.25);
\draw[smallbrace] (-0.4, -8.25) -- node[below=4pt, font=\scriptsize] {5 BD} (0.4, -8.25);
\draw[smallbrace] (3.8, -8.25) -- node[below=4pt, font=\scriptsize] {5 BD} (4.6, -8.25);
\draw[brace] ($(pe.north)+(0,0.45)$) -- node[above=5pt, lbl] {LIB} ($(pb.north)+(0,0.45)$);
\draw[brace] ($(pb.north)+(0,0.45)$) -- node[above=1pt, lbl] {$r_{t+1}^{\text{Bgn}}$} ($(pf.north)+(0,0.45)$);
\draw[bracebelow] ($(pe.south)+(0,-0.55)$) -- node[below=5pt, lbl] {$r_{t+1}^{\text{End}}$} ($(pf.south)+(0,-0.55)$);
\draw[<->, thin, blue!70!black] (-0.4, -9.0) -- (0.0, -9.0);
\node[lbl, blue!70!black] at (-0.2, -9.3) {IGAP$\leq$5 BD};

\end{tikzpicture}
} 
\end{center}
\caption{Return measurement windows and bias decomposition.}
\begin{justify}
\begin{spacing}{1}
\footnotesize{
The figure illustrates the four return measurement windows used in the paper. Panel~A plots the standard month-end return, measured from the last 5 business days (BD) of month $t$ to the last 5 business days of month $t{+}1$. Panel~B depicts the adjusted signal timing, where the investor observes signals at $P^{s}_{t-\Delta}$ and $P^{s}_{t+1-\Delta}$, between 1 and 10 business days before the month-end prices. Panel~C plots the month-begin return, measured within month $t{+}1$ from the first 5 to the last 5 business days. Panel~D decomposes the month-end return into LIB (from $P_{t}^{\text{end}}$ to $P_{t+1}^{\text{bgn}}$) plus the month-begin return (from $P_{t+1}^{\text{bgn}}$ to $P_{t+1}^{\text{end}}$). The implementation gap (IGAP) is the minimum one-day gap between the month-end and month-begin prices. All business days follow the NYSE trading calendar.
}
\end{spacing}
\end{justify}
\label{fig:return_timeline}
\end{figure}

Table~\ref{tab:mmn_1} presents bias estimates for seven price-based factors constructed from commonly used signals: yield to maturity (\texttt{ytm}), credit spread (\texttt{cs}), bond book-to-market (\texttt{bbtm}), 6-month change in log spreads (\texttt{dcs6}), IPR value (\texttt{val\_ipr}), HZ value (\texttt{val\_hz}), and short-term reversal (\texttt{str}). In Panel~A, we sort bonds into deciles each month and form value-weighted portfolios (using bond market capitalization) that are long the top decile and short the bottom decile. In Panel~B, we construct within-firm factors following \citet*{feldhutter2023}: for each firm with at least two bonds, we form long-short portfolios based on within-firm signal rankings, then aggregate across firms using market-value weights.

Short-term reversal exhibits the largest bias. In Panel~A (single-sort), the unadjusted factor earns $-$0.99\% per month ($t = -4.46$), but the signal-adjusted factor earns only $-$0.09\% ($t = -0.51$). The implied bias is 90 basis points, or 91\% of the measured premium. The unadjusted CAPMB alpha of $-$0.77\% becomes 0.12\% after adjustment, with a bias of 89 basis points ($t = -8.20$). The return-adjusted approach yields comparable results: the premium falls to $-$0.17\% with a bias of 82 basis points.
The value factors show biases of 36--45 basis points. For \texttt{val\_ipr}, the unadjusted premium of 0.96\% per month falls to 0.51\% after signal adjustment. The bias of 45 basis points ($t = 6.93$) accounts for 47\% of the measured premium. For \texttt{val\_hz}, the premium falls from 0.81\% to 0.45\%, a bias of 36 basis points. 
Spread-based factors (\texttt{ytm}, \texttt{cs}, \texttt{bbtm}) display more modest biases of 15--18 basis points per month, though all are highly statistically distinguishable from zero ($t > 5$). The \texttt{dcs6} factor, which sorts on spread changes from \citet*{KPP2023}, has a negative premium ($-$1.02\% unadjusted, $-$0.50\% adjusted) and correspondingly negative bias ($-$52 basis points).

Panel~B examines within-firm factors, which control for issuer-level heterogeneity by forming long-short portfolios within each firm. \citet*{feldhutter2023} report that within-firm factors based on price-based signals, specifically \texttt{ytm}, \texttt{cs}, \texttt{bbtm}, \texttt{val\_ipr}, and \texttt{str}, exhibit the largest premia in their framework. After adjustment, only two factors retain statistically meaningful alphas: \texttt{val\_ipr} ($\alpha = 0.18\%$, $t = 4.25$ with adjusted signal; $\alpha = 0.20\%$, $t = 4.30$ with adjusted return) and \texttt{dcs6} ($\alpha = -0.22\%$, $t = -4.98$). The remaining factors (\texttt{ytm}, \texttt{cs}, \texttt{bbtm}, \texttt{val\_hz}, and \texttt{str}) all have adjusted alphas with $|t| < 2$. Within-firm premia and alphas are lower on average than their single-sort counterparts, and the gap widens after adjustment.
\begin{table}[!ht]
\caption{Latent implementation bias in price-based factors.}
\begin{spacing}{1}
{\footnotesize
The table reports premia and CAPMB alphas for seven price-based factors under standard and bias-adjusted approaches. Portfolios are value-weighted (using bond market capitalization) with excess returns over the one-month T-bill rate. Panel~A sorts bonds into deciles each month, going long P10 and short P1. Panel~B uses within-firm sorts: for each firm with at least two bonds, we rank bonds by signal, form a long-short portfolio, and aggregate across firms using market-value weights. $\mu$ is the monthly average return (\%), $\alpha$ is the CAPMB alpha (\%). $\Delta\mu$ and $\Delta\alpha$ measure the bias as the difference in premia and alphas between standard and adjusted approaches. Bias~(1)$-$(2) compares standard versus signal-adjusted. Bias~(1)$-$(3) compares standard versus return-adjusted. $t$-statistics (Newey-West, lags $= \lfloor T^{0.25} \rfloor$) in parentheses. Sample: 2002-09 to 2024-12, $T$=268.}
\end{spacing}
\vspace{-4mm}
\begin{center}
\label{tab:mmn_1}
\vspace{2mm}
\scalebox{0.75}{%
\begin{tabular}{l rr rr rr rr rr}
\toprule
 & \multicolumn{2}{c}{Unadjusted (1)} & \multicolumn{2}{c}{Adj.\ Signal (2)} & \multicolumn{2}{c}{Adj.\ Return (3)} & \multicolumn{2}{c}{Bias (1)$-$(2)} & \multicolumn{2}{c}{Bias (1)$-$(3)} \\
\cmidrule(lr){2-3} \cmidrule(lr){4-5} \cmidrule(lr){6-7} \cmidrule(lr){8-9} \cmidrule(lr){10-11}
Factor & $\mu$ & $\alpha$ & $\mu$ & $\alpha$ & $\mu$ & $\alpha$ & $\Delta\mu$ & $\Delta\alpha$ & $\Delta\mu$ & $\Delta\alpha$ \\
\midrule
\multicolumn{11}{c}{\textbf{Panel A:} Single-Sort} \\
\midrule
\texttt{ytm} & 1.20 & 0.69 & 1.05 & 0.54 & 0.92 & 0.44 & 0.15 & 0.15 & 0.28 & 0.25 \\
 & (2.97) & (2.50) & (2.63) & (1.96) & (2.64) & (1.80) & (6.54) & (6.06) & (3.90) & (4.33) \\
\addlinespace
\texttt{cs} & 1.13 & 0.66 & 0.96 & 0.50 & 0.82 & 0.38 & 0.17 & 0.17 & 0.31 & 0.28 \\
 & (2.89) & (2.39) & (2.52) & (1.82) & (2.45) & (1.55) & (6.23) & (6.17) & (4.14) & (4.54) \\
\addlinespace
\texttt{bbtm} & 0.93 & 0.54 & 0.75 & 0.35 & 0.66 & 0.29 & 0.18 & 0.18 & 0.27 & 0.25 \\
 & (2.19) & (1.69) & (1.80) & (1.12) & (1.80) & (1.04) & (6.20) & (5.24) & (3.39) & (3.56) \\
\addlinespace
\texttt{dcs6} & $-$1.02 & $-$0.79 & $-$0.50 & $-$0.29 & $-$0.46 & $-$0.25 & $-$0.52 & $-$0.51 & $-$0.56 & $-$0.54 \\
 & ($-$4.58) & ($-$4.55) & ($-$2.52) & ($-$1.79) & ($-$2.50) & ($-$1.71) & ($-$8.21) & ($-$8.12) & ($-$7.83) & ($-$7.73) \\
\addlinespace
\texttt{val\_ipr} & 0.96 & 0.76 & 0.51 & 0.31 & 0.51 & 0.33 & 0.45 & 0.45 & 0.45 & 0.43 \\
 & (4.82) & (5.24) & (2.68) & (2.12) & (3.05) & (2.61) & (6.93) & (6.42) & (7.12) & (7.52) \\
\addlinespace
\texttt{val\_hz} & 0.81 & 0.47 & 0.45 & 0.12 & 0.43 & 0.12 & 0.36 & 0.34 & 0.38 & 0.35 \\
 & (3.80) & (3.71) & (2.29) & (1.13) & (2.32) & (1.08) & (6.60) & (6.43) & (6.21) & (6.01) \\
\addlinespace
\texttt{str} & $-$0.99 & $-$0.77 & $-$0.09 & 0.12 & $-$0.17 & 0.03 & $-$0.90 & $-$0.89 & $-$0.82 & $-$0.80 \\
 & ($-$4.46) & ($-$3.58) & ($-$0.51) & (0.68) & ($-$0.98) & (0.18) & ($-$8.70) & ($-$8.20) & ($-$7.34) & ($-$7.30) \\
\addlinespace
\midrule
\multicolumn{11}{c}{\textbf{Panel B:} Within-Firm Sort} \\
\midrule
\texttt{ytm} & 0.56 & 0.31 & 0.30 & 0.06 & 0.26 & 0.04 & 0.26 & 0.25 & 0.30 & 0.27 \\
 & (4.42) & (4.08) & (2.85) & (1.44) & (2.68) & (0.94) & (5.42) & (5.54) & (5.60) & (5.58) \\
\addlinespace
\texttt{cs} & 0.63 & 0.47 & 0.26 & 0.11 & 0.24 & 0.10 & 0.38 & 0.36 & 0.40 & 0.37 \\
 & (5.57) & (4.43) & (3.10) & (1.72) & (3.11) & (1.62) & (6.34) & (6.36) & (6.42) & (6.11) \\
\addlinespace
\texttt{bbtm} & 0.38 & 0.28 & 0.19 & 0.10 & 0.17 & 0.09 & 0.19 & 0.18 & 0.21 & 0.19 \\
 & (4.02) & (4.11) & (2.50) & (1.86) & (2.32) & (1.64) & (6.05) & (6.58) & (5.64) & (5.56) \\
\addlinespace
\texttt{dcs6} & $-$0.70 & $-$0.65 & $-$0.26 & $-$0.22 & $-$0.25 & $-$0.22 & $-$0.44 & $-$0.43 & $-$0.45 & $-$0.43 \\
 & ($-$7.70) & ($-$8.08) & ($-$4.63) & ($-$4.98) & ($-$4.59) & ($-$4.92) & ($-$7.56) & ($-$7.44) & ($-$8.21) & ($-$8.25) \\
\addlinespace
\texttt{val\_ipr} & 0.62 & 0.57 & 0.22 & 0.18 & 0.22 & 0.20 & 0.40 & 0.39 & 0.39 & 0.37 \\
 & (6.02) & (7.27) & (3.50) & (4.25) & (3.58) & (4.30) & (7.41) & (7.47) & (7.38) & (7.85) \\
\addlinespace
\texttt{val\_hz} & 0.50 & 0.35 & 0.21 & 0.07 & 0.19 & 0.06 & 0.29 & 0.28 & 0.31 & 0.29 \\
 & (4.80) & (3.68) & (2.67) & (1.14) & (2.68) & (1.02) & (5.78) & (5.91) & (5.68) & (5.49) \\
\addlinespace
\texttt{str} & $-$0.65 & $-$0.61 & $-$0.10 & $-$0.07 & $-$0.11 & $-$0.10 & $-$0.55 & $-$0.54 & $-$0.54 & $-$0.52 \\
 & ($-$7.68) & ($-$7.12) & ($-$1.89) & ($-$1.36) & ($-$1.91) & ($-$1.69) & ($-$8.37) & ($-$8.51) & ($-$8.48) & ($-$8.77) \\
\addlinespace
\bottomrule
\end{tabular}
}
\end{center}
\end{table}

Cumulative returns in Fig.~\ref{fig:mmn_figure_1} contrast standard and adjusted approaches. For short-term reversal (Panels~A--B), \$1 invested in the unadjusted single-sort factor grows to \$12 by the end of 2024, but only to \$1.1 (adjusted signal) and \$1.4 (adjusted return); the within-firm factor falls from \$5.6 unadjusted to \$1.3 under both adjustments. Credit spread (Panels~C--D) retains more of its single-sort premium (\$9.3 adjusted signal, \$6.5 adjusted return, vs. \$14 unadjusted), but the within-firm factor drops from \$14 unadjusted to \$1.8 (adjusted signal) and \$2.0 (adjusted return).
\begin{figure}[h!]
\begin{center}
\includegraphics[width=\textwidth,height=0.85\textheight,keepaspectratio]{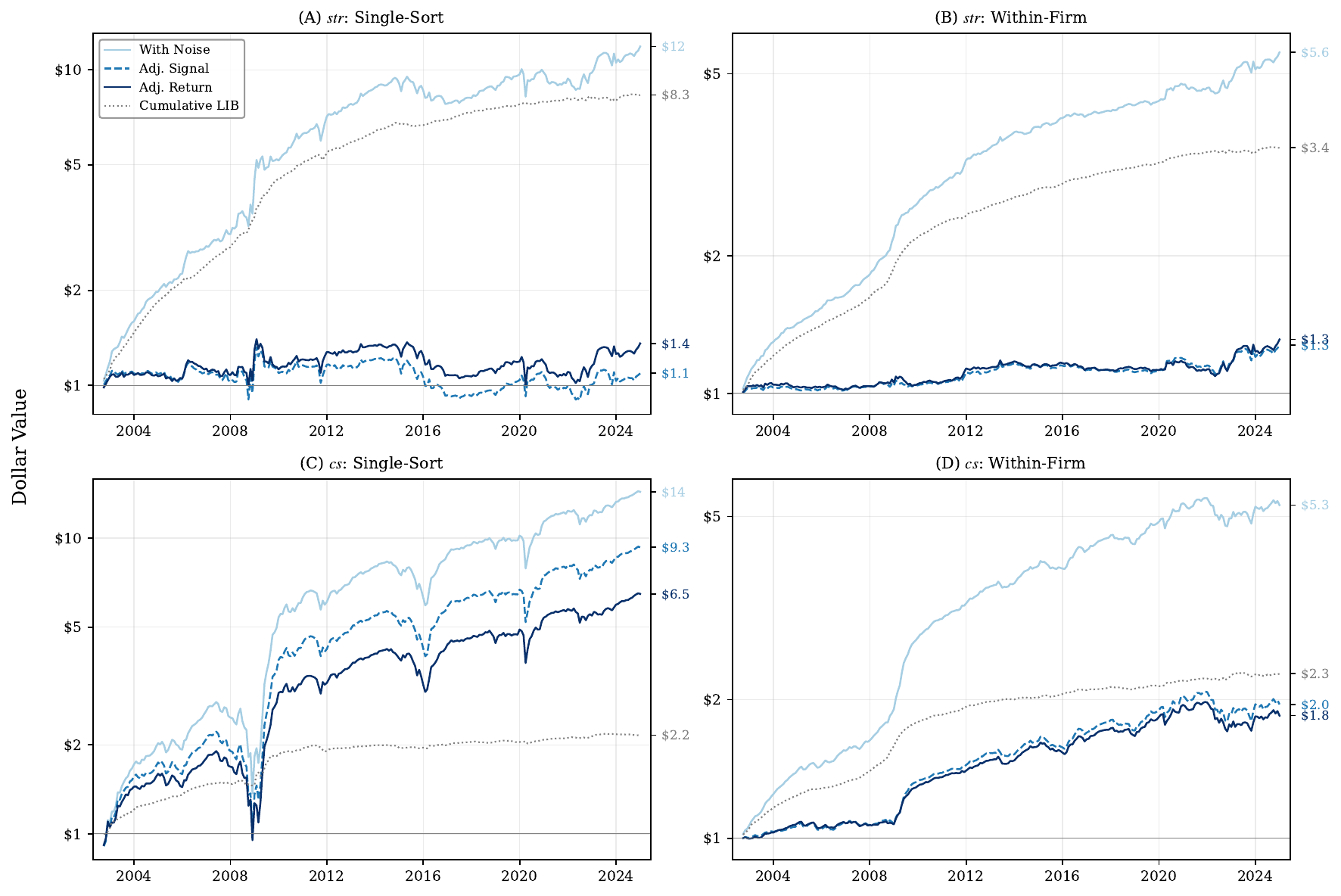}
\end{center}
\vspace{-4mm}
\caption{Cumulative factor returns under standard and adjusted approaches.}
\vspace{-2mm}
\begin{justify}
\begin{spacing}{1}
\footnotesize{
The figure shows the growth of \$1 invested in long-short factors under standard and adjusted approaches. Panels~A--B plot short-term reversal (\texttt{str}). Panels~C--D plot credit spread (\texttt{cs}). The left column reports single-sort factors, while the right column within-firm sorts. The dotted gray line tracks cumulative Latent Implementation Bias (LIB). The $y$-axis uses log scale. The \texttt{str} factor is sign-corrected to have a positive premium. Value-weighted portfolios. Sample: 2002-09 to 2024-12, $T$=268.
}
\end{spacing}
\end{justify}
\vspace{-12mm}
\label{fig:mmn_figure_1}
\end{figure}

Panels~A--B of Fig.~\ref{fig:mmn_figure_2} report average monthly biases with 95\% confidence intervals; all biases are statistically different from zero. The largest biases appear for \texttt{str} and \texttt{dcs6}, followed by the value factors; spread-based factors exhibit smaller but still substantial biases of 15--30 basis points. Panels~C--D decompose unadjusted returns into the implementable component and LIB. For reversal, LIB constitutes over 80\% of the unadjusted premium; value and spread-change factors show LIB shares of 45--55\%; yield and credit spread factors have LIB shares of 20--50\%. Within-firm factors exhibit larger LIB shares than their single-sort counterparts.

\begin{figure}[h!]
\begin{center}
\includegraphics[width=\textwidth,height=0.85\textheight,keepaspectratio]{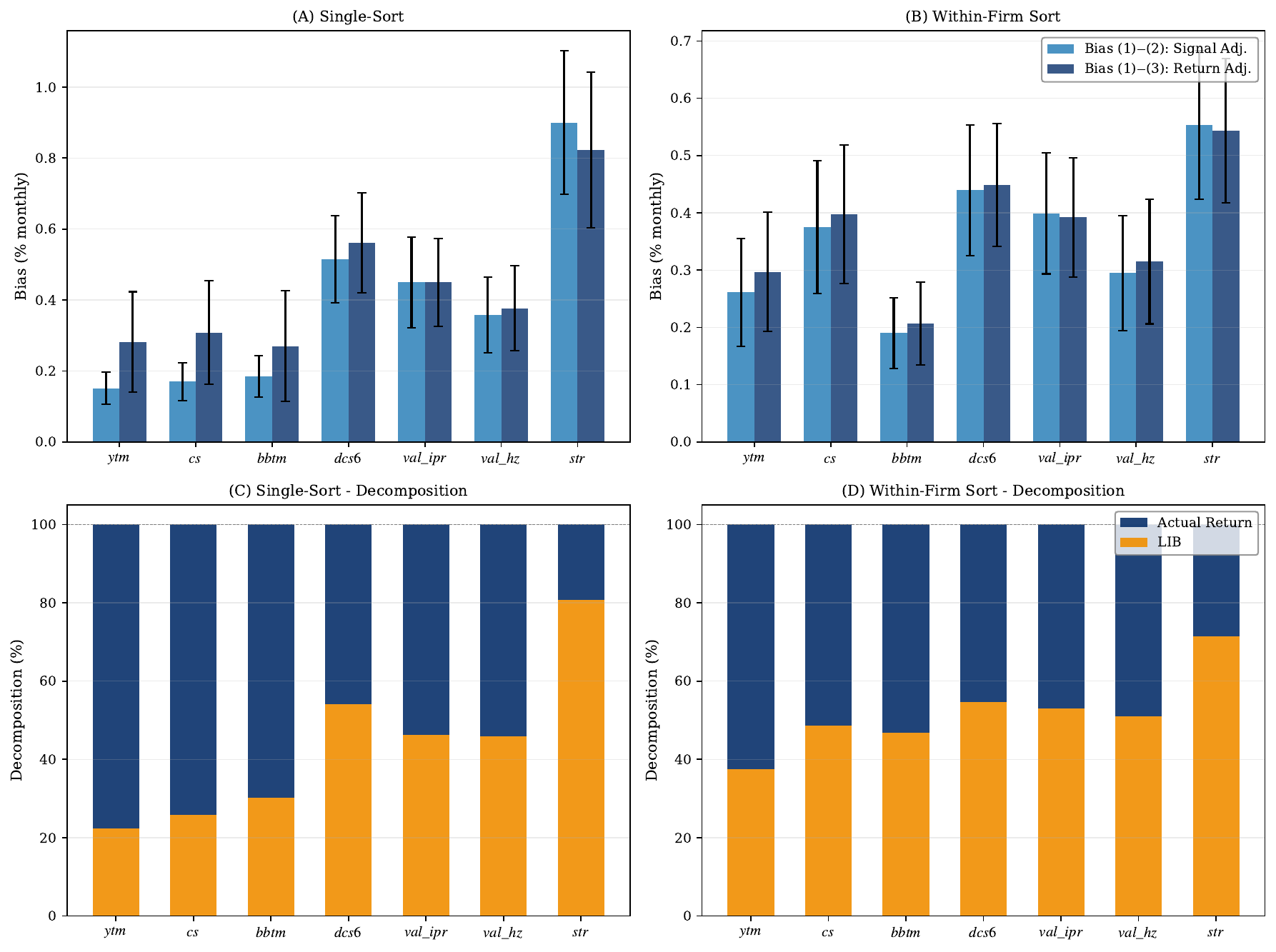}
\end{center}
\vspace{-4mm}
\caption{Latent implementation bias magnitude and return decomposition for price-based factors.}
\vspace{-2mm}
\begin{justify}
\begin{spacing}{1}
\footnotesize{
The figure displays the magnitude of Latent Implementation Bias (LIB) and its share of unadjusted returns for seven price-based factors. Panels~A--B plot the magnitude of the bias for single-sort and within-firm factors. Bias~(1)$-$(2) measures the difference between standard and signal-adjusted approaches. Bias~(1)$-$(3) measures the difference between standard and return-adjusted approaches. Bias is sign-corrected to be positive. Error bars show 95\% confidence intervals using Newey-West standard errors with lags $= \lfloor T^{0.25} \rfloor$. Panels~C--D decompose unadjusted returns into the LIB portion (orange, percentage of the unadjusted mean attributable to LIB) and the implementable component (dark blue). Stacked bars sum to 100\%. Sample: 2002-09 to 2024-12, $T$=268.
}
\end{spacing}
\end{justify}
\vspace{-12mm}
\label{fig:mmn_figure_2}
\end{figure}

To validate the LIB decomposition in Eq.~\eqref{eq:lib}, Table~\ref{tab:mmn_2} compares the month-end return, month-begin return, their difference, and the directly estimated LIB. If the decomposition holds, the difference $r^{\text{End}}_{i,t+1} - r^{\text{Bgn}}_{i,t+1}$ should approximately equal the directly estimated LIB term $\hat{\eta}_{\text{LIB}}$, and the two series should be nearly perfectly correlated. The decomposition holds tightly for single-sort factors: all seven factors show correlations of 0.99 between $(r^{\text{End}} - r^{\text{Bgn}})$ and LIB, with residuals of at most 0.02 percentage points. Within-firm sorts show slightly lower correlations (0.93--0.98) and larger residuals (up to 0.09 percentage points), but the decomposition remains economically tight.

\begin{table}[!ht]
\caption{Latent implementation bias validation for price-based factors.}
\begin{spacing}{1}
	{\footnotesize
The table tests the decomposition $r^{\text{End}} \approx \eta_{\text{LIB}} + r^{\text{Bgn}}$ for seven price-based factors. Panel~A reports single-sort factors. Panel~B reports within-firm factors. $\mu_{\text{End}}$ is the month-end average return (\%), $\mu_{\text{Bgn}}$ is the month-begin average return (\%), $\Delta\mu = \mu_{\text{End}} - \mu_{\text{Bgn}}$, $\mu_{\text{LIB}}$ is the average LIB (\%). $\hat{\rho}$ is the time series correlation between $(r^{\text{End}} - r^{\text{Bgn}})$ and LIB. Residual $= \Delta\mu - \mu_{\text{LIB}}$. $t$-statistics use Newey-West (lags $= \lfloor T^{0.25} \rfloor$). Sample: 2002-09 to 2024-12, $T$=268.}
\end{spacing}
\vspace{-4mm}
\begin{center}
\label{tab:mmn_2}
\vspace{2mm}
\scalebox{0.80}{%
\begin{tabular}{l rr rr rr r r}
\toprule
Factor & $\mu_{\text{End}}$ & $\mu_{\text{Bgn}}$ & $\Delta\mu$ & $t_{\Delta\mu}$ & $\mu_{\text{LIB}}$ & $t_{\mu_{\text{LIB}}}$ & $\hat{\rho}$ & Residual \\
\midrule
\multicolumn{9}{c}{\textbf{Panel A:} Single-Sort} \\
\midrule
\texttt{ytm} & 1.20 & 0.92 & 0.28 & (3.90) & 0.27 & (3.67) & 0.99 & 0.01 \\
\addlinespace
\texttt{cs} & 1.13 & 0.82 & 0.31 & (4.14) & 0.29 & (3.84) & 0.99 & 0.02 \\
\addlinespace
\texttt{bbtm} & 0.93 & 0.66 & 0.27 & (3.39) & 0.28 & (3.47) & 0.99 & $-$0.01 \\
\addlinespace
\texttt{dcs6} & $-$1.02 & $-$0.46 & $-$0.56 & ($-$7.83) & $-$0.55 & ($-$7.63) & 0.99 & $-$0.01 \\
\addlinespace
\texttt{val\_ipr} & 0.96 & 0.51 & 0.45 & (7.12) & 0.44 & (6.81) & 0.99 & 0.01 \\
\addlinespace
\texttt{val\_hz} & 0.81 & 0.43 & 0.38 & (6.21) & 0.37 & (5.87) & 0.99 & 0.01 \\
\addlinespace
\texttt{str} & $-$0.99 & $-$0.17 & $-$0.82 & ($-$7.34) & $-$0.80 & ($-$7.18) & 0.99 & $-$0.02 \\
\addlinespace
\midrule
\multicolumn{9}{c}{\textbf{Panel B:} Within-Firm Sort} \\
\midrule
\texttt{ytm} & 0.56 & 0.26 & 0.30 & (5.60) & 0.21 & (5.31) & 0.93 & 0.09 \\
\addlinespace
\texttt{cs} & 0.63 & 0.24 & 0.40 & (6.42) & 0.31 & (6.21) & 0.96 & 0.09 \\
\addlinespace
\texttt{bbtm} & 0.38 & 0.17 & 0.21 & (5.64) & 0.18 & (5.35) & 0.96 & 0.03 \\
\addlinespace
\texttt{dcs6} & $-$0.70 & $-$0.25 & $-$0.45 & ($-$8.21) & $-$0.38 & ($-$7.95) & 0.98 & $-$0.06 \\
\addlinespace
\texttt{val\_ipr} & 0.62 & 0.22 & 0.39 & (7.38) & 0.33 & (7.29) & 0.95 & 0.06 \\
\addlinespace
\texttt{val\_hz} & 0.50 & 0.19 & 0.31 & (5.68) & 0.26 & (5.76) & 0.95 & 0.06 \\
\addlinespace
\texttt{str} & $-$0.65 & $-$0.11 & $-$0.54 & ($-$8.48) & $-$0.46 & ($-$8.19) & 0.94 & $-$0.08 \\
\addlinespace
\bottomrule
\end{tabular}
}
\end{center}
\end{table}

One concern is whether the gap procedure drives differences in premia through an altered holding period or economic events during the gap window, rather than by removing LIB. Table~\ref{tab:lib_summary} in Appendix~\ref{app:lib-empirical} provides a discriminating test. We compute the month-end minus month-begin return difference for all 108 factors using unadjusted portfolio weights, comparing Approach~1 (month-end returns) with Approach~3 (month-begin returns). Among the 78 non-price-based factors, where Corollary~\ref{cor:nonprice} predicts no CEIV bias, the average absolute return difference is 0.03--0.10\% per month with average $|t|$-statistics near unity. Only 10--18\% of factors show a statistically significant gap, depending on the sort. Among the 30 price-based factors, 43--67\% show a significant gap depending on the sort, with average absolute differences of 0.13--0.30\% per month. For illiquidity factors, which are constructed from within-month averages of daily observations, the shared-price link between signal noise and return noise is dampened. The Internet Appendix confirms that LIB estimates for five illiquidity characteristics are economically small and mostly statistically indistinguishable from zero. CAPMB alphas for all five illiquidity factors range from 0.10\% to 0.20\% per month with $|t| < 1.65$, and within-firm alphas are indistinguishable from zero: the illiquidity factor premium is at best marginal.

Measurement error in transaction prices inflates price-based factor premia by 15 to 90 basis points per month. Non-price-based factors and within-month illiquidity averages are largely unaffected. Infrequent trading and wide bid-ask spreads also produce volatile bond-level returns: monthly returns below $-$50\% or above 50\% are not uncommon, particularly during financial distress. Such returns, though economically important, invite ad hoc filtering, such as winsorizing or trimming at sample-wide thresholds, which introduces a second form of bias that we examine next.


\section{Look-ahead bias}\label{sec:lab}

This section turns to ex-post return filtering, a second source of bias that affects all factors, including non-price-based ones. Historical cross-sectional backtests require some look ahead in sample definition. To compute a return for month $t+1,$ a bond must trade near both month-end $t$ and month-end $t+1,$ so that bonds without valid future prices must be excluded at portfolio formation. This \emph{required} look ahead is inherent to historical backtests and affects all researchers equally. The same issue arises in the Center for Research in Security Prices (CRSP) database, though less often because stocks trade daily. In TRACE, where many bonds trade infrequently, it is unavoidable.
\emph{Avoidable} look-ahead bias arises when return filtering thresholds are computed from the full sample rather than from data available at portfolio formation. We use ``ex-ante'' and ``ex-post'' to describe the information set used by the filtering rule rather than the return. Although all returns are realized, the distinction revolves around whether the threshold applied to those returns uses only information available at portfolio formation (ex-ante) or the sample including future months (ex-post).
Ex-post filtering is widespread in corporate bond research despite being rarely stated explicitly. For many factors, reported factor premia and alphas are statistically significant only when ex-post filters are applied, commonly rationalized by outlier removal. At month $t$, the threshold $\tau^{\text{ex-post}} = \text{Percentile}_q(\{r_{i,s}\}_{s=1}^{T})$ incorporates information from months $t{+}1, \ldots, T$ that has not yet been realized. Filtering with such thresholds creates an infeasible trading strategy, one that no investor could have implemented in real time.
We define the \emph{Look-Ahead Bias} (LAB) as the difference between the infeasible (reported) factor return and the feasible (achievable) return. \citet*{duarte2023too} document the same mechanism in equity options, where ex-post sample filters selectively remove the worst-performing observations and inflate reported Sharpe ratios from 0.5 to above 5. The \emph{base} (feasible) factor applies all transaction-level price filters (Dick-Nielsen corrections, decimal shift, bounce-back) and the signal gap procedure from Section~\ref{sec:mmn}, but does not apply any return-level winsorization or trimming. The \emph{winsorized} (infeasible) factor additionally caps returns at percentile thresholds computed from the full sample. Transaction-level price filters already address data recording errors, whereas return-level winsorization applied to clean data selectively removes extreme but genuine returns. The Internet Appendix documents the prevalence and time concentration of extreme returns, with 15 bond-month observations having month-end returns below $-$95\% and 385 exceeding 95\%. Consider a long-short factor formed by sorting bonds on signal $\hat{s}_{i,t}$, where the long leg holds bonds in the top portfolio, the short leg holds bonds in the bottom portfolio, and both legs are equal-weighted. Let $r_{t+1}^{\text{LS}} = r_{t+1}^{L} - r_{t+1}^{S}$ denote the feasible long-short return using unmodified returns. Under ex-post winsorizing, each bond return $r_{i,t+1}$ is replaced by $\tilde{r}_{i,t+1} = \max(\tau_L, \min(r_{i,t+1}, \tau_U))$, where $\tau_L$ and $\tau_U$ are the lower and upper thresholds. The winsorizing adjustment $\Delta_{i,t+1} \equiv \tilde{r}_{i,t+1} - r_{i,t+1}$ is positive when returns are floored (left tail) and negative when capped (right tail). The look-ahead bias is
\begin{equation}
\text{LAB}_{t+1} \equiv \tilde{r}_{t+1}^{\text{LS}} - r_{t+1}^{\text{LS}} = \underbrace{\frac{1}{N_t^L} \sum_{i \in \text{Long}} \Delta_{i,t+1}}_{\text{LAB}_{t+1}^{L}} - \underbrace{\frac{1}{N_t^S} \sum_{i \in \text{Short}} \Delta_{i,t+1}}_{\text{LAB}_{t+1}^{S}}.
\label{eq:lab_def}
\end{equation}
The bias decomposes into contributions from each leg, with $\text{LAB}^L$ capturing adjustments to long positions and $\text{LAB}^S$ capturing adjustments to short positions. A positive LAB indicates that the reported factor return overstates achievable performance.\footnote{LAB can be decomposed into a weight component (different portfolio composition under filtering) and a return-evaluation component (different return magnitudes for the same bonds). The return-evaluation component dominates: winsorization directly alters measured returns while portfolio composition changes are negligible in value-weighted portfolios with many bonds.}
The direction of LAB depends on where extreme returns concentrate. If the long leg contains more left-tail returns (e.g., high-risk bonds that crash during stress), left-tail winsorizing benefits the long leg disproportionately, inflating the factor return. If the short leg contains more right-tail returns, right-tail winsorizing protects the short leg from losses, inflating the factor return.
The magnitude of LAB is largest during financial distress. In episodes like the 2008--2009 Great Recession or COVID-19 in early 2020, return distributions become highly asymmetric. One leg of the portfolio experiences a disproportionate share of extreme returns, and the winsorizing adjustment becomes unbalanced.
Fig.~\ref{fig:lookahead_bias} contrasts ex-post and ex-ante filtering. In Panel~A, the threshold $\tau^{\text{ex-post}}$ is computed using returns from the full sample, including months $t{+}1, \ldots, T$ that lie in the investor's future. In Panel~B, the ex-ante threshold $\tau^{\text{ex-ante}}_t = \text{Percentile}_q(\{r_{i,s}\}_{s=1}^{t})$ uses only returns through month $t$, making it computable in real time. Ex-ante filtering uses only information available at portfolio formation and therefore yields a feasible trading strategy.

\begin{figure}[h!]
\begin{center}
\vspace{.25cm}
\scalebox{0.95}{%
\begin{tikzpicture}[
    >=Stealth,
    month/.style={draw, minimum width=2.4cm, minimum height=0.6cm, font=\small, anchor=east},
    timepoint/.style={circle, fill=black, inner sep=1.5pt},
    lbl/.style={font=\footnotesize},
    panel/.style={font=\bfseries\small},
    paneldesc/.style={font=\small},
    warn/.style={black},
    safe/.style={black}
]

\node[panel, anchor=west] at (-6.5, 3.2) {Panel A:};
\node[paneldesc, anchor=west] at (-4.8, 3.2) {Ex-Post Filter (Look-Ahead Bias)};

\draw[thick, ->] (-6.5, 1.0) -- (8.5, 1.0) node[right, font=\footnotesize] {time};

\node[timepoint] (p0a) at (-3.5, 1.0) {};
\node[timepoint] (p1a) at (-0.5, 1.0) {};
\node[timepoint] (p2a) at (2.5, 1.0) {};

\node[month] at (-3.5, 2.4) {Month $t{-}1$};
\node[month] at (-0.5, 2.4) {Month $t$};
\node[month] at (2.5, 2.4) {Month $t{+}1$};
\node[month] at (5.5, 2.4) {...};
\node[month] at (8.0, 2.4) {Month $T$};

\node[lbl, above=2pt] at (p0a) {$P_{t-1}^{\text{end}}$};
\node[lbl, above=2pt] at (p1a) {$P_t^{\text{end}}$};
\node[lbl, above=2pt] at (p2a) {$P_{t+1}^{\text{end}}$};

\node[lbl, below=6pt] at (p1a) {\textbf{Formation}};

\draw[decorate, decoration={brace, amplitude=4pt, raise=2pt}]
    (p1a.north) -- (p2a.north) node[midway, above=8pt, lbl] {$r_{t+1}^{\text{end}}$};

\draw[thick, warn, ->] (-6.0, 0.3) -- (8.0, 0.3);
\node[lbl, warn, anchor=west] at (-6.0, -0.1) {$\tau^{\text{ex-post}}$ uses \textbf{full sample}};
\node[lbl, warn, anchor=west] at (-6.0, -0.5) {(includes months $t{+}1, \ldots, T$)};

\node[panel, anchor=west] at (-6.5, -2.0) {Panel B:};
\node[paneldesc, anchor=west] at (-4.8, -2.0) {Ex-Ante Filter (Feasible)};

\draw[thick, ->] (-6.5, -4.2) -- (6.5, -4.2) node[right, font=\footnotesize] {time};

\node[timepoint] (p1b) at (1.0, -4.2) {};
\node[timepoint] (p2b) at (4.0, -4.2) {};

\node[month] at (-4.5, -2.8) {Month $1$};
\node[month] at (-2.0, -2.8) {...};
\node[month] at (1.0, -2.8) {Month $t$};
\node[month] at (4.0, -2.8) {Month $t{+}1$};

\node[lbl, above=2pt] at (p1b) {$P_t^{\text{end}}$};
\node[lbl, above=2pt] at (p2b) {$P_{t+1}^{\text{end}}$};

\node[lbl, below=6pt] at (p1b) {\textbf{Formation}};

\draw[decorate, decoration={brace, amplitude=4pt, raise=2pt}]
    (p1b.north) -- (p2b.north) node[midway, above=8pt, lbl] {$r_{t+1}^{\text{end}}$};

\draw[thick, safe, <-] (-6.0, -4.9) -- (0.7, -4.9);
\node[lbl, safe, anchor=west] at (-6.0, -5.3) {$\tau^{\text{ex-ante}}_t$ uses \textbf{past only}};
\node[lbl, safe, anchor=west] at (-6.0, -5.7) {(includes months $1, \ldots, t$)};

\end{tikzpicture}
} 
\end{center}
\caption{Comparison of ex-post (full-sample) and ex-ante (rolling-window) return filtering.} 
\begin{justify}
\begin{spacing}{1}
\footnotesize{
The figure contrasts ex-post and ex-ante return filtering. Panel~A depicts ex-post filtering, where thresholds $\tau^{\text{ex-post}}$ are computed from the full sample including future months $t{+}1, \ldots, T$, embedding future information into portfolio construction. Panel~B depicts the ex-ante alternative, where thresholds $\tau^{\text{ex-ante}}_t$ use only current data available at portfolio formation.
}
\end{spacing}
\end{justify}
\label{fig:lookahead_bias}
\end{figure}
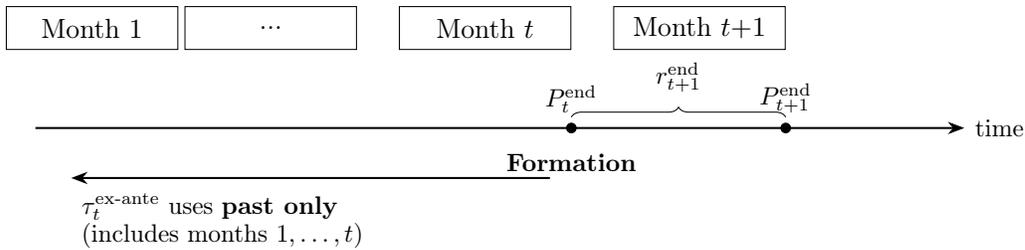

\subsection{Factors sensitive to ex-post return filtering}
Table~\ref{tab:lab_affected_factors} reports factors from influential corporate bond studies (including momentum, idiosyncratic volatility, downside risk, and macroeconomic exposures) whose statistical significance depends on the choice of asymmetric return filtering.
\begin{table}[!ht]
\caption{Corporate bond factors sensitive to ex-post return filtering.}
\begin{spacing}{1}
	{\footnotesize 
The table lists factors from published corporate bond studies whose statistical significance depends on asymmetric return filtering. The ``Wins.\ Threshold'' column reports the representative percentile threshold at which significance changes, where $0.50^{\text{th}}$ indicates left-tail winsorization at the 0.5th percentile and $99.50^{\text{th}}$ indicates right-tail winsorization. \citet*{jostova2013momentum} use the $99.50^{\text{th}}$ percentile (Footnote 16 of their paper). The remaining papers do not state their thresholds. Sample: 2002-08 to 2024-12, $T$=269.
}
\end{spacing}
\vspace{-4mm}
\begin{center}
\label{tab:lab_affected_factors}
\vspace{2mm}
\scalebox{0.75}{%
\begin{tabular}{l p{5.5cm} l l}
\toprule
Mnemonic & Factor Description & Journal & Wins.\ Threshold \\
\midrule
        \texttt{b\_dunc} & Macro uncertainty (1-month $\Delta$) beta & JFQA, \citet*{bali2021macroeconomic} & $0.50^{\text{th}}$ \\
        \texttt{b\_dunc3} & Macro uncertainty (3-month $\Delta$) beta & JFQA, \citet*{bali2023macroeconomic} & $0.50^{\text{th}}$ \\
        \texttt{b\_dunc6} & Macro uncertainty (6-month $\Delta$) beta & JFQA, \citet*{bali2023macroeconomic} & $0.50^{\text{th}}$ \\
        \texttt{b\_unc} & Macro uncertainty level beta & JFQA, \citet*{bali2023macroeconomic} & $0.50^{\text{th}}$ \\
        \texttt{mom3\_1} & 3-month momentum & RFS, \citet*{jostova2013momentum} & $99.50^{\text{th}}$ \\
        \texttt{mom6\_1} & 6-month momentum & RFS, \citet*{jostova2013momentum} & $99.50^{\text{th}}$ \\
        \texttt{mom12\_1} & 12-month momentum & RFS, \citet*{jostova2013momentum} & $99.50^{\text{th}}$ \\
        \texttt{ltr48\_12} & Long-term reversal (48-12) & JFE, \citet*{bali2021long} & $0.50^{\text{th}}$ \\
        \texttt{ltr30\_6} & Long-term reversal (30-6) & WP, \citet*{subrahmanyam2023corporatebonddata} & $0.50^{\text{th}}$ \\
        \texttt{ivol\_bbw} & Idiosyncratic volatility (BBW) & JFE, \citet*{bai2021there} & $0.50^{\text{th}}$ \\
        \texttt{ivol\_vp} & Idiosyncratic volatility (VIX-PSB) & JFE, \citet*{ChungWangWu_2019} & $0.50^{\text{th}}$ \\
        \texttt{var\_95} & Historical 95\% value-at-risk & JFE, \citet*{BaiBaliWen_2019} & $0.50^{\text{th}}$ \\
        \texttt{es\_90} & Historical 90\% expected shortfall & JFE, \citet*{BaiBaliWen_2019} & $0.50^{\text{th}}$ \\
        \texttt{b\_dvix\_vp} & VIX innovation beta (PSB) & JFE, \citet*{ChungWangWu_2019} & $0.50^{\text{th}}$ \\
        \texttt{b\_psb\_m} & Pastor-Stambaugh beta (multi-factor) & JFE, \citet*{LinWangWu_2011} & $0.50^{\text{th}}$ \\
        \texttt{b\_amd\_m} & Amihud beta (multi-factor) & JFE, \citet*{LinWangWu_2011} & $0.50^{\text{th}}$ \\
\bottomrule
\end{tabular}
}
\end{center}
\end{table}
Table~\ref{tab:lab_affected_factors} shows that factors based on characteristics -- volatility, Value at Risk (VaR), Expected Shortfall (ES), macro uncertainty -- require left-tail winsorization to produce significant alphas, while momentum factors require right-tail winsorization. This pattern is consistent with the LAB decomposition in Eq.~\eqref{eq:lab_def}, whereby factors that sort on volatility, VaR, or macro uncertainty place bonds with large tail exposures in the long leg where left-tail winsorizing provides protection, while momentum factors place past losers in the short leg where right-tail winsorizing caps rebounds for bonds sold short. The winsorization threshold that ``works'' is determined by the sorting characteristic, specifically by whether the tail in which extreme returns concentrate corresponds to the long leg holding bonds with large downside exposure (left tail) or the short leg holding bonds prone to rebounds (right tail).
Table~\ref{tab:lab_ls_1} quantifies LAB for each factor. We sort bonds into value-weighted decile portfolios, going long P10 and short P1. Panel~A reports factors sensitive to left-tail winsorization and Panel~B reports factors sensitive to right-tail winsorization. For each factor, we compare the winsorized (infeasible) and base (feasible) premia and CAPMB alphas.

\begin{table}[!ht]
\caption{Look-ahead bias by factor.}
\begin{spacing}{1}
{\footnotesize
The table compares premia and CAPMB alphas with ($\tilde \mu$ and $\tilde \alpha$) and without ($\mu$ and $\alpha$) asymmetric ex-post return winsorization for the factors in Table~\ref{tab:lab_affected_factors}. Factors are sorted into value-weighted decile portfolios, long P10 and short P1. Bias is the difference between the winsorized (infeasible) and base (feasible) factor statistics. Panel~A if for factors sensitive to left-tail winsorization (0.50\textsuperscript{th} percentile). Panel~B is for factors sensitive to right-tail winsorization (99.50\textsuperscript{th} percentile). Factors have not been sign-corrected to have positive means. $t$-statistics use Newey-West standard errors with lags $= \lfloor T^{0.25} \rfloor$. Sample: 2002-08 to 2024-12, $T$=269.}
\end{spacing}
\vspace{-4mm}
\begin{center}
\label{tab:lab_ls_1}
\vspace{2mm}
\scalebox{0.78}{%
\begin{tabular}{l ccc ccc}
\toprule
 & \multicolumn{3}{c}{Mean Returns (\%)}  & \multicolumn{3}{c}{CAPMB Alpha (\%)} \\
\cmidrule(lr){2-4} \cmidrule(lr){5-7}
 & $\tilde{\mu}_{\text{LS}}$ & $\mu_{\text{LS}}$ & Bias & $\tilde{\alpha}_{\text{LS}}$ & $\alpha_{\text{LS}}$ & Bias \\
\midrule
\multicolumn{7}{l}{\textbf{Panel A:} Left-tail winsorized factors (0.50\textsuperscript{th} percentile)} \\
\midrule
        \texttt{b\_dunc} & $-$0.17 & $-$0.04 & $-$0.13 & $-$0.22 & $-$0.04 & $-$0.18 \\
         & ($-$1.03) & ($-$0.22) & ($-$2.14) & ($-$1.38) & ($-$0.22) & ($-$2.20) \\
        \texttt{b\_dunc3} & $-$0.36 & $-$0.20 & $-$0.16 & $-$0.32 & $-$0.11 & $-$0.21 \\
         & ($-$1.91) & ($-$1.04) & ($-$1.79) & ($-$1.87) & ($-$0.57) & ($-$1.88) \\
        \texttt{b\_unc} & $-$0.32 & $-$0.18 & $-$0.13 & $-$0.20 & $-$0.02 & $-$0.18 \\
         & ($-$1.24) & ($-$0.78) & ($-$1.58) & ($-$0.91) & ($-$0.09) & ($-$1.68) \\
        \texttt{ltr48\_12} & $-$0.40 & $-$0.32 & $-$0.08 & $-$0.33 & $-$0.23 & $-$0.10 \\
         & ($-$2.93) & ($-$2.21) & ($-$1.90) & ($-$2.57) & ($-$1.56) & ($-$1.65) \\
        \texttt{ltr30\_6} & $-$0.58 & $-$0.45 & $-$0.14 & $-$0.43 & $-$0.26 & $-$0.17 \\
         & ($-$2.76) & ($-$1.98) & ($-$2.30) & ($-$2.37) & ($-$1.26) & ($-$2.32) \\
        \texttt{ivol\_bbw} & 0.87 & 0.57 & 0.29 & 0.53 & 0.18 & 0.35 \\
         & (2.84) & (1.68) & (2.75) & (2.58) & (0.72) & (2.83) \\
        \texttt{ivol\_vp} & 0.78 & 0.52 & 0.26 & 0.40 & 0.08 & 0.31 \\
         & (2.99) & (1.76) & (2.39) & (2.48) & (0.44) & (2.49) \\
        \texttt{b\_dvix\_vp} & $-$0.02 & 0.02 & $-$0.04 & $-$0.01 & 0.04 & $-$0.05 \\
         & ($-$0.19) & (0.14) & ($-$1.36) & ($-$0.09) & (0.38) & ($-$1.60) \\
        \texttt{b\_psb\_m} & 0.07 & 0.15 & $-$0.08 & 0.12 & 0.23 & $-$0.11 \\
         & (0.43) & (0.80) & ($-$2.36) & (0.85) & (1.35) & ($-$2.47) \\
        \texttt{b\_amd\_m} & $-$0.29 & $-$0.24 & $-$0.05 & $-$0.32 & $-$0.26 & $-$0.06 \\
         & ($-$1.99) & ($-$1.68) & ($-$2.93) & ($-$2.07) & ($-$1.74) & ($-$2.74) \\
        \texttt{var\_95} & 1.05 & 0.78 & 0.26 & 0.56 & 0.24 & 0.32 \\
         & (3.50) & (2.65) & (2.69) & (3.01) & (1.42) & (2.71) \\
        \texttt{es\_90} & 1.26 & 0.91 & 0.35 & 0.74 & 0.31 & 0.43 \\
         & (3.53) & (2.59) & (2.38) & (3.10) & (1.42) & (2.45) \\
\midrule
\multicolumn{7}{l}{\textbf{Panel B:} Right-tail winsorized factors (99.50\textsuperscript{th} percentile)} \\
\midrule
        \texttt{mom3\_1} & 0.33 & 0.17 & 0.16 & 0.45 & 0.31 & 0.13 \\
         & (1.78) & (0.90) & (2.22) & (2.26) & (1.65) & (2.10) \\
        \texttt{mom6\_1} & 0.30 & 0.00 & 0.30 & 0.46 & 0.22 & 0.24 \\
         & (1.45) & (0.01) & (2.03) & (2.39) & (1.08) & (2.05) \\
        \texttt{mom12\_1} & 0.26 & $-$0.13 & 0.40 & 0.46 & 0.14 & 0.31 \\
         & (0.92) & ($-$0.39) & (2.21) & (1.60) & (0.51) & (2.33) \\
\bottomrule
\end{tabular}
}
\end{center}
\end{table}

The idiosyncratic volatility factors exhibit the largest biases among factors sensitive to left-tail winsorization. For \texttt{ivol\_bbw} \citep*{bai2021there}, the winsorized alpha of 0.53\% per month ($t = 2.58$) drops to 0.18\% ($t = 0.72$) without winsorization. The implied bias of 35 basis points accounts for 66\% of the measured alpha. \texttt{ivol\_vp} \citep*{ChungWangWu_2019} drops from 0.40\% ($t = 2.48$) to 0.08\% ($t = 0.44$), a 31-basis-point bias. Both factors sort high-volatility bonds into the long leg, where left-tail winsorization provides disproportionate protection during market stress.
The macroeconomic uncertainty factor \texttt{b\_dunc3} \citep*{bali2023macroeconomic}, which sorts on exposure to 3-month changes in the macroeconomic uncertainty index, has a winsorized alpha of $-$0.32\% ($t = -1.87$) that shrinks to $-$0.11\% ($t = -0.57$) in the feasible specification (a bias of 21 basis points). Without winsorization, the \texttt{ltr48\_12} \citep*{bali2021long} alpha shrinks from $-$0.33\% ($t = -2.57$) to $-$0.23\% ($t = -1.56$). In both cases, statistical significance depends on the infeasible filter.\footnote{Several factors fail to produce detectable premia or alphas even with ex-post winsorization: \texttt{b\_dunc}, \texttt{b\_unc}, \texttt{b\_dvix\_vp}, and \texttt{b\_psb\_m} have $|t| < 1.96$ for both winsorized premia and alphas.}
Momentum factors in Panel~B require right-tail winsorization. The 6-month momentum factor \texttt{mom6\_1} \citep*{jostova2013momentum} has a winsorized premium of 0.30\% but a base premium of 0.00\%. The alpha drops from 0.46\% ($t = 2.39$) to 0.22\% ($t = 1.08$). The 12-month momentum factor is more extreme, with the winsorized premium reaching 0.26\% compared to a base premium of $-$0.13\%, such that the 40-basis-point bias converts a negative premium into a positive one. Right-tail winsorization caps the rebounds of past losers in the short leg, artificially inflating momentum returns. The Internet Appendix decomposes these biases into contributions from the long and short legs. Even with winsorization, only \texttt{mom3\_1} achieves a marginally significant premium ($t = 1.78$); \texttt{mom6\_1} and \texttt{mom12\_1} remain insignificant. The $t$-statistics on the winsorized CAPMB alphas for \texttt{mom3\_1} ($t = 2.26$) and \texttt{mom6\_1} ($t = 2.39$) exceed conventional thresholds, but are an artifact of the infeasible specifications and turn statistically insignificant without ex-post filtering.
Fig.~\ref{fig:dua_figure_19} illustrates the danger of asymmetric ex-post trimming using the 6-month momentum factor of \citet*{jostova2013momentum} as a case study. Panels~A and~C apply ex-ante trimming: bonds with extreme returns are excluded from signal computation and portfolio formation using only historical information. Under ex-ante trimming, the CAPMB alpha is statistically indistinguishable from zero across all trimming thresholds in both tails. No feasible filter generates a detectable momentum alpha.

\begin{figure}[h!]
\centering
\scalebox{0.85}{\includegraphics[width=1.18\textwidth,height=0.85\textheight,keepaspectratio,trim=0 0 0 35pt,clip]{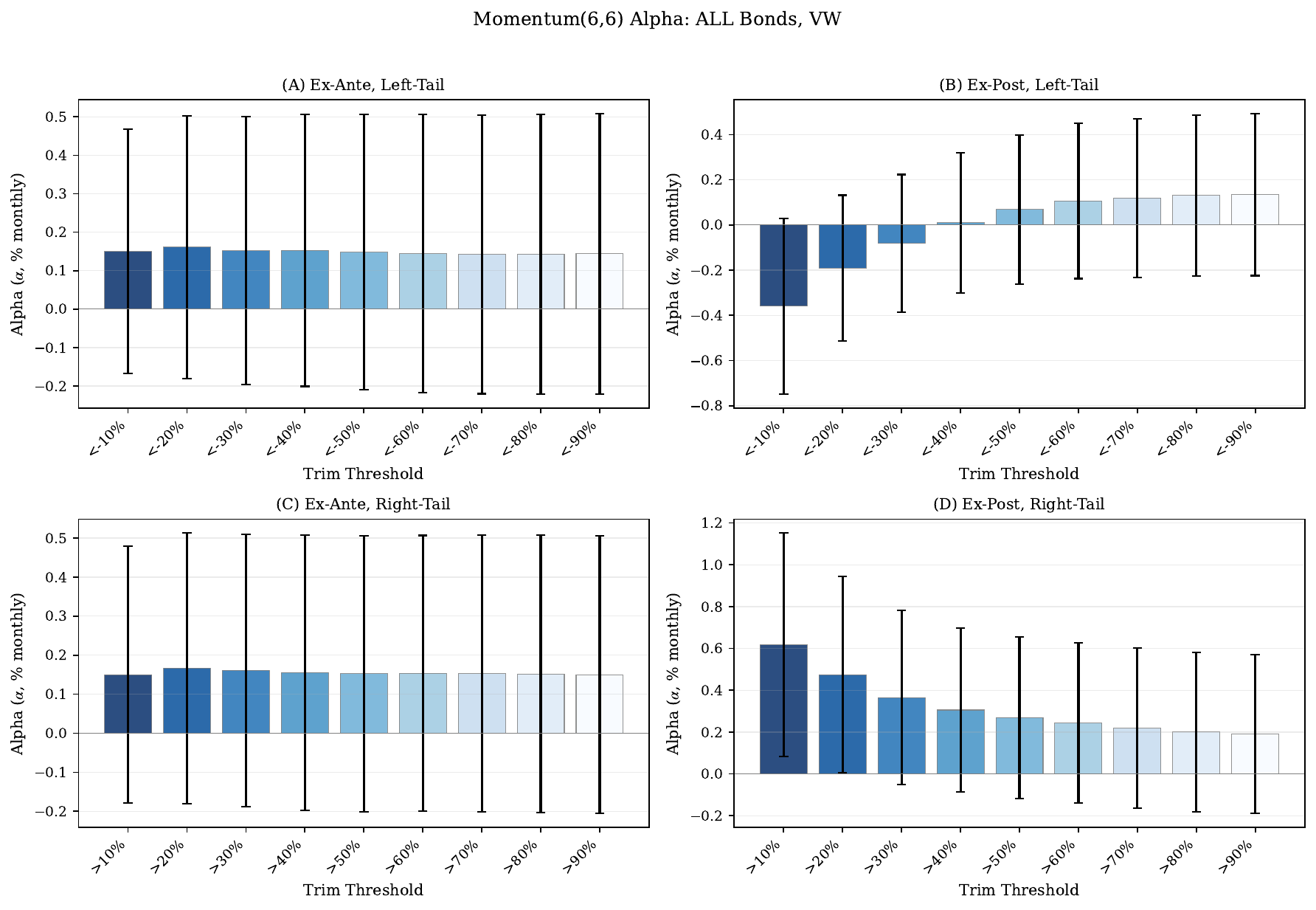}}
\caption{Sensitivity of momentum alpha to return filtering thresholds.}
\vspace{-2mm}
\begin{justify}
\begin{spacing}{1}
\footnotesize{
The figure plots CAPMB alphas for the six-month momentum factor (\texttt{mom6\_1}), formed with a staggered six-month holding period following \citet*{jostova2013momentum}, across a range of return filtering thresholds. Panels~A and~C apply ex-ante filtering, excluding returns below the threshold only up to portfolio formation and excluding bonds with returns below the threshold over the prior month. Panels~B and~D apply ex-post filtering, adding full-sample return exclusions that embed look-ahead bias. Bars show $\alpha$ (in \% monthly) with 95\% confidence intervals using Newey-West standard errors with lags $= \lfloor T^{0.25} \rfloor$. The $x$-axis shows the trim threshold ($<$$-$$\tau$\% for left-tail, $>$$\tau$\% for right-tail). Value-weighted portfolios. Sample: 2003-03 to 2024-12, $T$=262.}
\end{spacing}
\end{justify}
\vspace{-12mm}
\label{fig:dua_figure_19}
\end{figure}

In Panel~D, right-tail ex-post trimming at a threshold of 10\% per month generates a large alpha of approximately 0.60\% that is statistically distinguishable from zero. The alpha declines monotonically as the threshold increases, requiring aggressive trimming of moderate positive returns to manufacture the premium. Panel~B shows the mirror image for left-tail trimming, where at aggressive thresholds ($<$$-$10\%), the alpha is large and \emph{negative}, the opposite of a momentum factor with a positive alpha. The alpha rises monotonically and crosses zero only at thresholds around $-$40\%. This pattern illustrates the arbitrariness of ex-post filtering: a researcher who chooses right-tail trimming finds that momentum appears to work, whereas a researcher who chooses left-tail trimming finds the opposite. Neither result reflects an implementable trading strategy.
Fig.~\ref{fig:lab_figure_1} documents the time variation in LAB. Panels~A and~B plot monthly LAB for factors sensitive to left-tail and right-tail winsorization. The bias exhibits strong positive spikes during periods of market distress, specifically the Great Recession (2008--2009), the European debt crisis (2011--2012), and the COVID-19 shock (March 2020). For left-tail factors, these spikes arise because winsorization dampens large negative returns concentrated in the long leg during stress. For right-tail momentum factors, winsorization truncates the rebounds of past losers in the short leg. Panels~C and~D confirm this pattern, with scatter plots of LAB against the VIX level showing a strong positive convex relationship. The magnitude of LAB rises sharply when market volatility is elevated.

\begin{figure}[h!]
\centering
\includegraphics[width=\textwidth,height=0.85\textheight,keepaspectratio]{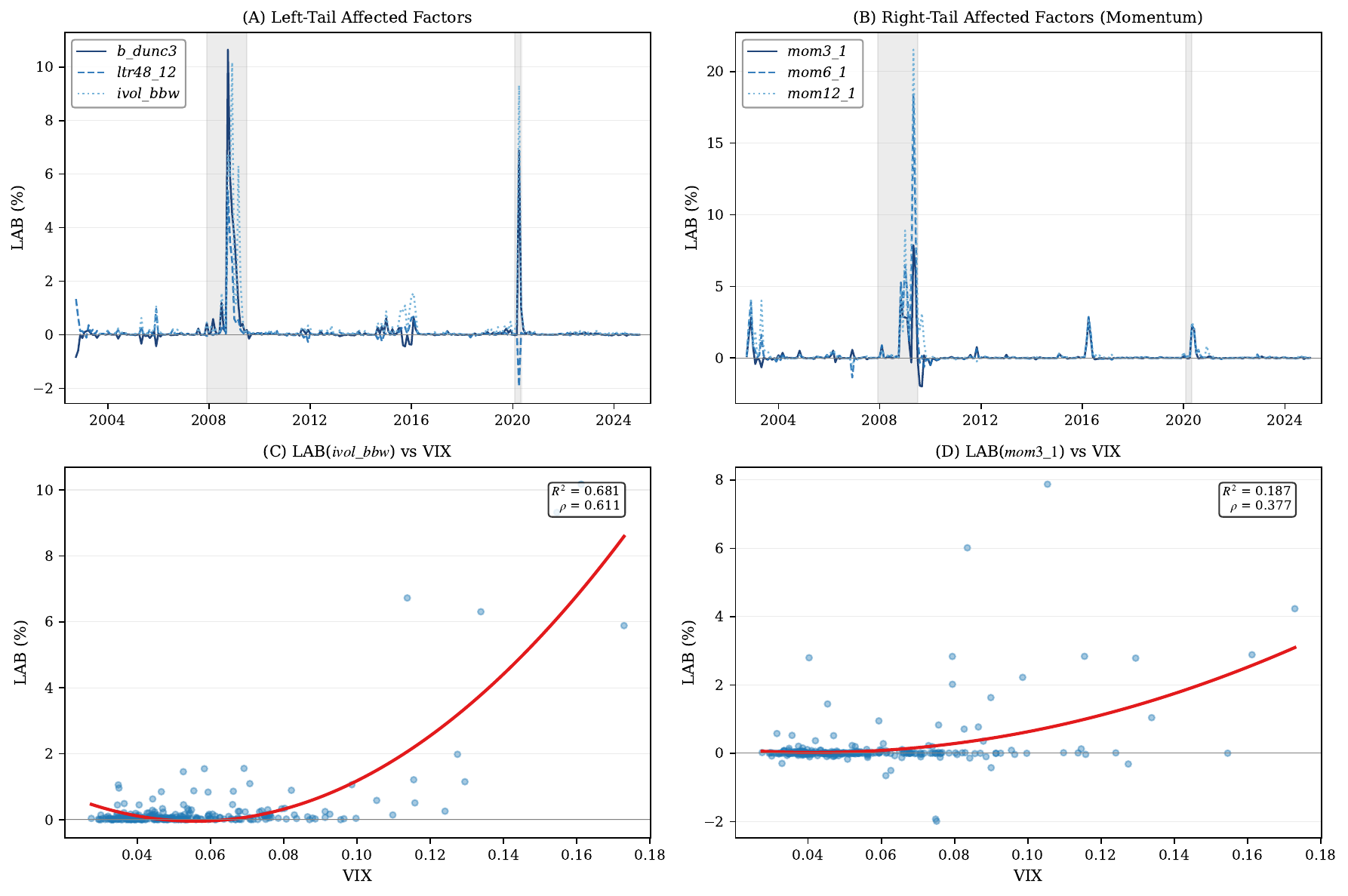}
\caption{Time variation in look-ahead bias.}
\vspace{-2mm}
\begin{justify}
\begin{spacing}{1}
\footnotesize{
The figure plots the time series of look-ahead bias and its relationship with market volatility for selected factors. Panels~A--B plot monthly LAB for left-tail factors (\texttt{b\_dunc3}, \texttt{ltr48\_12}, \texttt{ivol\_bbw}) and right-tail momentum factors (\texttt{mom3\_1}, \texttt{mom6\_1}, \texttt{mom12\_1}). The LAB for \texttt{b\_dunc3} and \texttt{ltr48\_12} is sign-corrected. Panels~C--D display scatter plots of LAB versus the VIX level with quadratic fit. Shaded regions in Panels~A--B indicate NBER recession periods (Great Recession: 2007:12--2009:06; COVID-19: 2020:02--2020:04). Sample: 2002-08 to 2024-12.}
\end{spacing}
\end{justify}
\vspace{-12mm}
\label{fig:lab_figure_1}
\end{figure}

Fig.~\ref{fig:lab_figure_2} shows cumulative returns for selected factors whose measured premia are sensitive to ex-post winsorization. The solid line plots the infeasible (winsorized) factor; the dashed line plots the feasible (unwinsorized) factor; the dotted line plots the cumulative LAB. For long-term reversal (\texttt{ltr48\_12}), idiosyncratic volatility (\texttt{ivol\_bbw}), and macroeconomic uncertainty (\texttt{b\_dunc3}), the cumulative LAB rises sharply during the Great Recession and COVID-19, precisely when the infeasible factor appears to outperform. The gap between infeasible and feasible cumulative returns widens during crises, creating the illusion that these factors provide hedging benefits. For momentum (\texttt{mom3\_1}), the same pattern emerges: the infeasible factor diverges from the feasible factor during stress episodes. In all cases, the ``apparent outperformance'' of the winsorized factor is attributable to the cumulative bias.

\begin{figure}[h!]
\centering
\includegraphics[width=\textwidth,height=0.85\textheight,keepaspectratio]{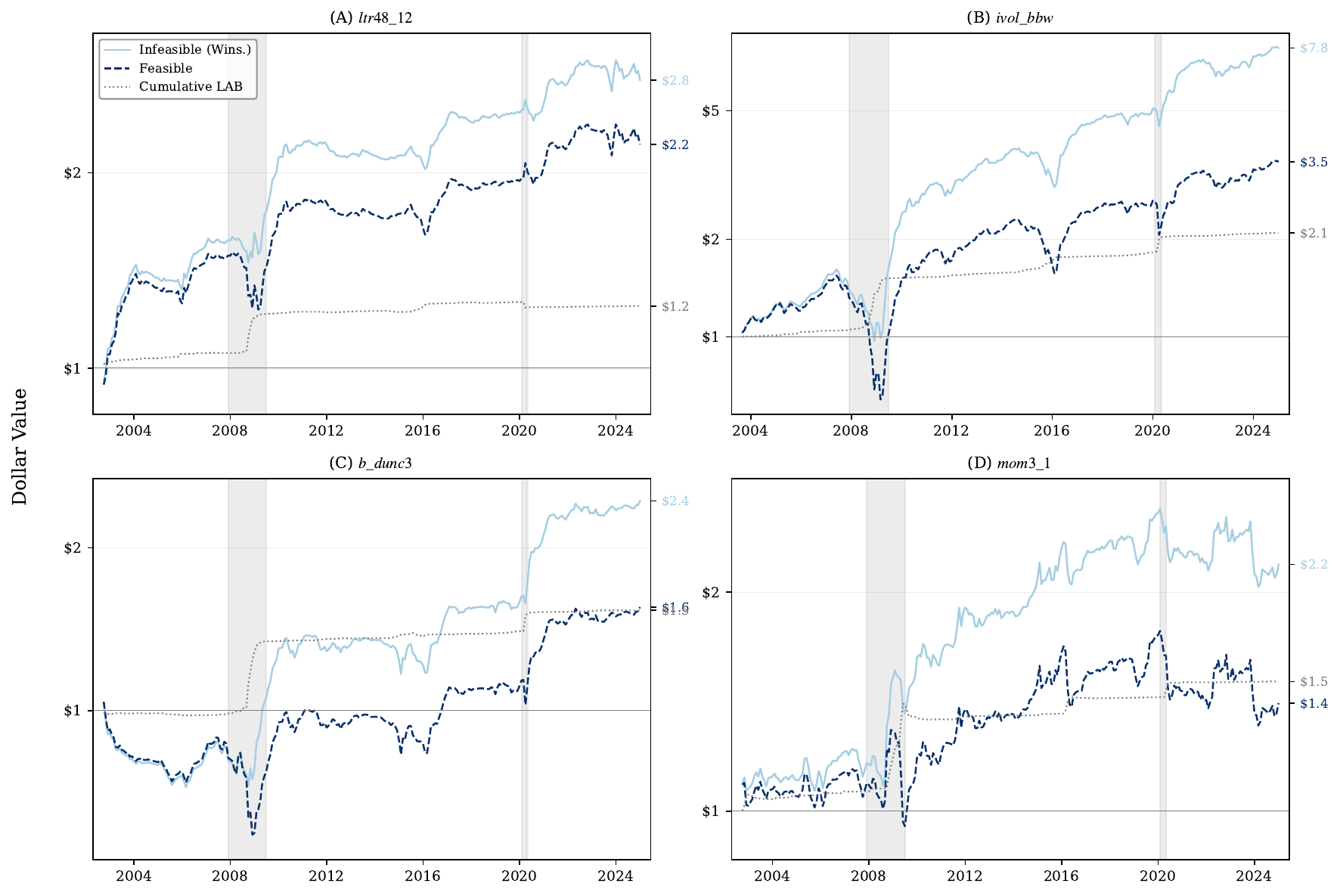}
\caption{Cumulative factor returns with and without ex-post winsorization.}
\vspace{-2mm}
\begin{justify}
\begin{spacing}{1}
\footnotesize{
The figure shows the growth of \$1 invested in four factors sensitive to ex-post return winsorization, under infeasible (with ex-post winsorization, solid light blue) and feasible (without winsorization, dashed dark blue) approaches. The dotted gray line tracks cumulative look-ahead bias, defined as the difference between the infeasible and feasible factor returns. The various panels plot \texttt{ltr48\_12} (long-term reversal), \texttt{ivol\_bbw} (idiosyncratic volatility), \texttt{b\_dunc3} (macro uncertainty beta), and \texttt{mom3\_1} (3-month momentum). Returns for \texttt{ltr48\_12} and \texttt{b\_dunc3} are sign-corrected. Shaded regions indicate NBER recession periods. The $y$-axis uses log scale. Sample: 2002-08 to 2024-12.}
\end{spacing}
\end{justify}
\vspace{-12mm}
\label{fig:lab_figure_2}
\end{figure}

Correcting for measurement error and look-ahead biases eliminates the two largest sources of inflated factor premia and alphas documented in the corporate bond literature. But estimated premia remain sensitive to data-filtering and portfolio-construction choices that are neither standardized nor typically reported. Section~\ref{sec:uncertainty} quantifies this residual fragility.

\section{Data and methodological uncertainty}\label{sec:uncertainty}
Even after correcting for measurement error (LIB) and ex-post filtering (LAB), corporate bond factor premia remain fragile. Variation from data-processing and portfolio-construction choices often exceeds both the point estimate and its sampling uncertainty. Sensitivity to specification choices is a well-documented concern in econometrics \citep*{leamer1983lets}, and recent work documents analogous fragility in equity factor research: \citet*{hou2020replicating} show that many equity factors fail to replicate under alternative data filters, and \citet*{jensen2023there} find that replicability depends on the specific methodological path. The corporate bond setting amplifies these concerns because wide bid-ask spreads, infrequent trading, and a short TRACE sample (2002--2024) magnify the influence of each filtering and aggregation decision.
Uncertainty in estimated factor premia arises at multiple stages of the empirical asset pricing pipeline. We distinguish between two sources of this variation. \emph{Data uncertainty} arises at portfolio formation and concerns which bond--month observations enter the admissible investment universe. Filtering choices such as return trimming thresholds, price range restrictions, and bid-ask bounce exclusions determine which bonds are available to trade at portfolio formation. \emph{Methodological uncertainty} arises after the investment universe has been fixed and concerns how bonds are allocated into portfolios to form factors. Portfolio granularity, breakpoint definitions, weighting schemes, and subsample restrictions all affect measured premia even when the underlying data are identical.
To quantify uncertainty arising from discretionary research choices, we follow \citet{Menkveld_etal_2024} and \citet*{Walter_etal_2024} and measure \emph{non-standard errors} (NSE).
Non-standard errors capture the variation in estimated factor premia attributable to alternative research and data-processing choices rather than sampling variation.
While standard errors quantify uncertainty from finite samples, non-standard errors reflect uncertainty stemming from methodological decisions that are typically unreported. We define the non-standard error for factor $v$ as the interquartile range of estimated premia across all estimation paths:
\begin{equation}
\text{NSE}^v = Q_{0.75}(\hat{\mu}^v) - Q_{0.25}(\hat{\mu}^v),
\end{equation}
where $Q_\alpha(\hat{\mu}^v)$ denotes the $\alpha$-quantile of the distribution of estimated premia $\hat{\mu}^v$ across paths. The ratio $\text{NSE}/\text{SE}$ measures the relative magnitude of research-design uncertainty compared to sampling uncertainty. A ratio exceeding unity indicates that researcher degrees of freedom contribute more to result variation than statistical noise. \citet*{soebhag2024non} raise a similar concern for equity portfolio sorts, showing that seemingly innocuous aggregation choices alter inference about factor premia. In our setting, the average ratio reaches 1.15 for data uncertainty and 1.45 for methodological uncertainty, meaning two researchers applying different filters or portfolio constructions to the same corporate bond data can reach potentially opposite conclusions about whether a given factor premium is statistically significant.
We apply this framework separately to data uncertainty and methodological uncertainty, reflecting their distinct stages in the research pipeline.

\subsection{Data uncertainty}
We study data uncertainty by varying ex-ante filtering rules that define the admissible corporate bond investment universe while holding portfolio construction fixed, with all factors using decile sorts, long the top decile (P10) and short the bottom decile (P1). All filters use only information available at portfolio formation month $t$, ensuring no look-ahead bias (see Fig.~\ref{fig:return_timeline} for the timeline). Filtering is applied independently each month, so a bond excluded from portfolio formation at month $t$ (because its characteristics violate the threshold computed from data through month $t$) is not held from $t$ to $t+1$, but may re-enter the portfolio in any future month $t+k$ if it satisfies the threshold at that date. For each cross-sectional sorting variable, we evaluate 108 alternative filter configurations spanning three categories:
\emph{Return Trimming (48 configurations).}
We exclude observations with monthly returns exceeding threshold $\tau,$ where $\tau \in \{0.20, 0.25, 0.30,\ldots, 0.95\}$ (16 breakpoints). Filters are applied to the left tail only (exclude $r < -\tau$), right tail only (exclude $r > \tau$), or both tails symmetrically, yielding $16 \times 3 = 48$ configurations \citep*[see, e.g.,][]{jostova2013momentum}.
\emph{Price Range Filters (30 configurations).}
We restrict the sample to bonds trading within price bounds expressed as a percentage of par value (\$1,000). The lower bound $P_L \in \{2, 4, 6,\ldots, 20\}$\% (i.e., \$20 to \$200) excludes deep-discount bonds, while the upper bound $P_U \in \{150, 165, 180,\ldots, 285\}$\% (i.e., \$1,500 to \$2,850) excludes extreme premium bonds. Each bound comprises 10 breakpoints. Filters are applied as lower bound only (10 configurations), upper bound only (10 configurations), or combined with matched pairs $[P_L, P_U] \in \{[2, 285], [4, 270], \ldots, [20, 150]\}$ (10 configurations), yielding 30 configurations total \citep*[see, e.g.,][]{BaiBaliWen_2019, bai2021there, bali2021long, bali2023macroeconomic, ChungWangWu_2019}.
\emph{Bid-Ask Bounce Filters (30 configurations).}
We exclude observations where the absolute price change from the previous trade exceeds $\phi,$ where $\phi \in \{0.01, 0.02,\ldots, 0.10\}$ (10 breakpoints). Filters are applied to negative changes only (left tail), positive changes only (right tail), or both directions symmetrically, yielding $10 \times 3 = 30$ configurations \citep*[see, e.g.,][]{bessembinder2008measuring, chordia2017capital}.
Combining the three filter categories yields $48 + 30 + 30 = 108$ filter paths per signal. With equal-weighted (EW) and value-weighted (VW) portfolios across the full sample, investment grade (IG), and non-investment grade (NIG) subsamples, we obtain $108 \times 2 \times 3 = 648$ alternative estimation paths per factor, or 69,984 paths across all 108 signals. For each path, we compute (i) the average long-short return (premium), (ii) its Newey-West $t$-statistic with lags $\lfloor T^{0.25} \rfloor$, and (iii) the CAPMB alpha from regressing factor returns on the bond market factor.

Table~\ref{tab:nse_by_cluster} summarizes non-standard errors and the NSE/SE ratio by factor cluster across all 69,984 filter paths. The average non-standard error for premia is 0.35\% per month, compared to an average premium of 0.33\% per month. The interquartile range of filter-induced variation exceeds the point estimate. For CAPMB alphas, the average NSE of 0.27\% per month is nearly twice as large as the average alpha of 0.15\% per month.

\begin{table}[!ht]
\caption{Non-standard errors and data uncertainty by factor cluster.}
\begin{spacing}{1}
{\footnotesize
The table reports non-standard errors from data-filtering variation across 69,984 filter paths (648 per factor). For each cluster, we report the mean ($\bar{\mu}$, $\bar{\alpha}$), median (Med), and interquartile range (NSE $= Q_{0.75} - Q_{0.25}$) of factor premia and alphas across all filter paths. Ratio measures methodological uncertainty relative to sampling uncertainty, defined as $\text{Ratio} = \text{std}(r^v_p) / \overline{\sigma^v_p}$, where $\text{std}(r^v_p)$ is the standard deviation of the estimated premium or alpha across paths and $\overline{\sigma^v_p}$ is the corresponding average Newey-West standard error. $N_{\text{paths}}$ is the number of filter configurations analyzed. Values in percent per month. Sample: 2002-09 to 2024-12, $T$=268.}
\end{spacing}
\vspace{-4mm}
\begin{center}
\label{tab:nse_by_cluster}
\vspace{2mm}
\scalebox{0.85}{%
\begin{tabular}{l cccc cccc r}
\toprule
 & \multicolumn{4}{c}{Premium ($\mu$)} & \multicolumn{4}{c}{Alpha ($\alpha$)} & \\
\cmidrule(lr){2-5} \cmidrule(lr){6-9}
Cluster & $\bar{\mu}$ & Med & NSE & Ratio & $\bar{\alpha}$ & Med & NSE & Ratio & $N_{\text{paths}}$ \\
\midrule
Spreads, Yields, Size & 0.45 & 0.32 & 0.48 & 1.43 & 0.16 & 0.14 & 0.31 & 1.34 & 5,832 \\
Value & 0.43 & 0.38 & 0.30 & 1.43 & 0.27 & 0.22 & 0.22 & 0.97 & 3,240 \\
Momentum \& Reversal & 0.24 & 0.17 & 0.25 & 0.93 & 0.16 & 0.15 & 0.35 & 1.12 & 13,608 \\
Illiquidity & 0.34 & 0.25 & 0.34 & 1.34 & 0.12 & 0.10 & 0.16 & 0.91 & 8,424 \\
Volatility \& Risk & 0.44 & 0.39 & 0.49 & 1.36 & 0.09 & 0.04 & 0.19 & 0.85 & 10,368 \\
Market Risk & 0.30 & 0.32 & 0.35 & 0.80 & 0.09 & 0.10 & 0.37 & 1.01 & 5,832 \\
Credit \& Default Risk & 0.56 & 0.53 & 0.42 & 0.79 & 0.16 & 0.15 & 0.20 & 0.65 & 2,592 \\
Volatility \& Liquidity Risk & 0.21 & 0.18 & 0.26 & 0.87 & 0.15 & 0.11 & 0.34 & 1.03 & 8,424 \\
Macro \& Other Risk & 0.29 & 0.25 & 0.27 & 0.85 & 0.19 & 0.19 & 0.18 & 0.75 & 11,664 \\
\midrule
All & 0.33 & 0.26 & 0.35 & 1.15 & 0.15 & 0.13 & 0.27 & 1.01 & 69,984 \\
\bottomrule
\end{tabular}
}
\end{center}
\end{table}

The NSE varies across clusters but exceeds the average premium for every cluster except Value, Credit \& Default Risk, and Macro \& Other Risk. Spreads, Yields, Size (NSE 0.48\%, mean premium 0.45\%) and Volatility \& Risk (NSE 0.49\%, mean premium 0.44\%) show the widest filter-induced dispersion. The NSE/SE ratio exceeds unity for four of nine clusters, with Spreads, Yields, Size (1.43), Value (1.43), and Volatility \& Risk (1.36) showing the largest ratios.
In the Internet Appendix, we analyze  the filter types that generate significant alphas where the baseline does not. A filter path is classified as an improvement if the associated CAPMB alpha $t$-statistic exceeds 1.96, exceeds the baseline, and produces a larger alpha. Most clusters show improvement rates of 0--3\%: Credit \& Default Risk shows 0\% across all configurations, Volatility \& Risk 0--2\%, Market Risk 0--3\%, and Macro \& Other Risk 0--1\%. The Value cluster is the exception, with 21\% of left-tail trimming paths and 15\% of symmetric trimming paths producing significant improvements.
The Internet Appendix displays premium $t$-statistics for the top four factors within each cluster across all 648 filter paths.
We do not take a stand on which filters are economically justified, nor do we test every combination, as applying filters jointly across categories would produce approximately 4.2 million feasible configurations per signal. Even when each filter is tested individually, the NSE exceeds the average premium (0.35\% vs 0.33\% per month) and the average alpha (0.27\% vs 0.15\%), and most filter paths reduce rather than increase measured premia. Applying ad hoc data filters to transaction-level-cleaned data is more likely to destroy factor premia and alphas than to reveal them. The few configurations that do produce significant alphas are concentrated in a single cluster (Value) and a single filter type (left-tail return exclusions at portfolio formation). Value is also the only cluster that produces significant premia under the baseline (unfiltered) specification, so the improvement reflects amplification of an already detectable signal.

\subsection{Methodological uncertainty}
While data uncertainty originates from filtering bonds that enter the investable set, methodological uncertainty stems from how those bonds are sorted into portfolios to form factors. Five dimensions span the design space: portfolio granularity (tercile, quintile, decile), breakpoint universe (all bonds, investment grade only, large bonds only), weighting scheme (EW, VW), rating subsample (all, IG, NIG), and maturity bucket (all, short, intermediate, long).\footnote{Breakpoint variation changes a bond's relative rank given a reference distribution, while rating subsampling changes the set of bonds included in the sort. For example, ``investment grade only'' breakpoints compute quantile boundaries from the cross-section of a signal within the IG universe, then allocate all bonds (including high-yield) to portfolios based on those boundaries. The equity analogue is Fama and French's use of NYSE-only size breakpoints to sort all stocks.} Researchers routinely treat these dimensions as secondary design choices \citep*[see also][]{Walter_etal_2024}. Yet as \citet*{linnainmaa2018history} show for equity factors, the mapping from a signal to a portfolio payoff can be as consequential as the signal.
Combining these dimensions yields $2 \times 3 \times 3 \times 3 \times 4 = 216$ candidate portfolio constructions per signal. Not all are admissible. Breakpoints computed on IG bonds cannot sort NIG bonds (the universes do not overlap). This eliminates 24 specifications per signal. When portfolios are formed, e.g., exclusively within IG bonds, IG-based breakpoints coincide with full-universe breakpoints. This eliminates a further 24 redundant specifications. The effective grid contains 168 economically distinct portfolio constructions per signal, yielding $168 \times 108 = 18{,}144$ factor return series. Sixteen specifications (0.088\%) produce months with empty long or short legs, all in decile sorts within the high-yield, long-maturity segment.\footnote{The Internet Appendix confirms that even the most restrictive configurations contain a sufficient number of bonds in the long and short portfolios.} Excluding these leaves 18,128 well-defined factor return series.\footnote{All the main results are robust to excluding decile sorts from the specification grid.}
Table~\ref{tab:mu_nse_by_cluster} reports non-standard errors across these 168 portfolio construction paths. The average NSE for premia is 0.31\% per month, compared to an average premium of 0.22\%, and the interquartile range (IQR) of methodology-induced variation exceeds the point estimate. For CAPMB alphas, the average NSE of 0.20\% per month exceeds the average alpha of 0.11\%. The premium-related NSE/SE ratio averages 1.45 and exceeds unity for all nine factor clusters, ranging from 1.02 (Market Risk) to 2.00 (Spreads, Yields, Size).
\begin{table}[!ht]
\caption{Non-standard errors and methodological uncertainty by factor cluster.}
\label{tab:mu_nse_by_cluster}
\begin{spacing}{1}
{\footnotesize
The table reports non-standard errors from portfolio-construction variation across 18,128 non-degenerate methodology paths (up to 168 per factor). For each cluster, we report the mean ($\bar{\mu}$, $\bar{\alpha}$), median (Med), and interquartile range (NSE $= Q_{0.75} - Q_{0.25}$) of factor premia and alphas across all methodology paths. Ratio measures methodological uncertainty relative to sampling uncertainty, defined as $\text{Ratio} = \text{std}(r^v_p) / \overline{\sigma^v_p}$, where $\text{std}(r^v_p)$ is the standard deviation of the estimated premium or alpha across paths and $\overline{\sigma^v_p}$ is the corresponding average Newey-West standard error. $N_{\text{paths}}$ is the number of methodological configurations analyzed. Values in percent per month. Sample: 2002-09 to 2024-12, $T$=268.}
\end{spacing}
\vspace{-4mm}
\begin{center}

\vspace{2mm}
\scalebox{0.85}{%
\begin{tabular}{l cccc cccc r}
\toprule
 & \multicolumn{4}{c}{Premium ($\mu$)} & \multicolumn{4}{c}{Alpha ($\alpha$)} & \\
\cmidrule(lr){2-5} \cmidrule(lr){6-9}
Cluster & $\bar{\mu}$ & Med & NSE & Ratio & $\bar{\alpha}$ & Med & NSE & Ratio & $N_{\text{paths}}$ \\
\midrule
Spreads, Yields, Size & 0.25 & 0.22 & 0.46 & 2.00 & 0.09 & 0.09 & 0.34 & 1.57 & 1,508 \\
Value & 0.29 & 0.25 & 0.27 & 1.62 & 0.19 & 0.17 & 0.20 & 1.13 & 838 \\
Momentum \& Reversal & 0.13 & 0.09 & 0.24 & 1.28 & 0.08 & 0.06 & 0.23 & 1.21 & 3,520 \\
Illiquidity & 0.22 & 0.16 & 0.27 & 1.53 & 0.10 & 0.08 & 0.14 & 1.02 & 2,184 \\
Volatility \& Risk & 0.32 & 0.26 & 0.43 & 1.66 & 0.10 & 0.07 & 0.19 & 1.07 & 2,688 \\
Market Risk & 0.21 & 0.20 & 0.27 & 1.02 & 0.09 & 0.08 & 0.24 & 1.15 & 1,512 \\
Credit \& Default Risk & 0.43 & 0.38 & 0.35 & 1.11 & 0.16 & 0.13 & 0.20 & 0.79 & 672 \\
Volatility \& Liquidity Risk & 0.15 & 0.12 & 0.20 & 1.10 & 0.11 & 0.09 & 0.23 & 1.10 & 2,184 \\
Macro \& Other Risk & 0.21 & 0.19 & 0.26 & 1.12 & 0.14 & 0.13 & 0.16 & 0.89 & 3,022 \\
\midrule
All & 0.22 & 0.17 & 0.31 & 1.45 & 0.11 & 0.09 & 0.20 & 1.14 & 18,128 \\
\bottomrule
\end{tabular}
}
\end{center}
\end{table}

The Internet Appendix displays CAPMB alpha $t$-statistics for the top four factors within each cluster across all 168 portfolio constructions. For many signals, the median $t$-statistic lies below 1.96, and the interquartile range spans both significant and insignificant values.
Across these 36 factors (four top factors $\times$ nine clusters), 12 produce both positive and negative long--short premia under alternative portfolio constructions, and 29 produce both positive and negative CAPMB alphas. The Internet Appendix decomposes improvement rates across portfolio construction dimensions. Out of all methodological variations, 94\% of them do not improve the CAPMB alpha relative to the baseline. Value is the largest improver (16\% overall, 36\% in long-maturity constructions), and long maturity is the primary driver across clusters (9.5\% vs 1.9\% for intermediate maturity). As with data uncertainty, Value is the only cluster with a detectable baseline premium.
The Internet Appendix displays premium $t$-statistics across all constructions for the top four factors per cluster.
Given that 94\% of the specifications fail to improve the CAPMB alpha relative to the baseline, factor construction choices are better guided by economic reasoning rather than by searching across the specification space.

\section{Conclusion}\label{sec:conclusion}

Corporate bond factor research faces a replication crisis. Across a `factor zoo' of 108 signals spanning nine thematic clusters, most previously documented factor alphas reflect two biases rather than genuine risk-adjusted performance. Latent Implementation Bias (LIB) inflates measured factor premia for price-based factors (yields, credit spreads, value, and short-term reversal) by up to 91\%. LIB arises from two sources: the same noisy transaction price enters both the sorting signal and the return denominator, creating a correlated 
errors-in-variables (CEIV) problem, and the observed transaction price is not executable in over-the-counter (OTC) markets. Look-Ahead Bias (LAB) inflates measured factor premia for momentum, long-term reversal, idiosyncratic volatility, downside risk, and macroeconomic exposure factors by 57--100\% through ex-post return filtering that embeds future information into portfolio construction. A third source of fragility persists even after both biases are corrected. Across 648 data-filtering configurations using only ex-ante information, non-standard errors average 0.35\% per month, exceeding the average premium of 0.33\%. Across 168 methodological configurations holding the investment universe fixed, the NSE/SE ratio averages 1.45 and exceeds unity for all nine factor clusters. Twelve of 36 top factors flip sign depending on the portfolio construction chosen. As emphasized in the Internet Appendix, across 432 factor-specification combinations, only 26 (6.0\%) bond CAPM alphas survive after false discovery rate (FDR) correction. The survivors concentrate among credit-spread-based value factors. Applying FDR to mean return $p$-values produces consistent results, with 119 of 432 nominally significant but only 22 surviving correction. Across five illiquidity factors, no CAPMB alpha is distinguishable from zero.

Our contribution extends beyond identifying these replication failures. We provide open source data infrastructure, software, and a reproducible framework for corporate bond asset pricing research. The companion website \href{https://openbondassetpricing.com/}{Open Bond Asset Pricing} hosts daily bond data constructed through a two-stage TRACE cleaning pipeline, pre-formed factors for all 108 signals, and data reports documenting every cleaning step. The data are also available through \href{https://wrds-www.wharton.upenn.edu/pages/get-data/contributed-data-forms/corporate-bond-data-dickerson-monthly/}{WRDS Contributed Data}. The companion software \href{https://pypi.org/project/PyBondLab/}{\texttt{PyBondLab}} provides the infrastructure for reproducible factor construction, with built-in signal gap procedures, ex-ante filtering, and flexible portfolio configurations. The replication code is publicly available on \href{https://github.com/Alexander-M-Dickerson/trace-data-pipeline}{GitHub}.

These resources define a path forward. Credible corporate bond factor research requires three elements: breaking the shared-price link through gap procedures that ensure the signal and the return denominator use different prices, filtering only with information available at portfolio formation, and reporting economically motivated factor specifications that limit the effects of data mining. Finally, given the nontrivial trade execution costs in OTC corporate bond markets, a natural extension is to evaluate whether the surviving factors remain profitable after accounting for corporate bond trade execution costs.

\clearpage
{\footnotesize
\singlespacing 
\setlength{\bibsep}{1pt}

}
\pagebreak


\clearpage
\newpage
\appendix

\begin{center}
 {\LARGE \textbf{Appendix}}   
\end{center}

\renewcommand{\theequation}{A.\arabic{equation}}%
\renewcommand{\thefigure}{A.\arabic{figure}} \setcounter{figure}{0}
\renewcommand{\thetable}{A.\arabic{table}} \setcounter{table}{0}
\renewcommand{\theHfigure}{A.\arabic{figure}}
\renewcommand{\theHtable}{A.\arabic{table}}
\renewcommand{\theHequation}{A.\arabic{equation}}

\vspace{-.25cm}
\section{Data filters and monthly returns} \label{app:filters}

This appendix provides detailed specifications for the transaction-level and daily-level filters described in Section~\ref{sec:osbap}, followed by the formal definitions of monthly returns and the treatment of defaulted bonds. All thresholds are expressed in percentage-of-par units (\$1{,}000 face value), so a threshold of 35 corresponds to \$350. Section~\ref{sec:uncertainty} examines the sensitivity of our results to these parameter choices across 648 filter configurations. All filters are implemented in Python and available at \href{https://github.com/Alexander-M-Dickerson/trace-data-pipeline}{\nolinkurl{github.com/Alexander-M-Dickerson/trace-data-pipeline}}.

\subsection{Decimal shift corrector}

Corporate bond prices in TRACE are expressed as a percentage of par value, where par equals \$1,000 for all bonds in our sample. A price of 100 represents \$1,000 (100\% of par); a price of 98.5 represents \$985. Prices occasionally exhibit multiplicative errors arising from incorrect decimal placement during data entry (for example, a price of 98.50 recorded as 985.0 or 9.85). The decimal shift corrector identifies and corrects such errors while preserving legitimate price movements. Further details and worked examples are available at \href{https://github.com/Alexander-M-Dickerson/trace-data-pipeline/blob/main/stage0/README_decimal_shift_corrector.md}{\nolinkurl{github.com/.../README_decimal_shift_corrector.md}}.

\paragraph{Anchor construction.}
For each transaction $i$ with observed price $P_i$, we construct a \emph{rolling unique-median anchor} $A_i$: the median of distinct prices within a centered window of $2w+1$ observations (default $w=5$, yielding 11 transactions). When centered windows are unavailable near sample boundaries, the filter falls back to forward-looking or backward-looking windows, and if both are unavailable, to the median of the bond's full price series.

\paragraph{Error measurement.}
The \emph{raw relative error} measures how far the observed price deviates from the anchor:
\begin{equation}
\varepsilon_i^{\text{raw}} = \frac{|P_i - A_i|}{A_i}.
\end{equation}
A transaction qualifies for correction testing only if this raw error exceeds the gate threshold $\tau_{\text{bad}} = 0.05$ (5\%), establishing that a meaningful discrepancy exists.

\paragraph{Candidate correction.}
We test four multiplicative correction factors: $\mathcal{F} = \{0.1, 0.01, 10, 100\}$, corresponding to common decimal-shift scenarios. For each candidate factor $f \in \mathcal{F}$, the corrected price is $P_i^{(f)} = f \cdot P_i$, and the \emph{corrected relative error} is
\begin{equation}
\varepsilon_i^{(f)} = \frac{|P_i^{(f)} - A_i|}{A_i}.
\end{equation}

\paragraph{Acceptance gates.}
A correction is accepted only if all four conditions hold:
\begin{enumerate}
\item \textbf{Raw error gate:} $\varepsilon_i^{\text{raw}} > \tau_{\text{bad}} = 0.05$ (5\%), confirming a meaningful discrepancy.
\item \textbf{Alignment gate:} At least one of: (a)~$\varepsilon_i^{(f)} \leq \tau_{\text{pct}} = 0.02$ (2\%); (b)~the absolute error $|P_i^{(f)} - A_i| \leq \tau_{\text{abs}} = 8.0$ price points; or (c)~both $A_i$ and $P_i^{(f)}$ fall within $\pm 15$ points of par (par-proximity rule).
\item \textbf{Improvement requirement:} $\varepsilon_i^{(f)} \leq \gamma \cdot \varepsilon_i^{\text{raw}}$, where $\gamma = 0.20$; the corrected error must be at most 20\% of the raw error.
\item \textbf{Plausibility check:} The corrected price lies within $[5, 300]$ (5\% to 300\% of par).
\end{enumerate}
When multiple factors satisfy all gates, we select the factor minimizing $\varepsilon_i^{(f)}$.

Table~\ref{tab:decimal_params} summarizes the default parameter values.

\begin{table}[h!]
\centering
\caption{Decimal shift corrector parameters.}
\label{tab:decimal_params}
\begin{tabular}{llp{7cm}}
\toprule
Parameter & Default & Description \\
\midrule
$w$ & 5 & Half-window size (effective window = $2w+1 = 11$) \\
$\tau_{\text{pct}}$ & 0.02 & Relative error acceptance threshold \\
$\tau_{\text{abs}}$ & 8.0 & Absolute error acceptance threshold (price points) \\
$\tau_{\text{bad}}$ & 0.05 & Minimum raw error to consider correction \\
$\gamma$ & 0.20 & Required proportional improvement \\
$[\underline{P}, \bar{P}]$ & $[5, 300]$ & Plausible corrected price range \\
par-proximity band & 15.0 & Par-proximity tolerance (points from par) \\
\bottomrule
\end{tabular}
\end{table}

\subsection{Bounce-back filter}

The bounce-back filter detects transient price spikes (large deviations from baseline that quickly revert) which typically indicate data entry errors rather than genuine market movements. Further details and worked examples are available at \href{https://github.com/Alexander-M-Dickerson/trace-data-pipeline/blob/main/stage0/README_bounce_back_filter.md}{\nolinkurl{github.com/.../README_bounce_back_filter.md}}.
The detection threshold is $\tau = 35$ price points (35\% of par, or \$350 on a \$1{,}000 face-value bond).

\paragraph{Anchor construction.}
For each transaction $i$ with observed price $P_i$, we compute a trailing baseline $B_i$ as the median of unique prices from the prior $w$ trades (default $w=5$):
\begin{equation}
B_i = \text{median}\bigl(\text{unique}(\{P_{i-w}, \ldots, P_{i-1}\})\bigr).
\end{equation}
Taking unique prices removes repeated-print bias from consecutive identical trades.

\paragraph{Candidate detection.}
An observation becomes a bounce-back candidate if any of the following three conditions holds:
\begin{enumerate}
\item \textbf{Large one-step jump:} $|\Delta P_i| \geq \tau - \delta$, where $\Delta P_i = P_i - P_{i-1}$ and $\delta = 1$ is a small slack tolerance.
\item \textbf{Large displacement from baseline:} $|P_i - B_i| \geq \tau - \delta$.
\item \textbf{Par-spike heuristic:} $P_i = 100$ (par) and $|P_i - B_i| \geq \alpha \cdot \tau$, where $\alpha = 0.25$.
\end{enumerate}

\paragraph{Reversion detection.}
Once a candidate is identified, the filter scans forward up to $L=5$ rows seeking evidence of reversion via either:
\begin{itemize}
\item \textbf{Path A (opposite move):} A subsequent price change of opposite sign with magnitude $\geq \tau - \delta$.
\item \textbf{Path B (return to anchor):} A subsequent price within $\alpha \cdot \tau$ of the original baseline $B_i$.
\end{itemize}

\paragraph{Flagging logic.}
Upon detecting reversion, the filter: (i)~reassigns the flag to the preceding row if it shows greater displacement from baseline ($\geq 5$ points); (ii)~extends flags forward within a five-row span while prices remain displaced; and (iii)~flags persistent par blocks (runs of $\ell_{\min} \geq 3$ consecutive trades at par) with a two-row cooldown to suppress cascading flags.

Table~\ref{tab:bounce_params} summarizes the default parameter values.

\begin{table}[h!]
\centering
\caption{Bounce-back filter parameters.}
\label{tab:bounce_params}
\begin{tabular}{llp{7cm}}
\toprule
Parameter & Default & Description \\
\midrule
$\tau$ & 35.0 & Minimum price change to trigger detection (\$350) \\
$L$ & 5 & Forward rows scanned for reversion \\
$w$ & 5 & Trailing window for anchor calculation \\
$\alpha$ & 0.25 & Tolerance fraction for return-to-anchor criterion \\
$\ell_{\min}$ & 3 & Minimum consecutive par trades to flag \\
cooldown & 2 & Rows skipped after flagging par blocks \\
\bottomrule
\end{tabular}
\end{table}

\subsection{Distressed bond filters}

After aggregating transactions to daily frequency, a second filtering stage targets anomalies in distressed bond prices. Four sub-filters address distinct error patterns, each using lookback and lookforward windows of $L=5$ days and ratio-based thresholds that adapt to varying price scales. All prices are expressed as a percentage of par (\$1,000 face value). Further details and worked examples are available at \href{https://github.com/Alexander-M-Dickerson/trace-data-pipeline/blob/main/stage1/README_distressed_filter.md}{\nolinkurl{github.com/.../README_distressed_filter.md}}.

\paragraph{Filter 1: Anomaly detection (downward outliers).}
This filter identifies isolated ultra-low prices far below surrounding observations. A price $P_t$ on day $t$ becomes a candidate if either: (i) $P_t < \tau_{\text{low}} = 0.10$ (i.e., below 0.10\% of par, or \$1), or (ii) $P_t$ matches a suspicious round number in $\{0.001, 0.01, 0.05, 0.10, 0.25, 0.50, 1.00\}$. To assess whether the candidate is anomalous, we collect all prices from the surrounding $\pm L$-day window that exceed $P_t$, and compute their median $M_{\text{surr}}$. If the surrounding prices are much higher than the candidate (specifically, if $M_{\text{surr}} / P_t \geq \rho_{\text{anomaly}} = 3.0$), then the candidate is flagged as a likely data error. 
\paragraph{Filter 2: Spike detection (upward outliers).}
This filter detects temporary upward price spikes that quickly revert. A price $P_t$ becomes a candidate if either: (i) $P_t > \tau_{\text{high}} = 5.0$ (i.e., above 5\% of par, or \$50), or (ii) $P_t$ is a round number exceeding 0.50 (\$5). To confirm a spike, we compute the pre-spike median $M_{\text{pre}}$ from the $L$ days preceding day $t$. The spike is flagged only if two conditions hold. First, the spike magnitude satisfies $P_t / M_{\text{pre}} \geq \rho_{\text{spike}} = 3.0$ (the price is at least $3\times$ the pre-spike level). Second, recovery occurs within the subsequent $L$ days: at least one price falls to $P_j \leq \rho_{\text{recovery}} \cdot M_{\text{pre}}$, where $\rho_{\text{recovery}} = 2.0$.

\paragraph{Filter 3: Plateau detection.}
This filter identifies consecutive days with identical ultra-low or round prices, often indicating stale or placeholder values. A price $P_t$ opens a candidate if either: (i) $P_t < \tau_{\text{plateau}} = 0.15$ (below 0.15\% of par, or \$1.50), or (ii) $P_t$ is a round number. The algorithm identifies runs of $\ell \geq \ell_{\min} = 2$ consecutive days with exactly the same price. Let $P_{\text{pre}}$ denote the price on the day before the plateau begins, and $P_{\text{post}}$ denote the price on the day after the plateau ends. The plateau is flagged if any of three conditions holds: (i) $P_{\text{pre}} / P_t \geq \rho_{\text{plateau}} = 3.0$; (ii) $P_{\text{post}} / P_t \geq \rho_{\text{plateau}}$; or (iii) $P_t$ is a round number. The intuition is that genuine low prices should persist, whereas placeholder values typically appear briefly before reverting.

\paragraph{Filter 4: Intraday inconsistency.}
This filter flags large discrepancies between different intraday price measures. For each day $t$, let $\mathcal{I}_t$ denote the set of available intraday prices (high, low) and let $\bar{P}_t = |\mathcal{I}_t|^{-1} \sum_{p \in \mathcal{I}_t} p$ be their mean. An observation is flagged if both conditions hold: (i) at least one price in $\mathcal{I}_t$ falls below $\tau_{\mathrm{intraday}} = 20.0$ (20\% of par, or \$200), and (ii) the intraday range exceeds $\gamma_{\mathrm{range}} = 0.75$ (75\%) of the mean:
\[
\frac{\max(\mathcal{I}_t) - \min(\mathcal{I}_t)}{\bar{P}_t} > 0.75.
\]
The first condition ensures the filter activates only for days with at least one low price, avoiding false positives on normally-priced bonds with typical intraday volatility.
Table~\ref{tab:distressed_params} summarizes the default parameter values.

\begin{table}[h!]
\centering
\caption{Distressed bond filter parameters.}
\label{tab:distressed_params}
\begin{tabular}{llp{7cm}}
\toprule
Parameter & Default & Description \\
\midrule
$\tau_{\text{low}}$ & 0.10\% & Ultra-low threshold (Filter 1) \\
$\tau_{\text{high}}$ & 5.0\% & High spike threshold (Filter 2) \\
$\tau_{\text{plateau}}$ & 0.15\% & Plateau opening threshold (Filter 3) \\
$\rho_{\text{anomaly}}$ & 3.0 & Ratio for downward outliers (Filter 1) \\
$\rho_{\text{spike}}$ & 3.0 & Spike magnitude ratio (Filter 2) \\
$\rho_{\text{recovery}}$ & 2.0 & Recovery ratio (Filter 2) \\
$\rho_{\text{plateau}}$ & 3.0 & Pre/post displacement ratio (Filter 3) \\
$\gamma_{\mathrm{range}}$ & 0.75 & Intraday range threshold (Filter 4) \\
$\tau_{\mathrm{intraday}}$ & 20.0\% & Price threshold for Filter 4 activation \\
$L$ & 5 & Lookback/lookforward window (days) \\
$\ell_{\min}$ & 2 & Minimum plateau length (days) \\
\bottomrule
\end{tabular}
\end{table}

\subsection{Monthly return computation}\label{app:returns}

We compute three return series for each bond-month, as referenced in Section~\ref{sec:data-monthly}.

\paragraph{Month-end return.}
The month-end return measures total performance from the end of month $t$ to the end of month $t+1$:
\begin{equation}\label{eq:ret-end}
r_{i,t+1}^{\text{End}} = \frac{P_{i,t+1}^{\text{end}} + AI_{i,t+1}^{\text{end}} + C_{i,t+1}}{P_{i,t}^{\text{end}} + AI_{i,t}^{\text{end}}} - 1,
\end{equation}
where $P_{i,t}$ is the volume-weighted clean price, $AI_{i,t}$ is accrued interest, and $C_{i,t+1}$ is the coupon payment (if any) during month $t+1$. A return is valid only if the bond trades within the last five business days of both month $t$ and month $t+1$ (NYSE calendar). This measure is the standard return used for comparability with prior literature.

\paragraph{Month-begin return.}
The month-begin return measures performance within a single month, from the first available trade to the last:
\begin{equation}\label{eq:ret-bgn}
r_{i,t+1}^{\text{Bgn}} = \frac{P_{i,t+1}^{\text{end}} + AI_{i,t+1}^{\text{end}} + C_{i,t+1}}{P_{i,t+1}^{\text{bgn}} + AI_{i,t+1}^{\text{bgn}}} - 1,
\end{equation}
where superscripts denote prices from the last and first five business days of month $t+1$, respectively. This return is valid only if the bond trades in both windows. Because the entry price post-dates signal observation, the month-begin return measures \emph{implementable} performance: it is the return a trader could earn after observing a signal at month-end $t$ and executing at the earliest available price in month $t+1$.

\paragraph{Return decomposition.}
The month-end return decomposes approximately as
\begin{equation}\label{eq:ret-decomp}
r_{i,t+1}^{\text{End}} \approx \underbrace{\frac{P_{i,t+1}^{\text{bgn}}}{P_{i,t}} - 1}_{\text{LIB}_{i,t+1}} \;+\; r_{i,t+1}^{\text{Bgn}},
\end{equation}
where $\text{LIB}_{i,t+1}$ is the clean-price return between the signal observation price and the execution price. Section~\ref{sec:mmn} discusses the economic interpretation of this decomposition.

\paragraph{Excess and duration-adjusted returns.}
We form excess returns by subtracting the Fama-French risk-free rate: $r_{i,t+1}^{x} = r_{i,t+1} - r_{t+1}^{f}$. Duration-adjusted excess returns instead subtract a duration-matched U.S.\ Treasury return:
\begin{equation}\label{eq:ret-dur}
r_{i,t+1}^{x} = r_{i,t+1} - r_{i,t+1}^{\text{Tsy}},
\end{equation}
where $r_{i,t+1}^{\text{Tsy}}$ is obtained by linearly interpolating key-rate U.S.\ Treasury bond returns from WRDS using each bond's modified duration, following \citet*{andreani2023computing}.

\subsection{Default handling}\label{app:defaults}

When a bond enters or trades under default, we assume coupon payments cease and adjust the return formula accordingly. We identify a bond as in default if its S\&P rating equals 22 (D) or its Moody's rating equals 21 (D).

\paragraph{Default-event return.}
When a bond transitions \emph{into} default at time $t+1$ (rated non-default at $t$, default at $t+1$), we set the coupon to zero. The return uses the clean price at default in the numerator and the dirty price before default in the denominator:
\begin{equation}\label{eq:ret-default}
r_{i,t+1}^{\text{def}} = \frac{P_{i,t+1}^{\text{clean}}}{P_{i,t} + AI_{i,t}} - 1.
\end{equation}

\paragraph{Trading-in-default return.}
When a bond remains in default at both $t$ and $t+1$, no coupon accrues. We compute the return on a flat basis using clean prices only:
\begin{equation}\label{eq:ret-flat}
r_{i,t+1}^{\text{flat}} = \frac{P_{i,t+1}^{\text{clean}}}{P_{i,t}^{\text{clean}}} - 1.
\end{equation}
This return is capped at the standard return~\eqref{eq:ret-end} to prevent cases where ignoring accrued interest produces an artificially higher return.

\section{Sorting-induced bias: Formal treatment}\label{app:sorting-bias-formal}
\renewcommand{\thetable}{B.\arabic{table}} \setcounter{table}{0}

A correlated errors-in-variables (CEIV) bias arises in long--short factor returns constructed from price-based signals whenever the observed sorting signal and the observed return load on a common disturbance. In this case, portfolio assignment becomes informative about the return measurement error, and the bias term in the long--short decomposition does not cancel out.

Our general result only requires that price measurement error exhibit cross-sectional variation, that this disturbance enter the observed signal in a non-degenerate way, and that the return measurement error contain a component driven by the same disturbance. In particular, when the signal measurement error is a non-constant (locally monotone) function of a cross-sectionally heterogeneous price disturbance, portfolio sorting induces selection on the disturbance directly. If the return measurement error also depends non-trivially on the same component, assets assigned to the long and short portfolios systematically differ in their expected return measurement error. As a result, even when the true signal has no predictive content for future returns, long--short portfolio returns are biased.

This bias is not driven by misranking per se. While measurement error can lead to misallocation of assets across portfolios, misranking alone does not generate a mechanical premium. The bias arises specifically from the shared component between the signal and the return denominator. Consistent with this distinction, we show that for non-price signals, such as characteristics whose measurement error does not enter the return calculation, measurement error may attenuate true premia or alter portfolio composition, but it does not generate the mechanical CEIV bias identified here.

Having established the existence and sign of the bias under general conditions, we then impose stronger distributional assumptions to characterize its magnitude. Under linear signal contamination and joint normality, we derive a closed-form expression for the bias as a function of portfolio granularity (finer sorts amplify the bias), the share of signal variance attributable to noise ($\rho$), and the volatility of price measurement error ($\sigma_\delta$).

\subsection{Measurement structure}

\begin{definition}[Price noise]\label{def:price_noise}
Following \citet*{blume1983biases}, the observed price $\hat{P}_{i,t}$ relates to the true price $P_{i,t}$ by
\begin{equation}
\hat{P}_{i,t} = (1 + \delta_{i,t}) P_{i,t},
\end{equation}
where $\delta_{i,t}$ is a mean-zero measurement error with finite variance $\sigma_\delta^2 \equiv \Var(\delta_{i,t}).$
\end{definition}

\begin{definition}[Signal decomposition]\label{def:signal_decomp}
The observed sorting signal $\hat{s}_{i,t}$ decomposes as
\begin{equation}
\hat{s}_{i,t} = s_{i,t} + \eta_{i,t},
\end{equation}
where $s_{i,t}$ is the true signal and $\eta_{i,t}$ is the signal's measurement error.
\end{definition}

\begin{definition}[Return decomposition]\label{def:return_decomp}
Define the observed return by $\hat{r}_{i,t+1}\equiv \hat{P}_{i,t+1}/\hat{P}_{i,t}-1$. Then,
\begin{equation}
\hat{r}_{i,t+1} = r_{i,t+1} + \epsilon_{i,t},
\end{equation}
where $r_{i,t+1}\equiv P_{i,t+1}/P_{i,t}-1$ is the true return, and
\begin{equation}\label{eq:eps_exact}
\epsilon_{i,t} \equiv \frac{1+\delta_{i,t+1}}{1+\delta_{i,t}}-1.
\end{equation}
A first-order approximation yields $\epsilon_{i,t}\approx \delta_{i,t+1}-\delta_{i,t}$.
\end{definition}

\begin{assumption}[Serial independence]\label{assum:iid}
For each $i$, $\delta_{i,t}$ is independent across $t$ with $\E[\delta_{i,t}]=0$ and $\Var(\delta_{i,t})=\sigma_\delta^2<\infty.$
\end{assumption}

\subsection{Portfolio formation}

Each month, we sort bonds by their observed signal $\hat{s}_{i,t}$ and form long-short factors from the extreme tails. Let $\alpha$ denote the tail fraction, that is, the share of bonds assigned to each extreme portfolio (e.g., $\alpha = 0.10$ for decile sorts, $\alpha = 0.20$ for quintile sorts). The long portfolio $L$ contains bonds in the top $\alpha$ fraction of the signal distribution; the short portfolio $S$ contains bonds in the bottom $\alpha$ fraction:
\[
L \equiv \{\hat{s}_{i,t}\ge c_L\},\qquad S \equiv \{\hat{s}_{i,t}\le c_S\},
\]
where $c_L$ and $c_S$ are the corresponding quantile cutoffs. The population long-short return is
\begin{equation}\label{eq:LS_def}
\hat r^{LS}_{t+1} \equiv \E[\hat r_{i,t+1}\mid L]-\E[\hat r_{i,t+1}\mid S].
\end{equation}
Throughout, expectations are with respect to the cross-section at time $t$.

\subsection{General mechanism: Why the bias does not cancel}

\subsubsection{Long-short decomposition}

\begin{lemma}[Long--short decomposition]\label{lem:decomp}
For any sorting rule and any return decomposition $\hat r=r+\epsilon$,
\begin{equation}\label{eq:decomp}
\hat r^{LS}_{t+1}
=
\underbrace{\E[r_{i,t+1}\mid L]-\E[r_{i,t+1}\mid S]}_{r^{LS}_{t+1}}
+
\underbrace{\E[\epsilon_{i,t}\mid L]-\E[\epsilon_{i,t}\mid S]}_{\mathrm{Bias}_{t+1}}.
\end{equation}
\end{lemma}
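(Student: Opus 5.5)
This identity is pure linearity of conditional expectation, so the plan is to substitute the return decomposition into the definition of the long--short return and regroup terms. No distributional assumption (in particular, not Assumption~\ref{assum:iid}) and no property of the sorting rule beyond $L$ and $S$ being well-defined conditioning events is needed.

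First, I fix the cross-section at time $t$ and take $L$ and $S$ as arbitrary events with positive probability. Any sorting rule, including the quantile rule $\{\hat s_{i,t}\ge c_L\}$ and $\{\hat s_{i,t}\le c_S\}$, produces such events. I assume $\E|r_{i,t+1}|<\infty$ and $\E|\epsilon_{i,t}|<\infty$, so that all conditional expectations exist. Since the finiteness of $\sigma_\delta^2$ and of true returns makes this standard, I state it rather than prove it.

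Second, I write $\hat r_{i,t+1}=r_{i,t+1}+\epsilon_{i,t}$ pointwise. For a general decomposition this is the defining identity. Under Definition~\ref{def:return_decomp}, $\epsilon_{i,t}$ is defined exactly by \eqref{eq:eps_exact}, so the identity holds without approximation: $(1+r)(1+\epsilon)-1$ would not be additive, but here $\epsilon$ is defined as the difference $\hat r - r$. I will check that point explicitly. Indeed,
\[
\frac{P_{i,t+1}}{P_{i,t}}\cdot\frac{1+\delta_{i,t+1}}{1+\delta_{i,t}}-1\neq r+\epsilon
\]
in general. So Definition~\ref{def:return_decomp}'s exact $\epsilon$ is additive only to first order, and the lemma's hypothesis ``any return decomposition $\hat r=r+\epsilon$'' should be read with $\epsilon\equiv\hat r-r$. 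I will note this: the lemma then holds exactly, and the first-order expression is only used later.

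Third, I apply linearity of conditional expectation on each leg:
\[
\E[\hat r_{i,t+1}\mid L]=\E[r_{i,t+1}\mid L]+\E[\epsilon_{i,t}\mid L],
\]
and similarly on $S$. Subtracting the two legs per \eqref{eq:LS_def} and regrouping gives \eqref{eq:decomp}.

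The only real obstacle is the subtlety in the second step, namely that the additive decomposition must be taken as the definition of $\epsilon$. Once this is made explicit, the algebra is immediate.
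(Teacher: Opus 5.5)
Your proposal is correct and is the paper's argument: substitute $\hat r_{i,t+1}=r_{i,t+1}+\epsilon_{i,t}$ into \eqref{eq:LS_def} and use linearity of conditional expectation on each leg before differencing.

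Your side remark is also right. With $\epsilon_{i,t}$ as in \eqref{eq:eps_exact} one gets $\hat r_{i,t+1}=r_{i,t+1}+\epsilon_{i,t}+r_{i,t+1}\epsilon_{i,t}$, so the lemma is exact only for $\epsilon_{i,t}\equiv\hat r_{i,t+1}-r_{i,t+1}$. Integrability, however, should simply be assumed rather than justified as you do: a finite $\sigma_\delta^2$ does not by itself make $\E|(1+\delta_{i,t+1})/(1+\delta_{i,t})|$ finite, for example when $\delta_{i,t}$ is normal.
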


\begin{proof}
Substitute $\hat r_{i,t+1}=r_{i,t+1}+\epsilon_{i,t}$ into Eq.~\eqref{eq:LS_def} and rearrange terms.
\end{proof}

\begin{remark}[When does the bias cancel out?]\label{rem:cancel}
By Eq.~\eqref{eq:decomp}, the bias cancels out if and only if $\E[\epsilon_{i,t}\mid L]=\E[\epsilon_{i,t}\mid S].$ A sufficient (but not necessary) condition is \emph{conditional mean independence}:
\[
\E[\epsilon_{i,t}\mid \hat s_{i,t}]=0 \quad \text{for all values of }\hat s_{i,t}.
\]
\end{remark}

The bias is zero only if portfolio assignment is mean-independent of the return error; price-based portfolio sorting violates this condition because it conditions on the very disturbance that enters the return denominator, so that long--short differencing cannot eliminate the bias.

\subsubsection{Sorting on the bias}

The next result formalizes the key intuition: for price-based signals, portfolio assignment depends on the same disturbance that enters the return measurement error, hence long and short portfolios have different conditional mean return errors.

\begin{assumption}[Overlap between signal noise and return error]\label{assum:overlap}
There exists a scalar disturbance $\delta_{i,t}$ such that the signal error and the return measurement error can be written as
\[
\eta_{i,t}=g(\delta_{i,t}),\qquad \epsilon_{i,t}=h(\delta_{i,t})+u_{i,t},
\]
where $g$ and $h$ are non-constant functions, and $\E[u_{i,t}\mid \delta_{i,t}]=0.$
\end{assumption}

\begin{assumption}[Monotone contamination]\label{assum:mono}
The function $g$ is (weakly) monotone on the support of $\delta_{i,t}$.
\end{assumption}

For the price-based signals in our `factor zoo' (yields, spreads, value ratios, past returns), the signal error is a monotone function of the price disturbance; see Section~\ref{sec:mmn} for discussion.

\begin{assumption}[Signal-noise independence]\label{assum:signal_noise_indep}
The true signal $s_{i,t}$ is independent of the price measurement error $\delta_{i,t}$.
\end{assumption}

This assumption holds when the true characteristic is determined independently of the trading process that generates measurement error. It may be violated if bonds with extreme true signals (e.g., very high yields) are also more illiquid and therefore have larger $|\delta_{i,t}|$; such dependence through the variance channel could amplify or dampen the bias beyond what Proposition~\ref{prop:sign} predicts.

\begin{proposition}[Sorting on the bias: sign result]\label{prop:sign}
Under Assumptions \ref{assum:iid}--\ref{assum:signal_noise_indep}, if $h(\delta)$ is (weakly) decreasing and $g(\delta)$ is (weakly) decreasing, then
\[
\E[\epsilon_{i,t}\mid L] \;\ge\; \E[\epsilon_{i,t}] \;\ge\; \E[\epsilon_{i,t}\mid S],
\]
with strict inequalities whenever the sorting is non-trivial. In particular, the long--short bias is nonnegative:
\[
\mathrm{Bias}_{t+1}=\E[\epsilon_{i,t}\mid L]-\E[\epsilon_{i,t}\mid S]\;\ge\; 0.
\]
\end{proposition}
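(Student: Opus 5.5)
The plan is to reduce the statement to a monotone-likelihood / positive-association argument on the scalar disturbance $\delta_{i,t}$. First, I would use Assumption~\ref{assum:overlap} and the tower property to replace $\epsilon_{i,t}$ by $h(\delta_{i,t})$ inside each conditional expectation. Portfolio membership in $L$ or $S$ is a function of $\hat s_{i,t}=s_{i,t}+g(\delta_{i,t})$ only. So, provided $u_{i,t}$ is mean-independent of $(s_{i,t},\delta_{i,t})$ jointly, we get $\E[u_{i,t}\mid L]=\E[u_{i,t}\mid S]=0$. (For the first-order version $\epsilon\approx\delta_{i,t+1}-\delta_{i,t}$, this is exactly what Assumption~\ref{assum:iid} delivers, since $\delta_{i,t+1}$ is independent of everything dated $t$; I would state this strengthening explicitly.) Hence $\E[\epsilon_{i,t}\mid L]=\E[h(\delta_{i,t})\mid \hat s_{i,t}\ge c_L]$, and analogously for $S$.

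Second, I would condition on the true signal. By Assumption~\ref{assum:signal_noise_indep}, for each fixed $s$ the event $\{s+g(\delta)\ge c_L\}$ is an upper set in $-\delta$, because $g$ is decreasing: it has the form $\{\delta\le d(s)\}$, up to ties where $g$ is flat. The function $h(\delta)$ is also decreasing in $\delta$. Let $\phi(\delta)=h(\delta)$ and $\psi_L(\delta)=\Pr(\hat s\ge c_L\mid\delta)=\Pr(s\ge c_L-g(\delta))$. By independence, $\psi_L$ is a decreasing function of $\delta$. Then
\[
\E[h(\delta)\mid L]-\E[h(\delta)]=\frac{\Cov\bigl(h(\delta),\psi_L(\delta)\bigr)}{\E[\psi_L(\delta)]},
\]
and Chebyshev's association (covariance) inequality for two comonotone functions of a single random variable gives $\Cov\ge0$. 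The symmetric argument uses $\psi_S(\delta)=\Pr(s\le c_S-g(\delta))$, which is increasing in $\delta$. It is thus anti-monotone to $h$, so $\E[h\mid S]\le\E[h]$. Subtracting the two gives $\mathrm{Bias}_{t+1}\ge0$ via Lemma~\ref{lem:decomp}.

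Third, for strictness I would invoke the equality case of Chebyshev's inequality. The covariance is zero only if $h(\delta)$ or $\psi_L(\delta)$ is a.s.\ constant. $h$ is non-constant by Assumption~\ref{assum:overlap}. I would interpret ``non-trivial sorting'' as precisely the condition that $\psi_L$ (resp.\ $\psi_S$) is not a.s.\ constant on the support of $\delta$, and that it varies on a region where $h$ varies. This requires that $g$ genuinely moves the assignment probability, i.e.\ that the distribution of $s$ puts mass near the cutoffs.

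The main obstacle I expect is not the inequality but the bookkeeping around $u_{i,t}$ and the cutoffs. Assumption~\ref{assum:overlap} only states $\E[u\mid\delta]=0$, but selection is on $(s,\delta)$, so I need either $\E[u\mid s,\delta]=0$ or independence of $u$ from $s$ given $\delta$. I would add this explicitly, justified by Assumption~\ref{assum:iid}. I also have to treat $c_L,c_S$ as fixed population quantiles at time $t$, so that conditioning is on a deterministic threshold event. Finally, I need to make the strictness condition precise rather than rhetorical.
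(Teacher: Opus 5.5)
Your proposal is correct and takes essentially the same route as the paper. Independence of $s_{i,t}$ and $\delta_{i,t}$ together with a decreasing $g$ means that conditioning on $L$ (resp.\ $S$) reweights $\delta_{i,t}$ by the decreasing (resp.\ increasing) $\psi_L$ (resp.\ $\psi_S$), and a decreasing $h$ then gives the ordering; your covariance identity and Chebyshev's inequality, with its equality case for strictness, make the paper's informal ``tilting'' step precise, and your strengthening $\E[u_{i,t}\mid s_{i,t},\delta_{i,t}]=0$ is a fair catch, since the paper uses it tacitly in passing from $\E[\epsilon_{i,t}\mid\delta_{i,t}]=h(\delta_{i,t})$ to $\E[\epsilon_{i,t}\mid L]=\E[h(\delta_{i,t})\mid L]$ and addresses the $u_{i,t}$ term only in its closing remark on serial independence.
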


\begin{proof}
By Definition \ref{def:signal_decomp}, $\hat s_{i,t}=s_{i,t}+g(\delta_{i,t})$. Under Assumption~\ref{assum:signal_noise_indep}, $s_{i,t}\perp\delta_{i,t}$. Holding $s_{i,t}$ fixed, since $g$ is decreasing, higher values of $\hat s_{i,t}$ correspond to (stochastically) lower values of $\delta_{i,t}$, and lower values correspond to (stochastically) higher values of $\delta_{i,t}$. Thus, relative to the unconditional distribution, the event $L$ (upper tail) tilts the distribution of $\delta_{i,t}$ downward, while $S$ (lower tail) tilts it upward.

Under Assumption \ref{assum:overlap}, $\E[\epsilon_{i,t}\mid \delta_{i,t}]=h(\delta_{i,t})$. Since $h$ is weakly decreasing, downward shifts in $\delta_{i,t}$ weakly increase $\E[\epsilon_{i,t}\mid \cdot]$ and upward shifts weakly decrease it. Therefore,
\[
\E[\epsilon_{i,t}\mid L] \ge \E[\epsilon_{i,t}] \ge \E[\epsilon_{i,t}\mid S],
\]
and the bias is nonnegative. Serial independence (Assumption \ref{assum:iid}) justifies interpreting the bias-relevant component of $\epsilon_{i,t}$ as driven by the time-$t$ disturbance (e.g., in the first-order approximation $\epsilon_{i,t}\approx \delta_{i,t+1}-\delta_{i,t}$, the $\delta_{i,t+1}$ term has mean zero conditional on time-$t$ sorting).
\end{proof}

\begin{remark}[No need for true predictability]\label{rem:nopredict}
Proposition \ref{prop:sign} does not require the true signal $s_{i,t}$ to predict true returns. Even in a pure-noise world where $r_{i,t+1}\equiv 0$ (so $r^{LS}_{t+1}=0$), the bias term in Lemma \ref{lem:decomp} can be nonzero because sorting is conditioning on the disturbance that enters $\epsilon_{i,t}$.
\end{remark}

If a disturbance $\delta_{i,t}$ enters both the signal error $\eta_{i,t}$ and the return error $\epsilon_{i,t}$, then portfolio assignment based on the noisy signal $\hat{s}_{i,t}$ conditions on a component of $\epsilon_{i,t}$. As a result, $\E[\epsilon_{i,t}\mid L] \neq \E[\epsilon_{i,t}\mid S]$ and the bias does not cancel out. This holds even when the true signal $s_{i,t}$ has no predictive power.

For non-price signals (e.g., rating), the measurement error does not enter the return measurement error, so sorting on a noisy non-price signal does not create CEIV bias.

\subsubsection{Non-price signals do not generate CEIV bias}

Let a \emph{non-price} characteristic (e.g., rating) be observed with error:
\[
\hat s_{i,t}=s_{i,t}+\xi_{i,t},
\]
where $\xi_{i,t}$ is a measurement error unrelated to return measurement.

\begin{corollary}[No CEIV bias for non-price signal noise]\label{cor:nonprice}
Suppose portfolio assignment depends on $(s_{i,t},\xi_{i,t})$ only through $\hat s_{i,t}=s_{i,t}+\xi_{i,t}$. If
\begin{equation}\label{eq:nonprice_cmi}
\E[\epsilon_{i,t}\mid s_{i,t},\xi_{i,t}]=0,
\end{equation}
then $\E[\epsilon_{i,t}\mid L]=\E[\epsilon_{i,t}\mid S]=0$, hence $\mathrm{Bias}_{t+1}=0$ and $\hat r^{LS}_{t+1}=r^{LS}_{t+1}$.
\end{corollary}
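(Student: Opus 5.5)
The argument is a tower-property computation. Portfolio membership is a measurable function of $(s_{i,t},\xi_{i,t})$: by hypothesis assignment depends on these only through $\hat s_{i,t}=s_{i,t}+\xi_{i,t}$, and $L=\{\hat s_{i,t}\ge c_L\}$, $S=\{\hat s_{i,t}\le c_S\}$ with cutoffs fixed by the time-$t$ cross-sectional distribution. So the indicators $\mathbf{1}_L$ and $\mathbf{1}_S$ belong to $\sigma(s_{i,t},\xi_{i,t})$. I will then apply the law of iterated expectations to $\E[\epsilon_{i,t}\mathbf{1}_L]$ and use the conditional mean restriction \eqref{eq:nonprice_cmi} to kill the inner expectation.

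\textbf{Step 1.} Since $\mathbf{1}_L$ is $\sigma(s_{i,t},\xi_{i,t})$-measurable,
\[
\E[\epsilon_{i,t}\mathbf{1}_L]=\E\bigl[\mathbf{1}_L\,\E[\epsilon_{i,t}\mid s_{i,t},\xi_{i,t}]\bigr]=0,
\]
and identically for $S$. I need $\epsilon_{i,t}$ to be integrable. I will take this from Definition~\ref{def:return_decomp}, treating the exact form \eqref{eq:eps_exact} as integrable as part of the setup (for instance, $\delta$ bounded above $-1$).

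\textbf{Step 2.} Non-trivial sorting gives $\Pr(L)=\Pr(S)=\alpha>0$. Dividing by $\alpha$ yields $\E[\epsilon_{i,t}\mid L]=\E[\epsilon_{i,t}\mid S]=0$.

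\textbf{Step 3.} Substituting into Lemma~\ref{lem:decomp} gives $\mathrm{Bias}_{t+1}=0$, hence $\hat r^{LS}_{t+1}=r^{LS}_{t+1}$. This is also the sufficient condition in Remark~\ref{rem:cancel}, lifted from $\hat s_{i,t}$ to the finer $\sigma$-field: $\sigma(\hat s_{i,t})\subseteq\sigma(s_{i,t},\xi_{i,t})$, so \eqref{eq:nonprice_cmi} implies $\E[\epsilon_{i,t}\mid\hat s_{i,t}]=0$.

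The only delicate point is measurability of $L$ and $S$. The quantile cutoffs $c_L,c_S$ depend on the cross-sectional distribution of $\hat s$. In the population formulation \eqref{eq:LS_def} these cutoffs are deterministic at time $t$, so the events are genuinely functions of $\hat s_{i,t}$ alone. I will state this explicitly, and note that with finite-sample breakpoints the same argument goes through conditional on the breakpoints, provided \eqref{eq:nonprice_cmi} continues to hold given them.

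I will also add a brief remark on interpretation. Unlike Proposition~\ref{prop:sign}, the hypothesis rules out any dependence of $\epsilon$ on $\xi$, so there is no $g$–$h$ overlap (Assumption~\ref{assum:overlap} fails). Measurement error in such a signal then only affects $r^{LS}_{t+1}$, through misranking and attenuation, and adds no additive CEIV term.
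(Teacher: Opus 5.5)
Your proof is correct and is essentially the paper's argument. The paper applies the law of iterated expectations directly, $\E[\epsilon_{i,t}\mid L]=\E\bigl[\E[\epsilon_{i,t}\mid s_{i,t},\xi_{i,t}]\mid L\bigr]=0$; your route through $\E[\epsilon_{i,t}\mathbf{1}_L]=0$ and division by $\Pr(L)=\alpha>0$ simply makes explicit what the paper leaves implicit, namely that $L,S\in\sigma(s_{i,t},\xi_{i,t})$, that $\epsilon_{i,t}$ is integrable, and that the legs have positive probability. One overstatement in your closing aside, which does not affect the proof: \eqref{eq:nonprice_cmi} rules out only mean dependence of $\epsilon_{i,t}$ on $(s_{i,t},\xi_{i,t})$, not all dependence, so it does not by itself contradict Assumption~\ref{assum:overlap} (e.g., $g(\delta)=\delta^2$, $h(\delta)=\delta$, $u\equiv 0$, with $\delta$ symmetric and independent of $s$, satisfies both), and the conflict arises when $g$ is invertible, as under strictly monotone contamination.
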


\begin{proof}
Using the law of iterated expectations and \eqref{eq:nonprice_cmi},
\[
\E[\epsilon_{i,t}\mid L]
=\E\!\big[\E[\epsilon_{i,t}\mid s_{i,t},\xi_{i,t}]\mid L\big]
=\E[0\mid L]=0,
\]
and similarly for $S$.
\end{proof}

\begin{remark}[``no CEIV bias'' does not mean ``measurement error is harmless'']\label{rem:caveat}
Corollary \ref{cor:nonprice} rules out the \emph{mechanical} long--short bias that arises when the same disturbance contaminates both the sorting variable and the return denominator. It does \emph{not} imply that non-price characteristics measured with error are innocuous: misclassification can attenuate $r^{LS}$, change portfolio composition and risk exposures, and can affect inference.
Noisy non-price signals can lead to misranking but they do not create a spurious long-short premium unless the signal noise also contaminates the return measurement error.
\end{remark}

\subsection{Closed-form characterization under stricter assumptions}
Proposition~\ref{prop:sign} establishes the sign of the bias under general conditions, and Corollary~\ref{cor:nonprice} shows that non-price signals are immune. We now impose distributional assumptions to derive the bias magnitude in closed form.

\begin{assumption}[Linear price contamination]\label{assum:linear}
For price-based signals, $\eta_{i,t}=a\,\delta_{i,t}$ for some constant $a\neq 0.$
\end{assumption}

\begin{assumption}[Joint normality and orthogonality]\label{assum:normal}
The vector $(s_{i,t},\eta_{i,t},\epsilon_{i,t})$ is jointly normal and $s_{i,t}\perp \eta_{i,t}.$ Since $\eta_{i,t}=a\,\delta_{i,t}$ under Assumption~\ref{assum:linear}, normality of $\eta_{i,t}$ implies normality of $\delta_{i,t}.$
\end{assumption}

\begin{proposition}[Correlation structure]\label{prop:corr}
Under Assumptions \ref{assum:iid} and \ref{assum:linear}, and using the first-order approximation $\epsilon_{i,t}\approx \delta_{i,t+1}-\delta_{i,t}$, we have
\begin{equation}
\Cov(\eta_{i,t},\epsilon_{i,t}) = -a\sigma_\delta^2 \neq 0.
\end{equation}
\end{proposition}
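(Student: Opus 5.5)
The plan is a direct bilinear covariance computation. Assumption~\ref{assum:linear} gives $\eta_{i,t}=a\,\delta_{i,t}$, and Definition~\ref{def:return_decomp} supplies the first-order approximation $\epsilon_{i,t}\approx \delta_{i,t+1}-\delta_{i,t}$. Substituting both and using bilinearity of covariance, I will write
\[
\Cov(\eta_{i,t},\epsilon_{i,t}) \approx \Cov\bigl(a\,\delta_{i,t},\,\delta_{i,t+1}-\delta_{i,t}\bigr) = a\,\Cov(\delta_{i,t},\delta_{i,t+1}) - a\,\Var(\delta_{i,t}).
\]
This splits the problem into two terms, each handled by Assumption~\ref{assum:iid}.

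For the first term, serial independence of $\delta_{i,t}$ across $t$ (with finite variances) implies zero covariance, so $\Cov(\delta_{i,t},\delta_{i,t+1})=0$. For the second term, $\Var(\delta_{i,t})=\sigma_\delta^2$ by the same assumption. Combining gives $\Cov(\eta_{i,t},\epsilon_{i,t})=-a\sigma_\delta^2$. Since $a\neq 0$ by Assumption~\ref{assum:linear}, the conclusion $\neq 0$ needs only $\sigma_\delta^2>0$. I will make this explicit: the statement implicitly assumes nondegenerate price noise, which is the economically relevant case and is implicit in Definition~\ref{def:price_noise}. If $\sigma_\delta=0$, there is no measurement error and the claim is vacuous.

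The only delicate point is the status of the approximation. The statement says it is \emph{using} the first-order approximation, so the equality is meant for the linearized error $\delta_{i,t+1}-\delta_{i,t}$, and I will state the result for that object. Optionally, I would add a remark on the exact error from Eq.~\eqref{eq:eps_exact}. There, independence still gives
\[
\Cov\bigl(\delta_{i,t},(1+\delta_{i,t+1})/(1+\delta_{i,t})\bigr)=\Cov\bigl(\delta_{i,t},1/(1+\delta_{i,t})\bigr),
\]
which is negative because $x\mapsto 1/(1+x)$ is decreasing on $\delta>-1$ (Chebyshev's covariance inequality). This preserves the sign $-\operatorname{sign}(a)$ and agrees with $-\sigma_\delta^2$ to first order.

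I do not expect a real obstacle here. The main care is to flag that the result holds under the linearization and to note the nondegeneracy condition $\sigma_\delta>0$.
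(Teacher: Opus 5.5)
Your proposal is correct and follows the paper's argument: the paper likewise substitutes $\eta_{i,t}=a\delta_{i,t}$ and the linearized error, uses serial independence to discard the $\delta_{i,t+1}$ term, and obtains $-a\sigma_\delta^2$; you handle that term through the bilinear expansion with $\Cov(\delta_{i,t},\delta_{i,t+1})=0$, which is the same step. You are also right that the ``$\neq 0$'' conclusion needs $\sigma_\delta^2>0$ as well as $a\neq 0$, which the paper leaves implicit.
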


\begin{proof}
Under Assumption \ref{assum:iid}, $\delta_{i,t+1}$ is independent of $\delta_{i,t}$ and has mean zero conditional on time-$t$ sorting. Thus, the bias-relevant component of $\epsilon_{i,t}$ is $-\delta_{i,t}$, so
\[
\Cov(\eta_{i,t},\epsilon_{i,t})
\approx \Cov(a\delta_{i,t},-\delta_{i,t})
=-a\,\Var(\delta_{i,t})
=-a\sigma_\delta^2.
\]
\end{proof}

Define $\sigma_s^2\equiv \Var(s_{i,t})$, $\sigma_\eta^2\equiv \Var(\eta_{i,t})$, and $\sigma_{\hat s}^2\equiv \Var(\hat s_{i,t})=\sigma_s^2+\sigma_\eta^2$ (by $s\perp \eta$ in Assumption~\ref{assum:normal}). Let the \emph{noise share} of the observed signal be
\[
\rho \equiv \frac{\sigma_\eta}{\sigma_{\hat s}}\in[0,1].
\]
Note that $\rho$ denotes the noise share (a ratio of standard deviations), not a correlation coefficient. Under Assumption~\ref{assum:linear} with $\sigma_\eta = |a|\sigma_\delta$, the bias in Eq.~\eqref{eq:bias_closed} can equivalently be written as $2\kappa|a|\sigma_\delta^2/\sigma_{\hat{s}}$. We describe the bias as first-order in $\sigma_\delta$ in the sense that it scales linearly with price noise for a fixed noise share $\rho$.
Finally, define a positive tail constant
\begin{equation}\label{eq:kappa_def}
\kappa \equiv -\,\E\!\left[ Z \mid Z < \Phinv(\alpha)\right] > 0,
\qquad Z\sim\mathcal N(0,1).
\end{equation}

\begin{theorem}[Sorting-induced bias: closed form]\label{thm:closedform}
Under Assumptions \ref{assum:iid}, \ref{assum:signal_noise_indep}, \ref{assum:linear}, and \ref{assum:normal}, the bias in a long--short factor sorted on a price-based signal is
\begin{equation}\label{eq:bias_closed}
\mathrm{Bias}_{t+1} = 2\kappa\,\rho\,\sigma_\delta,
\end{equation}
where $\kappa$ is defined in Eq.~\eqref{eq:kappa_def} and $\sigma_\delta$ is the previously-defined standard deviation of the price noise.
\end{theorem}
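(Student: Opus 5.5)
The plan is to start from Lemma~\ref{lem:decomp}, which reduces the problem to computing $\E[\epsilon_{i,t}\mid L]-\E[\epsilon_{i,t}\mid S]$. I will then use the first-order approximation $\epsilon_{i,t}\approx\delta_{i,t+1}-\delta_{i,t}$ from Definition~\ref{def:return_decomp}, and the same reduction as in the proof of Proposition~\ref{prop:corr}, to replace $\epsilon_{i,t}$ by its bias-relevant component $-\delta_{i,t}$. Under Assumption~\ref{assum:iid}, $\delta_{i,t+1}$ is independent of $\delta_{i,t}$. I will also take it to be independent of $s_{i,t}$; this is implicit in the paper's statement that $\delta_{i,t+1}$ ``has mean zero conditional on time-$t$ sorting'', and I will state it explicitly. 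Since $L$ and $S$ are events in $\hat s_{i,t}=s_{i,t}+a\delta_{i,t}$, this independence gives $\E[\delta_{i,t+1}\mid L]=\E[\delta_{i,t+1}\mid S]=0$. So $\mathrm{Bias}_{t+1}=-\bigl(\E[\delta_{i,t}\mid L]-\E[\delta_{i,t}\mid S]\bigr)$.

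The core step is a Gaussian projection. Under Assumptions~\ref{assum:linear}--\ref{assum:normal}, $(\delta_{i,t},\hat s_{i,t})$ is jointly normal. Because $s\perp\delta$ (Assumption~\ref{assum:signal_noise_indep}), $\Cov(\delta_{i,t},\hat s_{i,t})=a\sigma_\delta^2$. Writing $\mu_{\hat s}$ for the mean of $\hat s$, the conditional expectation is therefore linear:
\[
\E[\delta_{i,t}\mid\hat s_{i,t}]=\frac{a\sigma_\delta^2}{\sigma_{\hat s}}\,Z,\qquad Z\equiv\frac{\hat s_{i,t}-\mu_{\hat s}}{\sigma_{\hat s}}\sim\mathcal N(0,1).
\]
Iterated expectations then give $\E[\delta_{i,t}\mid L]=\frac{a\sigma_\delta^2}{\sigma_{\hat s}}\E[Z\mid Z\ge\Phinv(1-\alpha)]$ and $\E[\delta_{i,t}\mid S]=\frac{a\sigma_\delta^2}{\sigma_{\hat s}}\E[Z\mid Z\le\Phinv(\alpha)]$. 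By the symmetry of the standard normal and the definition of $\kappa$ in Eq.~\eqref{eq:kappa_def}, these tail means are $+\kappa$ and $-\kappa$ respectively. Hence
\[
\mathrm{Bias}_{t+1}=-\frac{2\kappa a\sigma_\delta^2}{\sigma_{\hat s}}.
\]

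The last step converts this to the stated form. It uses the convention from Proposition~\ref{prop:sign} that for price-based signals a higher price lowers the signal, so $g(\delta)=a\delta$ is decreasing and $a<0$. Then $-a=|a|$, and the bias equals $2\kappa|a|\sigma_\delta^2/\sigma_{\hat s}$, matching the equivalent expression given just before the theorem. Substituting $\sigma_\eta=|a|\sigma_\delta$ and $\rho=\sigma_\eta/\sigma_{\hat s}$ gives $2\kappa\rho\sigma_\delta$.

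I expect the main obstacle to be bookkeeping rather than calculation. First, the theorem does not list $a<0$ among its hypotheses, so I will make it explicit (for general $a$ the formula holds with $2\kappa\,\mathrm{sgn}(-a)\rho\sigma_\delta$). Second, the equality holds only up to the first-order approximation of $\epsilon_{i,t}$ and requires $\delta_{i,t+1}$ to be independent of $s_{i,t}$ as well as of $\delta_{i,t}$. I would state both qualifications at the outset so the derivation is exact under the linearized error.
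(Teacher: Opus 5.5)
Your proposal is correct and follows essentially the same route as the paper. The shared steps are the Gaussian linear projection onto $\hat s_{i,t}$, the truncated-normal tail means $\pm\kappa$, the first-order reduction $\epsilon_{i,t}\approx-\delta_{i,t}$ with $\E[\delta_{i,t+1}\mid L]=\E[\delta_{i,t+1}\mid S]=0$, and the sign convention $a<0$. The differences are cosmetic: you project $\delta_{i,t}$ directly rather than projecting $\eta_{i,t}$ and dividing by $a$, you allow a nonzero mean for $\hat s_{i,t}$, and you state explicitly two things the paper's proof uses implicitly. These are the independence of $\delta_{i,t+1}$ from $s_{i,t}$ and the restriction $a<0$, which the paper handles by its ``without loss of generality'' step.
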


\begin{proof}
Under joint normality, the conditional expectation of $\eta$ given $\hat s$ is linear:
\[
\E[\eta_{i,t}\mid \hat s_{i,t}]
=\frac{\Cov(\eta_{i,t},\hat s_{i,t})}{\Var(\hat s_{i,t})}\,\hat s_{i,t}
=\frac{\sigma_\eta^2}{\sigma_{\hat s}^2}\,\hat s_{i,t},
\]
since $\hat s=s+\eta$ and $s\perp \eta$.

For the long leg $L=\{\hat s\ge c_L\},$ where $c_L$ is the upper $\alpha$-quantile, standard normal truncation implies
\[
\E[\hat s_{i,t}\mid L] = +\kappa\,\sigma_{\hat s}.
\]
Hence,
\[
\E[\eta_{i,t}\mid L]
=\frac{\sigma_\eta^2}{\sigma_{\hat s}^2}\,\E[\hat s_{i,t}\mid L]
=+\kappa\,\frac{\sigma_\eta^2}{\sigma_{\hat s}}
=+\kappa\,\rho\,\sigma_\eta.
\]
Under Assumption~\ref{assum:linear}, $\eta=a\delta$ and $\sigma_\eta=|a|\sigma_\delta$. Without loss of generality, assume $a<0$ (as for our price-based signals); if $a>0$, redefine the signal as $-\hat{s}_{i,t}$, which swaps $L$ and $S$ and reverses the sign of $a$ without changing the bias magnitude. Then
\[
\E[\delta_{i,t}\mid L] = \frac{1}{a}\E[\eta_{i,t}\mid L] = -\kappa\,\rho\,\sigma_\delta,
\]
and by symmetry $\E[\delta_{i,t}\mid S]=+\kappa\,\rho\,\sigma_\delta.$

Finally, the first-order approximation (Definition~\ref{def:return_decomp}) gives $\epsilon_{i,t}\approx \delta_{i,t+1}-\delta_{i,t}$. Under Assumption~\ref{assum:iid}, $\delta_{i,t+1}$ is independent of time-$t$ information, so $\E[\delta_{i,t+1}\mid L]=\E[\delta_{i,t+1}]=0$. Hence $\E[\epsilon_{i,t}\mid L]\approx -\E[\delta_{i,t}\mid L]$, and similarly for $S$. Therefore,
\[
\E[\epsilon_{i,t}\mid L]-\E[\epsilon_{i,t}\mid S]
\approx
-\E[\delta_{i,t}\mid L]+\E[\delta_{i,t}\mid S]
=
2\kappa\,\rho\,\sigma_\delta,
\]
which yields the expression in Eq.~\eqref{eq:bias_closed}. The second-order remainder, approximately $\delta_{i,t}^2 - \delta_{i,t}\delta_{i,t+1}$, has positive conditional expectation in both tails but cancels exactly in the long--short difference under normality, because the symmetric truncation of $\delta_{i,t}$ induced by the long and short portfolios yields identical conditional second moments; under asymmetric noise distributions, a second-order residual may remain.
\end{proof}

\begin{corollary}[Pure-noise limit]\label{cor:pure}
If the true signal has no cross-sectional variation, $\sigma_s=0$ (equivalently, $\rho=1$), then
\[
\mathrm{Bias}_{t+1}=2\kappa\,\sigma_\delta.
\]
\end{corollary}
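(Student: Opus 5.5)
The plan is to specialize Theorem~\ref{thm:closedform} directly rather than redo the truncated-normal computation. The theorem gives $\mathrm{Bias}_{t+1}=2\kappa\rho\sigma_\delta$ with noise share $\rho=\sigma_\eta/\sigma_{\hat s}$. So the whole corollary reduces to showing that $\sigma_s=0$ forces $\rho=1$, and that the theorem's hypotheses still hold in this degenerate case.

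The first step is the variance identity. Assumption~\ref{assum:normal} gives $s\perp\eta$, so $\sigma_{\hat s}^2=\sigma_s^2+\sigma_\eta^2$. Setting $\sigma_s=0$ gives $\sigma_{\hat s}=\sigma_\eta$. This is strictly positive, because $\sigma_\eta=|a|\sigma_\delta$ with $a\neq 0$ by Assumption~\ref{assum:linear}, and we take $\sigma_\delta>0$. Hence $\rho=\sigma_\eta/\sigma_{\hat s}=1$.

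The converse, needed for the ``equivalently'' clause, runs the same way. If $\rho=1$, then $\sigma_\eta^2=\sigma_s^2+\sigma_\eta^2$, so $\sigma_s=0$. Substituting $\rho=1$ into Eq.~\eqref{eq:bias_closed} then yields $2\kappa\sigma_\delta$.

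The only point needing care is that the theorem's proof remains valid when $s$ is degenerate, i.e.\ almost surely constant; I expect this to be the main obstacle, though it is mild. I would handle it by noting the following.
\begin{itemize}
\item A constant $s$ is trivially independent of $\delta$ and $\eta$, so Assumptions~\ref{assum:signal_noise_indep} and~\ref{assum:normal} still hold, with a degenerate normal component allowed.
\item The sort on $\hat s=s+a\delta$ is then a sort on $a\delta$ up to a constant, so $L$ and $S$ are exactly the tails of $\delta$.
\item The regression step $\E[\eta\mid\hat s]=(\sigma_\eta^2/\sigma_{\hat s}^2)\hat s$ holds with coefficient $1$, since $\eta=\hat s-s$ exactly. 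The constant mean of $s$ can be absorbed by centering.
\end{itemize}
As a cross-check, I would verify the result directly. Taking $a<0$, the long leg is the lower $\alpha$-tail of $\delta$, so $\E[\delta\mid L]=-\kappa\sigma_\delta$ and $\E[\delta\mid S]=+\kappa\sigma_\delta$. The first-order approximation $\epsilon\approx\delta_{t+1}-\delta_t$ together with Assumption~\ref{assum:iid} then gives a bias of $2\kappa\sigma_\delta$, matching the formula.
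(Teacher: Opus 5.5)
Your proposal is correct and follows the paper's route. The paper gives no separate proof and treats the corollary as an immediate specialization of Theorem~\ref{thm:closedform} at $\rho=1$; your verification that $\sigma_s=0 \iff \rho=1$ (using $\sigma_{\hat s}>0$), your check of the degenerate-$s$ case, and your direct truncated-normal computation make explicit what the paper leaves implicit.
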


\begin{corollary}[Reversal signals]\label{cor:reversal}
Under Assumptions~\ref{assum:iid}, \ref{assum:signal_noise_indep}, and~\ref{assum:normal}, for reversal signals where
$\eta_{i,t} = \delta_{i,t} - \delta_{i,t-1}$, the bias is
$|\mathrm{Bias}| = \sqrt{2}\kappa\rho\sigma_\delta$.
\end{corollary}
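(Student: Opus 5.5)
The plan is to follow the proof of Theorem~\ref{thm:closedform} step by step. The one change is that the signal error now loads on two price disturbances, $\delta_{i,t}$ and $\delta_{i,t-1}$, and only the first of these enters the return error. Assumption~\ref{assum:linear} no longer holds, since $\eta_{i,t}=\delta_{i,t}-\delta_{i,t-1}$ is not a scalar multiple of $\delta_{i,t}$. So I will not pass through $\E[\eta\mid L]$ and divide by $a$. Instead I will project $\delta_{i,t}$ directly onto $\hat s_{i,t}$.

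\textbf{Step 1 (joint normality).} I read Assumption~\ref{assum:normal} as joint normality of $(s_{i,t},\delta_{i,t},\delta_{i,t-1},\delta_{i,t+1})$, with $s\perp(\delta_{i,t},\delta_{i,t-1})$ from Assumption~\ref{assum:signal_noise_indep}. Serial independence from Assumption~\ref{assum:iid} then gives
\[
\sigma_\eta^2=\Var(\delta_{i,t}-\delta_{i,t-1})=2\sigma_\delta^2,
\qquad
\sigma_{\hat s}^2=\sigma_s^2+2\sigma_\delta^2,
\qquad
\Cov(\delta_{i,t},\hat s_{i,t})=\Cov(\delta_{i,t},\eta_{i,t})=\sigma_\delta^2 .
\]
Under normality the conditional mean is linear, so
\[
\E[\delta_{i,t}\mid \hat s_{i,t}]=\frac{\sigma_\delta^2}{\sigma_{\hat s}^2}\,\hat s_{i,t}.
\]

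\textbf{Step 2 (tail conditioning).} I apply the same truncation fact used in Theorem~\ref{thm:closedform}, namely $\E[\hat s\mid L]=\kappa\sigma_{\hat s}$ and $\E[\hat s\mid S]=-\kappa\sigma_{\hat s}$. This gives
\[
\E[\delta_{i,t}\mid L]=\kappa\,\frac{\sigma_\delta^2}{\sigma_{\hat s}},
\qquad
\E[\delta_{i,t}\mid S]=-\kappa\,\frac{\sigma_\delta^2}{\sigma_{\hat s}}.
\]
I then rewrite this in terms of the noise share. Here $\rho=\sigma_\eta/\sigma_{\hat s}=\sqrt2\,\sigma_\delta/\sigma_{\hat s}$, so $\sigma_\delta^2/\sigma_{\hat s}=\rho\sigma_\delta/\sqrt2$.

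\textbf{Step 3 (return error).} I use the first-order approximation $\epsilon_{i,t}\approx\delta_{i,t+1}-\delta_{i,t}$. The term $\delta_{i,t+1}$ is independent of time-$t$ sorting, so $\E[\delta_{i,t+1}\mid L]=\E[\delta_{i,t+1}\mid S]=0$, exactly as in the theorem. The term $\delta_{i,t-1}$ enters the signal but not $\epsilon_{i,t}$, so it only dilutes the sort. Hence
\[
\mathrm{Bias}=-\E[\delta_{i,t}\mid L]+\E[\delta_{i,t}\mid S]=-2\kappa\rho\sigma_\delta/\sqrt2=-\sqrt2\,\kappa\rho\sigma_\delta .
\]
The sign is negative because $\eta$ is increasing in $\delta_{i,t}$ (the case $a>0$). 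This matches the negative reversal premia in Table~\ref{tab:mmn_1}, and taking absolute values gives the stated magnitude.

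\textbf{Main obstacle.} The step needing most care is Step~2, and specifically keeping the definition of $\rho$ straight. Here $\rho$ is defined from $\sigma_\eta=\sqrt2\sigma_\delta$, not from $\sigma_\delta$, and only half of $\Var(\eta)$ is covariance with the return-relevant disturbance. Together these produce the factor $\sqrt2$ in place of $2$. I will also check that the second-order remainder still cancels by symmetry of the two tails, as in the theorem. It does, because the conditional distribution of $\delta_{i,t}$ given $L$ is the mirror image of that given $S$.
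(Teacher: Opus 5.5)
Your proposal is correct and matches the paper's proof. Like the paper, you bypass Assumption~\ref{assum:linear}, compute $\Cov(\delta_{i,t},\hat s_{i,t})=\sigma_\delta^2$ and $\sigma_\eta^2=2\sigma_\delta^2$ directly, project $\delta_{i,t}$ on $\hat s_{i,t}$, apply the tail truncation, and use $\rho=\sqrt{2}\,\sigma_\delta/\sigma_{\hat s}$ to obtain $\mathrm{Bias}=-\sqrt{2}\,\kappa\rho\sigma_\delta$. You also make explicit two points the paper leaves implicit: that Assumption~\ref{assum:normal} must be read as joint normality of the underlying disturbances once linearity is dropped, and the symmetry argument showing the second-order remainder cancels.
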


\begin{proof}
The reversal structure $\eta_{i,t}=\delta_{i,t}-\delta_{i,t-1}$ does not satisfy the linear form $\eta=a\delta_t$ in Assumption~\ref{assum:linear}; we therefore compute $\Cov(\delta_{i,t},\hat{s}_{i,t})$ directly.

Under Assumption~\ref{assum:iid}, the bias-relevant component of the return
measurement error is $\epsilon_{i,t}\approx -\delta_{i,t}$. For the reversal
signal $\eta_{i,t}=\delta_{i,t}-\delta_{i,t-1}$, serial independence implies
$\Cov(\delta_{i,t},\delta_{i,t-1})=0$, so that
\[
\Var(\eta_{i,t})=\Var(\delta_{i,t})+\Var(\delta_{i,t-1})=2\sigma_\delta^2,
\qquad
\Cov(\delta_{i,t},\eta_{i,t})=\sigma_\delta^2.
\]
Under joint normality (Assumption~\ref{assum:normal}),
\[
\E[\delta_{i,t}\mid \hat s_{i,t}]
=
\frac{\Cov(\delta_{i,t},\hat s_{i,t})}{\Var(\hat s_{i,t})}\,\hat s_{i,t}
=
\frac{\sigma_\delta^2}{\sigma_{\hat s}^2}\,\hat s_{i,t},
\]
where $\Cov(\delta_{i,t},\hat s_{i,t})=\Cov(\delta_{i,t},\eta_{i,t})=\sigma_\delta^2$ (using $s_{i,t}\perp\delta_{i,t}$ and Assumption~\ref{assum:iid}) and $\Var(\hat s_{i,t})=\sigma_s^2+\sigma_\eta^2$ (using $s_{i,t}\perp\eta_{i,t}$ from Assumption~\ref{assum:normal}). For the long
leg, $\E[\hat s_{i,t}\mid L]=+\kappa\sigma_{\hat s}$, hence
\[
\E[\delta_{i,t}\mid L]=+\kappa\frac{\sigma_\delta^2}{\sigma_{\hat s}}
=+\kappa\frac{\rho}{\sqrt{2}}\sigma_\delta,
\]
where $\rho=\sigma_\eta/\sigma_{\hat s}$ and $\sigma_\eta=\sqrt{2}\sigma_\delta$.
By symmetry, $\E[\delta_{i,t}\mid S]=-\kappa\frac{\rho}{\sqrt{2}}\sigma_\delta$.
The bias is
\[
\mathrm{Bias}
=
\E[\epsilon\mid L]-\E[\epsilon\mid S]
\approx
-\E[\delta\mid L]+\E[\delta\mid S]
=
-\sqrt{2}\kappa\rho\sigma_\delta,
\]
so $|\mathrm{Bias}|=\sqrt{2}\kappa\rho\sigma_\delta$.
\end{proof}

\subsection{Empirical magnitude of the implementation gap}\label{app:lib-empirical}

Table~\ref{tab:lib_summary} quantifies the latent implementation bias across all 108 factors. For each factor, we compute the difference $\Delta_{f,t} = r_{f,t}^{\text{End}} - r_{f,t}^{\text{Bgn}}$ between month-end and month-begin long--short factor returns, holding the signal fixed at its month-end value. A large and statistically significant $\Delta$ indicates that the one-day implementation gap materially affects the measured premium.

The results are split sharply by signal type. For non-price signals (credit ratings, momentum, factor betas), the signal noise $\xi_{i,t}$ is independent of the price disturbance $\delta_{i,t}$, so $\Cov(\xi_{i,t}, \epsilon_{i,t}) = 0$ and Corollary~\ref{cor:nonprice} implies zero CEIV bias. Among the 30 price-based factors, 43--67\% exhibit a statistically significant gap across the four sort specifications (single-sort and within-firm, value- and equal-weighted), with average absolute differences of 0.13--0.30\% per month and average absolute $t$-statistics of 3.08--4.05. Among the 78 non-price factors, only 10--18\% show a significant gap, with average absolute differences of 0.03--0.10\% per month and average absolute $t$-statistics near one. The one-day implementation gap has no material effect on average premia for non-price signals, consistent with Corollary~\ref{cor:nonprice}.

\begin{table}[!ht]
\caption{Latent implementation bias across 108 factors.}
\begin{spacing}{1}
{\footnotesize
The table tests whether the one-day implementation gap produces a significant return difference for each of the 108 factors. For each factor and portfolio sort, $\Delta_{f,t} = r_{f,t}^{\text{End}} - r_{f,t}^{\text{Bgn}}$ measures the difference between month-end and month-begin long--short returns when both series use signal-adjusted sorting variables. Sig./Total counts factors with $|t(\Delta)| > 1.96$ (Newey-West). $\overline{|\Delta|}$ is the average absolute monthly return difference (\%). $\overline{|t|}$ is the average absolute $t$-statistic. Sample: 2002-09 to 2024-12.}
\end{spacing}
\vspace{-4mm}
\begin{center}
\label{tab:lib_summary}
\vspace{2mm}
\scalebox{0.85}{%
\begin{tabular}{l cccc cccc}
\toprule
 & \multicolumn{4}{c}{\textbf{Price-Based}} & \multicolumn{4}{c}{\textbf{Non-Price-Based}} \\
\cmidrule(lr){2-5} \cmidrule(lr){6-9}
\textbf{Sort} & \textbf{Sig./Total} & \textbf{\%} & \textbf{$\overline{|\Delta|}$} & \textbf{$\overline{|t|}$} & \textbf{Sig./Total} & \textbf{\%} & \textbf{$\overline{|\Delta|}$} & \textbf{$\overline{|t|}$} \\
\midrule
Single-Sorted (VW) & 13/30 & 43\% & 0.18 & 3.08 & 11/78 & 14\% & 0.05 & 1.10 \\
Single-Sorted (EW) & 15/30 & 50\% & 0.30 & 3.18 & 14/78 & 18\% & 0.10 & 1.20 \\
Within-Firm (VW) & 14/30 & 47\% & 0.13 & 3.32 & 8/78 & 10\% & 0.03 & 1.07 \\
Within-Firm (EW) & 20/30 & 67\% & 0.19 & 4.05 & 11/78 & 14\% & 0.04 & 1.26 \\
\bottomrule
\end{tabular}
}
\end{center}
\end{table}



\begin{thebibliography}{}

\bibitem[\protect\citeauthoryear{Abdi and Ranaldo}{Abdi and
  Ranaldo}{2017}]{abdi2017simple}
Abdi, F. and A.~Ranaldo (2017).
\newblock A simple estimation of bid-ask spreads from daily close, high, and
  low prices.
\newblock {\em Review of Financial Studies\/}~{\em 30}, 4437--4480.

\bibitem[\protect\citeauthoryear{Amihud}{Amihud}{2002}]{AM2002}
Amihud, Y. (2002).
\newblock Illiquidity and stock returns: {C}ross-section and time-series
  effects.
\newblock {\em Journal of Financial Markets\/}~{\em 5}, 31--56.

\bibitem[\protect\citeauthoryear{Andreani, Palhares, and Richardson}{Andreani
  et~al.}{2024}]{andreani2023computing}
Andreani, M., D.~Palhares, and S.~Richardson (2024).
\newblock Computing corporate bond returns: {A} word (or two) of caution.
\newblock {\em Review of Accounting Studies\/}~{\em 29}, 3887--3906.

\bibitem[\protect\citeauthoryear{Ang, Chen, and Xing}{Ang
  et~al.}{2006}]{ang2006downside}
Ang, A., J.~Chen, and Y.~Xing (2006).
\newblock Downside risk.
\newblock {\em Review of Financial Studies\/}~{\em 19}, 1191--1239.

\bibitem[\protect\citeauthoryear{Bai, Bali, and Wen}{Bai
  et~al.}{2019}]{BaiBaliWen_2019}
Bai, J., T.~G. Bali, and Q.~Wen (2019).
\newblock {RETRACTED}: Common risk factors in the cross-section of corporate
  bond returns.
\newblock {\em Journal of Financial Economics\/}~{\em 131}, 619--642.

\bibitem[\protect\citeauthoryear{Bai, Bali, and Wen}{Bai
  et~al.}{2021}]{bai2021there}
Bai, J., T.~G. Bali, and Q.~Wen (2021).
\newblock Is there a risk-return tradeoff in the corporate bond market?
  {T}ime-series and cross-sectional evidence.
\newblock {\em Journal of Financial Economics\/}~{\em 142}, 1017--1037.

\bibitem[\protect\citeauthoryear{Baker, Bloom, and Davis}{Baker
  et~al.}{2016}]{baker2016measuring}
Baker, S.~R., N.~Bloom, and S.~J. Davis (2016).
\newblock Measuring economic policy uncertainty.
\newblock {\em Quarterly Journal of Economics\/}~{\em 131}, 1593--1636.

\bibitem[\protect\citeauthoryear{Bali, Subrahmanyam, and Wen}{Bali
  et~al.}{2021a}]{bali2021long}
Bali, T.~G., A.~Subrahmanyam, and Q.~Wen (2021a).
\newblock Long-term reversals in the corporate bond market.
\newblock {\em Journal of Financial Economics\/}~{\em 139}, 656--677.

\bibitem[\protect\citeauthoryear{Bali, Subrahmanyam, and Wen}{Bali
  et~al.}{2021b}]{bali2021macroeconomic}
Bali, T.~G., A.~Subrahmanyam, and Q.~Wen (2021b).
\newblock The macroeconomic uncertainty premium in the corporate bond market.
\newblock {\em Journal of Financial and Quantitative Analysis\/}~{\em 56},
  1653--1678.

\bibitem[\protect\citeauthoryear{Bali, Subrahmanyam, and Wen}{Bali
  et~al.}{2023}]{bali2023macroeconomic}
Bali, T.~G., A.~Subrahmanyam, and Q.~Wen (2023).
\newblock The macroeconomic uncertainty premium in the corporate bond
  market---{C}orrigendum.
\newblock {\em Journal of Financial and Quantitative Analysis\/}.

\bibitem[\protect\citeauthoryear{Bao, Pan, and Wang}{Bao
  et~al.}{2011}]{bao2011illiquidity}
Bao, J., J.~Pan, and J.~Wang (2011).
\newblock The illiquidity of corporate bonds.
\newblock {\em Journal of Finance\/}~{\em 66}, 911--946.

\bibitem[\protect\citeauthoryear{Bartram, Grinblatt, and Nozawa}{Bartram
  et~al.}{2025}]{Bartram-Grinblatt-Nozawa-2021}
Bartram, S.~M., M.~Grinblatt, and Y.~Nozawa (2025).
\newblock Book-to-market, mispricing, and the cross-section of corporate bond
  returns.
\newblock {\em Journal of Financial and Quantitative Analysis\/}~{\em 60},
  1185--1233.

\bibitem[\protect\citeauthoryear{Baumann, Kakhbod, Livdan, Nazemi, and
  Sch{\"u}rhoff}{Baumann et~al.}{2025}]{baumann2025life}
Baumann, F., A.~Kakhbod, D.~Livdan, A.~Nazemi, and N.~Sch{\"u}rhoff (2025).
\newblock Life after default: How dealer intermediation improves default
  recovery.
\newblock Working Paper.

\bibitem[\protect\citeauthoryear{Baumann and Nazemi}{Baumann and
  Nazemi}{2025}]{baumann2025defaulted}
Baumann, F. and A.~Nazemi (2025).
\newblock Defaulted bonds: {A} hybrid asset priced by bond and equity markets.
\newblock Working Paper.

\bibitem[\protect\citeauthoryear{Benjamini and Hochberg}{Benjamini and
  Hochberg}{1995}]{benjamini1995controlling}
Benjamini, Y. and Y.~Hochberg (1995).
\newblock Controlling the false discovery rate: {A} practical and powerful
  approach to multiple testing.
\newblock {\em Journal of the Royal Statistical Society: {S}eries B
  (Methodological)\/}~{\em 57}, 289--300.

\bibitem[\protect\citeauthoryear{Bessembinder, Kahle, Maxwell, and
  Xu}{Bessembinder et~al.}{2008}]{bessembinder2008measuring}
Bessembinder, H., K.~M. Kahle, W.~F. Maxwell, and D.~Xu (2008).
\newblock Measuring abnormal bond performance.
\newblock {\em Review of Financial Studies\/}~{\em 22}, 4219--4258.

\bibitem[\protect\citeauthoryear{Blitz, Huij, and Martens}{Blitz
  et~al.}{2011}]{blitz2011residual}
Blitz, D., J.~Huij, and M.~Martens (2011).
\newblock Residual momentum.
\newblock {\em Journal of Empirical Finance\/}~{\em 18}, 506--521.

\bibitem[\protect\citeauthoryear{Blume and Stambaugh}{Blume and
  Stambaugh}{1983}]{blume1983biases}
Blume, M.~E. and R.~F. Stambaugh (1983).
\newblock Biases in computed returns: {A}n application to the size effect.
\newblock {\em Journal of Financial Economics\/}~{\em 12}, 387--404.

\bibitem[\protect\citeauthoryear{Bollerslev, Li, and Zhao}{Bollerslev
  et~al.}{2020}]{bollerslev2020}
Bollerslev, T., S.~Z. Li, and B.~Zhao (2020).
\newblock Realized semicovariances.
\newblock {\em Econometrica\/}~{\em 88}, 1515--1551.

\bibitem[\protect\citeauthoryear{Ceballos}{Ceballos}{2022}]{ceballos2021inflation}
Ceballos, L. (2022).
\newblock Inflation volatility risk and the cross-section of corporate bond
  returns.
\newblock Working Paper.

\bibitem[\protect\citeauthoryear{Chen and Zimmermann}{Chen and
  Zimmermann}{2022}]{chen2021open}
Chen, A.~Y. and T.~Zimmermann (2022).
\newblock Open source cross-sectional asset pricing.
\newblock {\em Critical Finance Review\/}~{\em 27}, 207--264.

\bibitem[\protect\citeauthoryear{Chen and Choi}{Chen and
  Choi}{2024}]{chen2024reaching}
Chen, Q. and J.~Choi (2024).
\newblock Reaching for yield and the cross section of bond returns.
\newblock {\em Management Science\/}~{\em 70}, 5226--5245.

\bibitem[\protect\citeauthoryear{Choi}{Choi}{2013}]{choi2013drives}
Choi, J. (2013).
\newblock What drives the value premium?: The role of asset risk and leverage.
\newblock {\em Review of Financial Studies\/}~{\em 26}, 2845--2875.

\bibitem[\protect\citeauthoryear{Choi, Han, Shin, and Yoon}{Choi
  et~al.}{2026}]{choi2026illiquid}
Choi, J., J.~Han, S.~S. Shin, and J.~H. Yoon (2026).
\newblock The more illiquid, the more expensive: A search-based explanation of
  the illiquidity premium.
\newblock Working paper.

\bibitem[\protect\citeauthoryear{Chordia, Goyal, Nozawa, Subrahmanyam, and
  Tong}{Chordia et~al.}{2017}]{chordia2017capital}
Chordia, T., A.~Goyal, Y.~Nozawa, A.~Subrahmanyam, and Q.~Tong (2017).
\newblock Are capital market anomalies common to equity and corporate bond
  markets? {A}n empirical investigation.
\newblock {\em Journal of Financial and Quantitative Analysis\/}~{\em 52},
  1301--1342.

\bibitem[\protect\citeauthoryear{Chung, Wang, and Wu}{Chung
  et~al.}{2019}]{ChungWangWu_2019}
Chung, K.~H., J.~Wang, and C.~Wu (2019).
\newblock Volatility and the cross-section of corporate bond returns.
\newblock {\em Journal of Financial Economics\/}~{\em 133}, 397--417.

\bibitem[\protect\citeauthoryear{Coase}{Coase}{1982}]{coase1982choose}
Coase, R.~H. (1982).
\newblock {\em How Should Economists Choose?}
\newblock G. Warren Nutter Lectures in Political Economy. Washington, D.C.:
  American Enterprise Institute.

\bibitem[\protect\citeauthoryear{Conrad, Gultekin, and Kaul}{Conrad
  et~al.}{1997}]{conrad1997profitability}
Conrad, J., M.~N. Gultekin, and G.~Kaul (1997).
\newblock Profitability of short-term contrarian strategies: Implications for
  market efficiency.
\newblock {\em Journal of Business \& Economic Statistics\/}~{\em 15},
  379--386.

\bibitem[\protect\citeauthoryear{Corwin and Schultz}{Corwin and
  Schultz}{2012}]{corwin2012simple}
Corwin, S.~A. and P.~Schultz (2012).
\newblock A simple way to estimate bid-ask spreads from daily high and low
  prices.
\newblock {\em Journal of Finance\/}~{\em 67}, 719--760.

\bibitem[\protect\citeauthoryear{Danyliv, Bland, and Nicholass}{Danyliv
  et~al.}{2014}]{danyliv2014convenient}
Danyliv, O., B.~Bland, and D.~Nicholass (2014).
\newblock A convenient liquidity measure.
\newblock {\em Journal of Trading\/}~{\em 9}, 38--49.

\bibitem[\protect\citeauthoryear{Dick-Nielsen}{Dick-Nielsen}{2014}]{DickNielsen2014HowTC}
Dick-Nielsen, J. (2014).
\newblock How to clean {E}nhanced {TRACE} data.
\newblock Working Paper.

\bibitem[\protect\citeauthoryear{Dick-Nielsen, Feldh{\"u}tter, and
  Lando}{Dick-Nielsen et~al.}{2012}]{dick2012corporate}
Dick-Nielsen, J., P.~Feldh{\"u}tter, and D.~Lando (2012).
\newblock Corporate bond liquidity before and after the onset of the subprime
  crisis.
\newblock {\em Journal of Financial Economics\/}~{\em 103}, 471--492.

\bibitem[\protect\citeauthoryear{Dick-Nielsen, Feldh\"{u}tter, Pedersen, and
  Stolborg}{Dick-Nielsen et~al.}{2023}]{feldhutter2023}
Dick-Nielsen, J., P.~Feldh\"{u}tter, L.~H. Pedersen, and C.~Stolborg (2023).
\newblock Corporate bond factors: {R}eplication failures and a new framework.
\newblock Working Paper.

\bibitem[\protect\citeauthoryear{Dickerson, Mueller, and Robotti}{Dickerson
  et~al.}{2023}]{dickerson2023priced}
Dickerson, A., P.~Mueller, and C.~Robotti (2023).
\newblock Priced risk in corporate bonds.
\newblock {\em Journal of Financial Economics\/}~{\em 150, article 103707}.

\bibitem[\protect\citeauthoryear{Dickerson, Robotti, and Nozawa}{Dickerson
  et~al.}{2025}]{dickerson2024factor}
Dickerson, A., C.~Robotti, and Y.~Nozawa (2025).
\newblock Factor investing with delays.
\newblock Working Paper.

\bibitem[\protect\citeauthoryear{Duarte, Jones, Khorram, Mo, and Wang}{Duarte
  et~al.}{2025}]{duarte2023too}
Duarte, J., C.~S. Jones, M.~Khorram, H.~Mo, and J.~L. Wang (2025).
\newblock Too good to be true: Look-ahead bias in empirical options research.
\newblock {\em Review of Financial Studies\/}.
\newblock Forthcoming.

\bibitem[\protect\citeauthoryear{Duarte, Jones, and Wang}{Duarte
  et~al.}{2024}]{duarte2024very}
Duarte, J., C.~S. Jones, and J.~L. Wang (2024).
\newblock Very noisy option prices and inference regarding the volatility risk
  premium.
\newblock {\em Journal of Finance\/}~{\em 79}, 3581--3621.

\bibitem[\protect\citeauthoryear{Elkamhi, Jo, and Nozawa}{Elkamhi
  et~al.}{2024}]{elkamhi2022one}
Elkamhi, R., C.~Jo, and Y.~Nozawa (2024).
\newblock A one-factor model of corporate bond premia.
\newblock {\em Management Science\/}~{\em 70}, 1875--1900.

\bibitem[\protect\citeauthoryear{Fama}{Fama}{1984}]{fama1984information}
Fama, E.~F. (1984).
\newblock The information in the term structure.
\newblock {\em Journal of Financial Economics\/}~{\em 13}, 509--528.

\bibitem[\protect\citeauthoryear{Fama and MacBeth}{Fama and
  MacBeth}{1973}]{FamaMacBeth_1973}
Fama, E.~F. and J.~D. MacBeth (1973).
\newblock Risk, return, and equilibrium: {E}mpirical tests.
\newblock {\em Journal of Political Economy\/}~{\em 81}, 607--636.

\bibitem[\protect\citeauthoryear{Fong, Holden, and Trzcinka}{Fong
  et~al.}{2017}]{fong2017}
Fong, K.~Y., C.~W. Holden, and C.~A. Trzcinka (2017).
\newblock What are the best liquidity proxies for global research?
\newblock {\em Review of Finance\/}~{\em 21}, 1355--1401.

\bibitem[\protect\citeauthoryear{Gebhardt, Hvidkjaer, and Swaminathan}{Gebhardt
  et~al.}{2005}]{gebhardt2005cross}
Gebhardt, W.~R., S.~Hvidkjaer, and B.~Swaminathan (2005).
\newblock The cross-section of expected corporate bond returns: {B}etas or
  characteristics?
\newblock {\em Journal of Financial Economics\/}~{\em 75}, 85--114.

\bibitem[\protect\citeauthoryear{Ghaderi, Plante, Roussanov, and Seo}{Ghaderi
  et~al.}{2024}]{ghaderi2024pricing}
Ghaderi, M., S.~Plante, N.~L. Roussanov, and S.~B. Seo (2024).
\newblock Pricing of corporate bonds: Evidence from a century-long
  cross-section.
\newblock Working Paper.

\bibitem[\protect\citeauthoryear{Harvey, Liu, and Zhu}{Harvey
  et~al.}{2016}]{harvey2016and}
Harvey, C.~R., Y.~Liu, and H.~Zhu (2016).
\newblock … and the cross-section of expected returns.
\newblock {\em Review of Financial Studies\/}~{\em 29}, 5--68.

\bibitem[\protect\citeauthoryear{Harvey and Siddique}{Harvey and
  Siddique}{2000}]{harvey2000conditional}
Harvey, C.~R. and A.~Siddique (2000).
\newblock Conditional skewness in asset pricing tests.
\newblock {\em Journal of Finance\/}~{\em 55}, 1263--1295.

\bibitem[\protect\citeauthoryear{He, Kelly, and Manela}{He
  et~al.}{2017}]{he2017intermediary}
He, Z., B.~Kelly, and A.~Manela (2017).
\newblock Intermediary asset pricing: {N}ew evidence from many asset classes.
\newblock {\em Journal of Financial Economics\/}~{\em 126}, 1--35.

\bibitem[\protect\citeauthoryear{Hong and Warga}{Hong and
  Warga}{2000}]{hong2000empirical}
Hong, G. and A.~Warga (2000).
\newblock An empirical study of bond market transactions.
\newblock {\em Financial Analysts Journal\/}~{\em 56}, 32--46.

\bibitem[\protect\citeauthoryear{Hou, Xue, and Zhang}{Hou
  et~al.}{2020}]{hou2020replicating}
Hou, K., C.~Xue, and L.~Zhang (2020).
\newblock Replicating anomalies.
\newblock {\em Review of Financial Studies\/}~{\em 33}, 2019--2133.

\bibitem[\protect\citeauthoryear{Houweling and Van~Zundert}{Houweling and
  Van~Zundert}{2017}]{houweling2017factor}
Houweling, P. and J.~Van~Zundert (2017).
\newblock Factor investing in the corporate bond market.
\newblock {\em Financial Analysts Journal\/}~{\em 73}, 100--115.

\bibitem[\protect\citeauthoryear{Israel, Palhares, and Richardson}{Israel
  et~al.}{2018}]{IPR2018}
Israel, R., D.~Palhares, and S.~Richardson (2018).
\newblock Common factors in corporate bond returns.
\newblock {\em Journal of Investment Management\/}~{\em 16}, 17--46.

\bibitem[\protect\citeauthoryear{Jegadeesh}{Jegadeesh}{1990}]{jegadeesh1990evidence}
Jegadeesh, N. (1990).
\newblock Evidence of predictable behavior of security returns.
\newblock {\em Journal of Finance\/}~{\em 45}, 881--898.

\bibitem[\protect\citeauthoryear{Jensen, Kelly, and Pedersen}{Jensen
  et~al.}{2023}]{jensen2023there}
Jensen, T.~I., B.~Kelly, and L.~H. Pedersen (2023).
\newblock Is there a replication crisis in finance?
\newblock {\em Journal of Finance\/}~{\em 78}, 2465--2518.

\bibitem[\protect\citeauthoryear{Jostova, Nikolova, Philipov, and
  Stahel}{Jostova et~al.}{2013}]{jostova2013momentum}
Jostova, G., S.~Nikolova, A.~Philipov, and C.~W. Stahel (2013).
\newblock Momentum in corporate bond returns.
\newblock {\em Review of Financial Studies\/}~{\em 26}, 1649--1693.

\bibitem[\protect\citeauthoryear{Kelly, Palhares, and Pruitt}{Kelly
  et~al.}{2023}]{KPP2023}
Kelly, B., D.~Palhares, and S.~Pruitt (2023).
\newblock Modeling corporate bond returns.
\newblock {\em Journal of Finance\/}~{\em 78}, 1967--2008.

\bibitem[\protect\citeauthoryear{Koijen, Lustig, and {Van Nieuwerburgh}}{Koijen
  et~al.}{2017}]{koijen2017}
Koijen, R.~S., H.~Lustig, and S.~{Van Nieuwerburgh} (2017).
\newblock The cross-section of managerial ability, incentives, and risk
  preferences.
\newblock {\em Journal of Monetary Economics\/}~{\em 91}, 1--17.

\bibitem[\protect\citeauthoryear{Lair and Blonk}{Lair and
  Blonk}{2024}]{lair2024valuations}
Lair, T. and J.~Blonk (2024).
\newblock Valuations in the dark: {W}hen independent valuators influence
  corporate bond returns.
\newblock Working Paper.

\bibitem[\protect\citeauthoryear{Leamer}{Leamer}{1983}]{leamer1983lets}
Leamer, E.~E. (1983).
\newblock Let's take the con out of econometrics.
\newblock {\em American Economic Review\/}~{\em 73}, 31--43.

\bibitem[\protect\citeauthoryear{Lin, Wang, and Wu}{Lin
  et~al.}{2011}]{LinWangWu_2011}
Lin, H., J.~Wang, and C.~Wu (2011).
\newblock Liquidity risk and expected corporate bond returns.
\newblock {\em Journal of Financial Economics\/}~{\em 99}, 628--650.

\bibitem[\protect\citeauthoryear{Linnainmaa and Roberts}{Linnainmaa and
  Roberts}{2018}]{linnainmaa2018history}
Linnainmaa, J.~T. and M.~R. Roberts (2018).
\newblock The history of the cross-section of stock returns.
\newblock {\em Review of Financial Studies\/}~{\em 31}, 2606--2649.

\bibitem[\protect\citeauthoryear{Liu and Wu}{Liu and
  Wu}{2021}]{liu2021reconstructing}
Liu, Y. and J.~C. Wu (2021).
\newblock Reconstructing the yield curve.
\newblock {\em Journal of Financial Economics\/}~{\em 142}, 1395--1425.

\bibitem[\protect\citeauthoryear{Menkveld, Dreber, Holzmeister, Huber,
  Johannesson, Kirchler, Neususs, Razen, Weitzel, and et~al.}{Menkveld
  et~al.}{2024}]{Menkveld_etal_2024}
Menkveld, A.~J., A.~Dreber, F.~Holzmeister, J.~Huber, M.~Johannesson,
  M.~Kirchler, S.~Neususs, M.~Razen, U.~Weitzel, and et~al. (2024).
\newblock Non-standard errors.
\newblock {\em Journal of Finance\/}~{\em 79}, 2339--2390.

\bibitem[\protect\citeauthoryear{Novy-Marx}{Novy-Marx}{2012}]{novy2012momentum}
Novy-Marx, R. (2012).
\newblock Is momentum really momentum?
\newblock {\em Journal of Financial Economics\/}~{\em 103}, 429--453.

\bibitem[\protect\citeauthoryear{P\'{a}stor and Stambaugh}{P\'{a}stor and
  Stambaugh}{2003}]{PS2003}
P\'{a}stor, L. and R.~F. Stambaugh (2003).
\newblock Liquidity risk and expected stock returns.
\newblock {\em Journal of Political Economy\/}~{\em 111}, 642--685.

\bibitem[\protect\citeauthoryear{Richardson and Palhares}{Richardson and
  Palhares}{2018}]{richardson2018illiquidity}
Richardson, S. and D.~Palhares (2018).
\newblock ({I}l)liquidity premium in credit markets: A myth?
\newblock {\em Journal of Fixed Income\/}~{\em 28}, 3--31.

\bibitem[\protect\citeauthoryear{Roll}{Roll}{1984}]{roll1984simple}
Roll, R. (1984).
\newblock A simple implicit measure of the effective bid-ask spread in an
  efficient market.
\newblock {\em Journal of Finance\/}~{\em 39}, 1127--1139.

\bibitem[\protect\citeauthoryear{Soebhag, Van~Vliet, and Verwijmeren}{Soebhag
  et~al.}{2024}]{soebhag2024non}
Soebhag, A., B.~Van~Vliet, and P.~Verwijmeren (2024).
\newblock Non-standard errors in asset pricing: {M}ind your sorts.
\newblock {\em Journal of Empirical Finance\/}~{\em 78}, 101517.

\bibitem[\protect\citeauthoryear{Stambaugh}{Stambaugh}{1988}]{stambaugh1988information}
Stambaugh, R.~F. (1988).
\newblock The information in forward rates: {I}mplications for models of the
  term structure.
\newblock {\em Journal of Financial Economics\/}~{\em 21}, 41--70.

\bibitem[\protect\citeauthoryear{Subrahmanyam}{Subrahmanyam}{2023}]{subrahmanyam2023corporatebonddata}
Subrahmanyam, A. (2023).
\newblock Corporate bond data projects: Some clarifications.
\newblock Working Paper.

\bibitem[\protect\citeauthoryear{Tobek}{Tobek}{2016}]{tobek2016}
Tobek, O. (2016).
\newblock Liquidity proxies using daily trading volume.
\newblock Working Paper.

\bibitem[\protect\citeauthoryear{van Binsbergen, Nozawa, and Schwert}{van
  Binsbergen et~al.}{2025}]{Binsbergen-Schwert-Nozawa-2023}
van Binsbergen, J.~H., Y.~Nozawa, and M.~Schwert (2025).
\newblock Duration-based valuation of corporate bonds.
\newblock {\em Review of Financial Studies\/}~{\em 38}, 158--191.

\bibitem[\protect\citeauthoryear{Walter, Weber, and Weiss}{Walter
  et~al.}{2024}]{Walter_etal_2024}
Walter, D., R.~Weber, and P.~Weiss (2024).
\newblock Methodological uncertainty in portfolio sorts.
\newblock Working Paper.

\bibitem[\protect\citeauthoryear{Wang, Wu, and Yang}{Wang
  et~al.}{2024}]{wang2024industry}
Wang, J., D.~Wu, and L.~Yang (2024).
\newblock Cross-bond momentum spillovers.
\newblock Working Paper.

\end{thebibliography}
\end{document}